\documentclass[12pt]{article}
\usepackage[font=small,labelfont=bf]{caption}
\usepackage{subcaption}
\usepackage{amsmath,amscd,amssymb,mathrsfs,bm,amsthm}
\DeclareSymbolFontAlphabet{\mathrsfs}{rsfs}
\usepackage[mathscr]{eucal}
\usepackage{mathabx}
\usepackage{changepage}
\usepackage[usenames, dvipsnames]{xcolor}
\usepackage{tikz}
\usepackage{tikz-cd}
\usetikzlibrary{quotes}
\usepackage{savesym}
\usepackage[nosort]{cite}
\usepackage{wasysym}
\usepackage{graphicx}
\graphicspath{{Figs/}}
\usepackage{pifont}
\usepackage{float}
\usepackage{multicol}
\usepackage{amsfonts}
\usepackage{picture}
\usepackage{mathtools}
\savesymbol{iint}
\savesymbol{iiint}
\restoresymbol{TXF}{iint}
\restoresymbol{TXF}{iiint}
\usepackage{setspace}
\usepackage{jheppub}
\usepackage{slashed}
\hypersetup{colorlinks=true}
\hypersetup{linkcolor=blue}
\hypersetup{citecolor=magenta}
\hypersetup{urlcolor=blue}
\usepackage{multirow}
\usepackage{verbatim}
\usepackage{accents}
\usepackage{placeins}
\usepackage{enumitem}
\usepackage[all]{xy}
\definecolor{refkey}{rgb}{0,.8,.2}\definecolor{labelkey}{rgb}{1,0,0}
\numberwithin{equation}{section}
\let\O\relax\DeclareMathOperator{\O}{O}

\newcommand{\beq}{\begin{equation}}
\newcommand{\eeq}{\end{equation}}

\newcommand{\nn}{\nonumber}

\newlength{\offsetpage}
\setlength{\offsetpage}{1.0cm}
\newenvironment{widepage}{\begin{adjustwidth}{-\offsetpage}{-\offsetpage}%
    \addtolength{\textwidth}{2\offsetpage}}%
{\end{adjustwidth}}

\newtheorem{theorem}{Theorem}[section]

\newtheorem{lemma}[theorem]{Lemma}

\DeclareMathOperator*{\Ext}{Ext}
\DeclareMathOperator*{\Tor}{Tor}
\DeclareMathOperator*{\Hom}{Hom}
\DeclareMathOperator*{\Obj}{Obj}
\DeclareMathOperator*{\Mor}{Mor}
\newcommand{\Z}{\mathbb{Z}}
\newcommand{\R}{\mathbb{R}}

\newcommand{\C}{\mathbb{C}}

\def\SO{\text{SO}}
\def\O{\text{O}}

\def\SU{\text{SU}}

\def\U{\text{U}(1)}

\def\Spin{\text{Spin}}

\def\pt{\text{pt}}

\def\UU{\text{U}}

\def\Tr{\text{Tr}\,}

\def\ii{\text{i}}

\def\dd{\text{d}}

\def\Z{\mathbb{Z}}
\def\R{\mathbb{R}}

\def\C{\mathbb{C}}

\def\G{\mathbb{G}}

\def\sq{\text{~Sq}}
\def\CP{\mathbb{C}P^1}
\def\CPP{\mathbb{C}P^2}
\def\SS{\mathcal{S}}


\def\mod{\text{~mod~}}

\definecolor{bittersweet}{rgb}{1.0, 0.2, 0.6}

\usepackage{array}
\newcolumntype{C}{>{$}c<{$}}

\def\RPn (#1,#2){
  \fill (#1, #2) circle (3pt);
  \fill (#1, #2+1) circle (3pt);
}

\def\sqtwoL (#1,#2,#3){
  \draw[#3] (#1,#2) .. controls (#1-1,#2+1) .. (#1,#2+2);
}

\def\sqtwoR (#1,#2,#3){
  \draw[#3] (#1,#2) .. controls (#1+1,#2+1) .. (#1,#2+2);
}

\def \sqtwoCR (#1,#2,#3){
   \draw[#3] (#1,#2) .. controls (#1+1,#2+.5) and (#1+1.5,#2+2) .. (#1+2,#2+2);
}

\def \sqtwoCL (#1,#2,#3){
   \draw[#3] (#1,#2) .. controls (#1-1,#2+.5) and (#1-1.5,#2+2)  .. (#1-2,#2+2);
}

\def \sqone (#1,#2,#3){
  \draw[#3] (#1,#2) -- (#1,#2+1);
}

\def\Aone (#1,#2){
\fill (#1, #2) circle (3pt);
\fill (#1, #2+1) circle (3pt);
\fill (#1, #2+2) circle (3pt);
\fill (#1, #2+3) circle (3pt);
\fill (#1+2, #2+3) circle (3pt);
\fill (#1+2, #2+4) circle (3pt);
\fill (#1+2, #2+5) circle (3pt);
\fill (#1+2, #2+6) circle (3pt);
\draw (#1, #2) -- (#1, #2+1);
\draw (#1, #2+2) -- (#1, #2+3);
\draw (#1+2, #2+3) -- (#1 + 2, #2+4);
\draw (#1+2, #2+5) -- (#1+2, #2+6);
\draw (#1, #2) .. controls (#1-1, #2+1) .. (#1, #2+2);
\draw (#1+2, #2+4) .. controls (#1+3, #2+5) .. (#1+2, #2+6);
\draw (#1, #2+1) .. controls (#1+1, #2+1.5) and  (#1+1.5 ,#2+3) .. (#1+2,#2+3);
\draw (#1, #2+2) .. controls (#1+1, #2+2.5) and (#1+1.5, #2+4) .. (#1+2, #2+4);
\draw (#1, #2+3) .. controls (#1+1, #2+3.5) and (#1+1.5, #2+5) .. (#1+2, #2+5);
}

\def\rectangle (#1,#2,#3){   \draw[#3] (#1-0.15,#2-0.15) rectangle (#1+0.15,#2+0.15)}

\def\Eone (#1,#2){
\fill (#1, #2) circle (3pt);
\fill (#1, #2+1) circle (3pt);
\fill (#1, #2+2) circle (3pt);
\fill (#1, #2+3) circle (3pt);
\draw (#1, #2) -- (#1, #2+1);
\draw (#1, #2+2) -- (#1, #2+3);
\draw (#1, #2) .. controls (#1-1, #2+1) .. (#1, #2+2);
\draw (#1, #2+1) .. controls (#1+1, #2+2) .. (#1, #2+3);
}


\def\joker (#1,#2){
  \foreach \y in {#2, #2+1, #2+2, #2+3, #2+4}
           {\fill (#1,\y) circle (3pt);}
           \draw (#1,#2) -- (#1, #2+1);
           \draw (#1,#2+3) -- (#1, #2+4);
           \draw (#1,#2+0) .. controls (#1-1,#2+1) .. (#1, #2+2);
           \draw (#1,#2+2) .. controls (#1-1,#2+3) .. (#1, #2+4);
           \draw (#1,#2+1) .. controls (#1+1,#2+2) .. (#1, #2+3);
}

\def\jokercolor (#1,#2, #3){
  \foreach \y in {#2, #2+1, #2+2, #2+3, #2+4}
           {\fill[#3] (#1,\y) circle (3pt);}
           \draw[#3] (#1,#2) -- (#1, #2+1);
           \draw[#3] (#1,#2+3) -- (#1, #2+4);
           \draw[#3] (#1,#2+0) .. controls (#1-1,#2+1) .. (#1, #2+2);
           \draw[#3] (#1,#2+2) .. controls (#1-1,#2+3) .. (#1, #2+4);
           \draw[#3] (#1,#2+1) .. controls (#1+1,#2+2) .. (#1, #2+3);
}

\def\msopart (#1,#2,#3){
    \fill[#3] (#1,#2) circle (3pt); 
      \fill[#3] (#1, #2+1) circle (3pt);
      \fill[#3] (#1, #2+2) circle (3pt);
      \fill[#3] (#1+2, #2+2) circle (3pt);
      \fill[#3] (#1+2, #2+3) circle (3pt);
      \fill[#3] (#1+2, #2+4) circle (3pt);
      \fill[#3] (#1+2, #2+5) circle (3pt);
    \sqtwoCR(#1,#2, #3);
    \sqtwoCR (#1, #2+1, #3);
    \sqtwoCR (#1, #2+2, #3);
    \sqone (#1+2, #2+2, #3);
    \sqone (#1+2, #2+4, #3);
    \sqtwoR(#1+2, #2+3, #3);
    \sqone (#1, #2+1, #3); }

    \def\amme (#1,#2,#3){
       \fill[#3] (#1,#2) circle (3pt) ;
   \sqtwoR(#1,#2,#3);
      \fill[#3] (#1,#2+2) circle (3pt) ;
         \sqone(#1,#2+2,#3);
      \fill[#3] (#1,#2+3) circle (3pt) ;
   \sqtwoR(#1,#2+3,#3);
   \fill[#3] (#1,#2+5) circle (3pt) ;}
   
    \def\questionupsidedon (#1,#2,#3){
       \fill[#3] (#1,#2) circle (3pt) ;
          \sqtwoR(#1,#2,#3);
                 \fill[#3] (#1,#2+2) circle (3pt) ;
             \sqone(#1,#2+2,#3);
                \fill[#3] (#1,#2+3) circle (3pt) ;}

\DeclareRobustCommand*{\RaiseBoxByDepth}{%
    \raisebox{\depth}%
}
\newcommand{\uQ}{\RaiseBoxByDepth{\protect\rotatebox{180}{$Q$}}}

\newcommand{\cA}{\mathcal A}
\newcommand{\Sq}{\mathrm{Sq}}

\bibliographystyle{JHEP}
\usepackage{spectralsequences}
\newcommand{\AdamsTower}[1]{\DoUntilOutOfBounds{
        \class[#1](\lastx, \lasty+1)
        \structline[#1]
}}

\title{Toric 2-group anomalies via cobordism}

\author[a]{Joe Davighi}
\author[b]{and Nakarin Lohitsiri,}
\author[c]{\\ with a mathematical appendix by Arun Debray}

\affiliation[a]{Physics Institute, University of Zurich, Switzerland}
\emailAdd{joe.davighi@physik.uzh.ch}
\affiliation[b]{Department of Mathematical Sciences, Durham University, United Kingdom}
\emailAdd{nakarin.lohitsiri@durham.ac.uk}
\affiliation[c]{Department of Mathematics, Purdue University, USA}
\emailAdd{adebray@purdue.edu}

\abstract{ 2-group symmetries arise in physics when a 0-form symmetry
  $G^{[0]}$ and a 1-form symmetry $H^{[1]}$ intertwine, forming a
  generalised group-like structure.  Specialising to the case where
  both $G^{[0]}$ and $H^{[1]}$ are 
	compact, connected, abelian groups ({\em i.e.} tori), 
	we analyse anomalies in such `toric 2-group symmetries' 
	using the cobordism classification. As a warm up example, we use cobordism to study various 't Hooft anomalies (and the phases to which they are dual) in Maxwell theory defined on non-spin manifolds.
For our main example, we compute the 5th spin bordism group of
  $B|\mathbb{G}|$ where $\mathbb{G}$ is any 2-group whose 0-form and
  1-form symmetry parts are both $\mathrm{U}(1)$, and $|\mathbb{G}|$ is the
  geometric realisation of the nerve of the 2-group $\mathbb{G}$. 
By leveraging a variety of algebraic methods, we
  show that $\Omega^{\mathrm{Spin}}_5(B|\mathbb{G}|) \cong \mathbb{Z}/m$ where $m$ is the
  modulus of the Postnikov class for $\mathbb{G}$, and we reproduce
  the expected physics result for anomalies in 2-group symmetries that
  appear in 4d QED.  Moving down two dimensions, we recap that any
  (anomalous) $\mathrm{U}(1)$ global symmetry in 2d can be enhanced to a toric
  2-group symmetry, before showing that its associated local anomaly
  reduces to at most an order 2 anomaly, when the theory is defined
  with a spin structure. }

\begin{document}
\maketitle

\setcounter{tocdepth}{3}

\section{Introduction}

In quantum field theory, anomalies are loosely defined to be quantum obstructions to symmetries. More precisely, anomalies can themselves be identified with (special classes of) quantum field theories in one dimension higher than the original theory, via the idea of `anomaly inflow'.\footnote{Mathematically, any anomalous theory in $d$ dimensions can be described by a relative quantum field theory~\cite{Freed:2012bs} between an extended field theory in one dimension higher (the `anomaly theory'), and the trivial extended field theory (see {\em e.g.}~\cite{Freed:2014iua}).}
This modern viewpoint led to an algebraic classification of anomalies via cobordism, which was made rigorous by Freed and Hopkins~\cite{Freed:2016rqq} following many important works (including~\cite{Witten:1985xe,Dai:1994kq,Kapustin:2014dxa,Witten:2015aba,Witten:2019bou}). The cobordism classification includes all known anomalies afflicting chiral symmetries of massless fermions in any number of dimensions, as well as other more subtle anomalies that involve discrete spacetime symmetries (see {\em e.g.}~\cite{Tachikawa:2018njr,Witten:2016cio,Wang:2018qoy}). 

The cobordism group that classifies anomalies, for $d$ spacetime dimensions and symmetry type $\SS$,\footnote{A given symmetry {\em type} $\SS$, as defined precisely in~\cite{Freed:2016rqq}, includes both the spacetime symmetry and an internal symmetry group, as we as maps between them, extended to all dimensions. } is the Anderson dual of bordism~\cite{Freed:2016rqq},
\beq \label{eq:intro_cobord}
H_{I\Z}^{d+2}\left(MT\SS\right) := [MT\SS, \Sigma^{d+2} I \Z],
\eeq
where $MT\SS$ denotes the Madsen--Tillman spectrum associated to symmetry type $\SS$, 
$I\Z$ denotes the Anderson dual of the sphere spectrum (with $\Sigma^{d+2}$ denoting the $d+2$-fold suspension),
and $[X, Y]$ denotes the homotopy classes of maps between a pair of spectra $X$ and $Y$. 
In fact, this cobordism classification goes beyond what we would normally think of as `anomalies' to cover all reflection positive invertible field theories (with or without fermions) with symmetry $\SS$ in $d+1$-dimensions. 

To unpack the meaning and significance of the cobordism group $H_{I\Z}^{d+2}\left(MT\SS \right)$, it is helpful to recall that it fits inside a defining short exact sequence~\cite{Freed:2016rqq} 
\beq \label{eq:SES_general}
\text{Ext}^1_\Z \left(\pi_{d+1}(MT\SS), \Z \right) \hookrightarrow
H_{I\Z}^{d+2}\left(MT\SS \right)
\twoheadrightarrow
\Hom \left( \pi_{d+2}(MT\SS), \Z \right),
\eeq
where $\pi_k(X)$ is the $k^\text{th}$ stable homotopy group of a spectrum $X$.
For the special case of chiral fermion anomalies in $d$ dimensions, the invertible field theory in question is precisely the exponentiated $\eta$-invariant~\cite{Witten:2019bou} of Atiyah, Patodi and Singer~\cite{Atiyah:1975jf,Atiyah:1976jg,Atiyah:1976qjr} in dimension $d+1$. In that case, integrating the gauge-invariant {\em anomaly polynomial} $\Phi_{d+2}$ provides an element of the right factor $\Hom \left( \pi_{d+2}(MT\SS), \Z \right)$. Theories with $\Phi_{d+2}=0$, which are in the kernel of the right map and thus in the image of the left map, correspond to residual `global anomalies' which are thus captured by the left-factor. For theories with symmetry type $\SS=\Spin\times G$, where $G$ is the internal symmetry group, we have that $\text{Ext}^1_\Z \left(\pi_{d+1}(MT\SS), \Z \right) \cong \Hom \left( \Tor \Omega_{d+1}^\Spin(BG), \R/\Z \right)$ classifies global anomalies, where $\Omega_{\bullet}^\Spin$ denotes spin bordism. A straightforward corollary is that if the group $\Tor \Omega_{d+1}^\Spin(BG)$ vanishes there can be no global anomalies, which has been applied to various particle physics applications in recent years~\cite{Freed:2006mx,Garcia-Etxebarria:2017crf,Garcia-Etxebarria:2018ajm,Wan:2018bns,Seiberg:2018ntt,Hsieh:2019iba,Davighi:2019rcd,Wan:2019gqr,Davighi:2020uab,Lee:2020ewl,Davighi:2020kok,Debray:2021vob,Davighi:2022fer,Davighi:2022bqf,Wang:2022eag,Lee:2022spd,Davighi:2022icj,Debray:2023yrs}.

Concurrent with this development of the cobordism classification of anomalies, in the past decade the notion of `symmetry' has been generalised beyond the action of groups on local operators, in various exciting directions (for recent reviews, see Ref.~\cite{Cordova:2022ruw,Bah:2022wot}). This includes the notion of $p${\em-form} symmetries~\cite{Gaiotto:2014kfa}, which act not on local operators (the case $p=0$) but more generally on extended operators of dimension $p$. These symmetries couple to background fields that are $(p+1)$-form gauge fields. Such higher-form symmetries are found in a plethora of quantum field theories. The most well-known examples include a pair of $\U$ 1-form symmetries in 4d Maxwell theory, and a $\Z/N$ 1-form symmetry in 4d $\SU(N)$ Yang--Mills theory; the latter exhibits a mixed anomaly with charge-parity symmetry when the $\SU(N)$ $\theta$-angle equals $\pi$, which led to new insights regarding the vacua of QCD~\cite{Gaiotto:2017yup}.  Other important classes of generalised symmetry include {\em non-invertible} and {\em categorical} symmetries, as well as {\em subsystem} symmetries. These shall play no role in the present paper.

Soon after the notion of $p$-form symmetries was formalised in~\cite{Gaiotto:2014kfa}, it was realised that higher-form symmetries of different degrees can `mix', leading to the notion of $p${\em-group} symmetry. The most down-to-earth example of this symmetry structure, which in general is described using higher category theory (see {\em e.g.}~\cite{Gripaios:2022yjy}), is the notion of a {\em 2-group} symmetry whereby a 1-form symmetry $H^{[1]}$ and a 0-form symmetry $G^{[0]}$ combine non-trivially. This occurs, for example~\cite{Hsin:2020nts,Lee:2021crt}, when a bunch of Wilson lines (which are charged under a discrete 1-form centre symmetry) can be screened by a dynamical fermion (which is charged under a 0-form flavour symmetry).
2-group symmetries were first observed, in their gauged form, in string theory in the context of the Green--Schwarz mechanism, in which a $\U$ 1-form symmetry combines non-trivially with a diffeomorphism~\cite{Baez:2005sn,Sati:2009ic,Fiorenza:2010mh,Fiorenza:2012tb}. Their appearance as global symmetries in field theory was appreciated first by Sharpe~\cite{Sharpe:2015mja}, before many new examples were identified in Refs.~\cite{Cordova:2018cvg,Benini:2018reh}. Further instances of 2-group symmetry have since been discovered and analysed in 4d~\cite{Hsin:2020nts,Lee:2021crt,Bhardwaj:2021wif,Cvetic:2022imb,Carta:2022fxc}, 5d~\cite{Cvetic:2022imb,Benini:2018reh,Apruzzi:2021vcu,DelZotto:2022fnw,DelZotto:2022joo}, and 6d~\cite{Cordova:2020tij,DelZotto:2020sop,Apruzzi:2021mlh} quantum field theories. A 2-group symmetry structure has recently been studied in the Standard Model~\cite{Cordova:2022qtz}, in the limit of zero Yukawa couplings.

In this paper we study how the cobordism classification of anomalies can be applied to 2-group symmetries. As a first step, we here specialise to 2-groups for which both $G^{[0]}$ and $H^{[1]}$ are compact, connected, abelian groups {\em ergo} tori $\cong \U^n$; we refer to such a structure, at times, as a `toric 2-group'. In this case, there are  1-form and 2-form Noether currents $j^{(1)}_G$ and $j^{(2)}_H$ associated with the 0-form and 1-form symmetries respectively, and the non-trivial 2-group `mixing' between these symmetries is reflected in a fusion rule for these currents, schematically
\beq \label{eq:current-fusion}
j^{(1)}_G(x) j^{(1)}_G(0) \sim K(x) j^{(2)}_H(0) + \dots,
\eeq
for a known (singular) function $K(x)$.
Toric 2-group symmetries appear, for example, in chiral $\U$ gauge theories in 4d when there are mixed `operator-valued' anomalies between global chiral symmetry currents and the gauge current~\cite{Cordova:2018cvg}, as well as in 3d~\cite{Damia:2022rxw} and in hydrodynamics~\cite{Iqbal:2020lrt}. 
Toric 2-group symmetries thus naturally appear in fermionic systems, for which anomalies can be more fruitfully analysed using cobordism (compared to purely bosonic anomalies).

For such a toric 2-group $\G$, we study the bordism theory of manifolds equipped with `tangential $\G$-structure'. This can be defined analogously to a tangential $\SS$-structure for ordinary symmetry type $\SS$, given now the particular 2-group $\G$ plus an appropriate form of spin structure. When the 2-group symmetry does not further mix with the spacetime symmetry, this reduces to studying the spin bordism groups of the topological space that classifies $\G$-2-bundles.

Thankfully much is known about the classifying space of $\G$-2-bundles that will be of use to us, given the physics applications we have in mind. Quite generally, the classifying space of a 2-group $\G$ can be computed as $B|\G|$, where $|\G|$ is the geometric realization of the nerve of the 2-group $\G$ viewed as a category~\cite{bartels2006higher,Baez2008}, which is itself a topological {\em 1}-group --- albeit often an infinite-dimensional one. This means that Freed and Hopkins' classification of invertible field theories with fixed symmetry type can be applied directly, taking the tangential structure to be $\Spin \times |\G|$.

In the special case of a pure 1-form symmetry valued in an abelian group $H^{[1]}=A$, this formula for the classifying space of the corresponding 2-group reduces to $B(BA)$, recovering the well-known result for classifying abelian gerbes~\cite{brylinski2007loop}. This special case, which is of significant physical interest, has already been used in the physics literature to study 1-form anomalies, for many examples with variously $\Spin$, $\SO$, and $\O$ structures in Ref.~\cite{Wan:2018bns}. An extension of this to a pure 2-form symmetry and its anomalies in 6d gauge theories was studied via cobordism in Ref.~\cite{Lee:2020ewl}.

More generally for any 2-group $\G$, the classifying space $B|\G|$ always sits in a `defining fibration' 
\beq \label{eq:intro_fib}
B\left(BH^{[1]}\right) \to B|\G| \to BG^{[0]}\,,
\eeq 
which allows one to leverage the usual spectral sequence methods to compute {\em e.g.} the cohomology and/or the (co)bordism groups of $B|\G|$. For `non-trivial' 2-groups, {\em i.e.} those for which the fibration (\ref{eq:intro_fib}) is non-trivial, there is to our knowledge only a smattering of calculations in the literature; in particular, the case $B(B\Z/2^{[1]}) \to B|\G| \to B\O^{[0]}$ was studied by Wan and Wang~\cite{Wan:2018bns} (Section 4), and $B(B\U^{[1]}) \to B|\G| \to B\SU(2)^{[0]}$ was studied by Lee and Tachikawa~\cite{Lee:2020ewl} (Appendix B.6). In this paper we extend these works with a dedicated cobordism analysis of theories with 2-group symmetry and their anomalies. 

The content of this paper is as follows. After reviewing 2-groups and their appearances in quantum field theory in \S \ref{sec:2-groups_review}, we then describe the topological spaces $B|\G|$ that classify 2-group background fields, and show how these spaces can be computed, in \S \ref{sec:cobordism-with-2-group}. By considering appropriate tangential structures for 2-group symmetries, this then leads to a definition of the cobordism groups relevant for describing 2-group symmetries and their anomalies (\S \ref{sec:cobord_2grp}). We then apply this cobordism theory to study a small selection of examples in detail, focussing on toric 2-group symmetries as an especially tractable case.
The main examples that we study in this paper are as follows. In \S \ref{sec:Maxwell-revisited}, which is a warm up example, we study anomalies involving the pair of 1-form symmetries in 4d Maxwell theory, defined on non-spin manifolds. As well as the $\Z$-valued `local' mixed anomaly between the two 1-form symmetries~\cite{Gaiotto:2014kfa,Wang-Senthil:2016a,Hsin:2019fhf,Brennan:2022tyl}, there is a trio of $\Z/2$-valued global anomalies (involving gravity) that only appear on non-spin manifolds; the associated phases have been studied in~\cite{Jian:2020qab}. In \S \ref{sec:QED} we turn to our main example of toric 2-group symmetries, with non-trivial Postnikov class, occurring in 4d abelian chiral theories. We see, using cobordism, how turning on the Postnikov class transmutes a $\Z$-valued local anomaly into a torsion-valued global anomaly \`a la the Green--Schwarz mechanism -- we precisely recover the physics results derived by C\'ordova, Dumitrescu and Intriligator in Ref~\cite{Cordova:2018cvg}.  Finally, in \S \ref{sec:2d} we drop down to two dimensions, where anomalous 0-form symmetries can be recast as 2-group symmetries by using the trivially conserved 1-form symmetry whose 2-form current is simply the volume form~\cite{Sharpe:2015mja}. Our cobordism analysis highlights the prominent role played by spin structures; when the $\U^{[0]}$ anomaly coefficient is odd, the Green--Schwarz mechanism cannot in fact absorb the anomaly completely but leaves a $\Z/2$-valued anomaly, which we identify as an anomaly in the spin structure; we also show how this $\Z/2$-anomaly is trivialised upon passing to $\Spin_c$ bordism.\footnote{These examples emphasize that bordism computations are useful tools for keeping track of such subtle mod 2 effects, by automatically encoding various characteristic classes' normalisation that
can differ in spin {\em vs.} non-spin theories -- for example, the class $c_1^2$ is even on a spin
manifold but needn't be on a non-spin one.
}

\section{Review: 2-groups and their classifying spaces}
\label{sec:2-groups_review}

In this Section we recall the notion of a 2-group $\G$ (\S \ref{sec:2grp-Defn}), and briefly reprise their appearance as generalised symmetry structures in quantum field theory (\S \ref{sec:2grp-in-QFT}).

\subsection{What is a 2-group?} \label{sec:2grp-Defn}

We begin by sketching some equivalent definitions of a 2-group, for which our main reference is Ref.~\cite{Baez2008} by Baez and Stevenson. First and most concise is probably the definition of a 2-group $\G$ as a higher category. Loosely, a 2-group $\G$ is in this language a 2-category that contains a single object, with all 1-morphisms and 2-morphisms being invertible. 

(Slightly) less abstractly, a 2-group can also be described using `ordinary' category theory, without recourse to higher categories, and this can be done in two ways; either as a groupoid in the category of groups, or as a group in  the category of groupoids.\footnote{It will be implicitly assumed that all groups and groupoids discussed carry a topology.} Expanding on the former viewpoint, a 2-group is itself a category whose objects form a group $\tilde{G}^{[0]}$, and whose morphisms also form a group, where the latter can be written as a semi-direct product between $\tilde{G}^{[0]}$ and another group $\tilde{H}^{[1]}$, {\em viz.}
\begin{equation} \label{eq:groupoid}
\Obj(\G)=\tilde{G}^{[0]}, \qquad \Mor(\G)= \tilde{H}^{[1]} \rtimes \tilde{G}^{[0]},
\end{equation}
such that all the various maps involved in the definition of a groupoid are continuous group homomorphisms (see \S 3 of~\cite{Baez2008}).
We will usually be interested in situations where both $\tilde{G}^{[0]}$ and $\tilde{H}^{[1]}$ are Lie groups.

This definition involving a pair of groups $(\tilde{G}^{[0]},\tilde{H}^{[1]})$ is closely related to yet another, somewhat more practical, definition of a 2-group as a topological crossed module. The data specifying a topological crossed module consists of a quadruplet 
\begin{equation} \label{eq:xmod_tilde}
\G=(\tilde{G}^{[0]},\tilde{H}^{[1]},t,\alpha),
\end{equation} 
where $\tilde{G}^{[0]}$ and $\tilde{H}^{[1]}$ can be identified with the groups that appear in (\ref{eq:groupoid}), and $t:\tilde{H}^{[1]}\rightarrow \tilde{G}^{[0]}$, $\alpha:\tilde{G}^{[0]}\to \text{Aut}(\tilde{H}^{[1]})$ are continuous homomorphisms such that the pair of conditions
\begin{equation}
t\left( \alpha(g)(h)\right)=gt(h)g^{-1}, \qquad \alpha(t(h))(h^\prime)=hh^\prime h^{-1}
\end{equation}
hold $\forall g\in \tilde{G}^{[0]}, \; h,\, h^\prime\in \tilde{H}^{[1]}$, which can be thought of as equivariance conditions on the maps. To define a 2-group from this data, one takes the group of objects $\Obj(\G)=\tilde{G}^{[0]}$ and the group of morphisms $\Mor(\G)=\tilde{H}^{[1]} \rtimes \tilde{G}^{[0]}$, where the semi-direct product is defined using the map $\alpha$ as $(h,g) \cdot (h^\prime, g^\prime)=(h\alpha(g)(h^\prime), gg^\prime)$, and the various source, target, and composition maps from $\Mor(\G)$, as well as identity maps from $\Obj(\G)$, are defined using simple formulae (for which  we refer the interested reader to \S 3 of~\cite{Baez2008}.)

We can tie together this circle of definitions by linking this final crossed module definition of $\G$ to the first definition of $\G$ as a 2-category; in that picture, $\tilde{G}^{[0]}$ is the group of 1-morphisms, and $\tilde{H}^{[1]}$ is the group of 2-morphisms from the identity 1-morphism to all other 1-morphisms~\cite{Kapustin:2013uxa}.
When $\tilde{G}^{[0]}$ and $\tilde{H}^{[1]}$ are both Lie, and when the maps $t$ and $\alpha$ are both smooth, $\G$ is a Lie 2-group.

The final description of a 2-group $\G$ that we just gave, as a topological crossed module, is still of somewhat limited use in physics because the groups $\tilde{G}^{[0]}$ and $\tilde{H}^{[1]}$ are not preserved under equivalence of 2-groups as 2-categories. Moreover, a physical system with a 2-group symmetry (possibly with associated background fields) is only sensitive to the equivalence class of the 2-group at long distance, not the 2-group itself, as described by Kapustin and Thorngren~\cite{Kapustin:2013uxa}. It is therefore more convenient to work with equivalence classes of 2-groups from the beginning, which we describe next.

An equivalence class of 2-groups, with a particular 2-group $\G=(\tilde{G}^{[0]},\tilde{H}^{[1]},t,\alpha)$ above as a representative, can also be described by a quadruplet $(G^{[0]},H^{[1]}, \alpha,\beta)$ where~\cite{Kapustin:2013uxa} 
\begin{equation}
G^{[0]} := \text{coker}\,t, \qquad H^{[1]} := \ker t.
\end{equation}
The homomorphism $\alpha:G^{[0]} \to \text{Aut}(H^{[1]})$ is so named because it descends from the homomorphism $\alpha:G^{[0]}\to\text{Aut}(H^{[1]})$. The new component is the so-called \emph{Postnikov class} 
\beq
\beta\in H^3(BG^{[0]}; H^{[1]}).
\eeq
More precisely, two crossed modules $\G = (\tilde{G}^{[0]},\tilde{H}^{[1]},t,\alpha)$ and $\G^{\prime} = (\tilde{G}^{\prime[0]}, \tilde{H}^{\prime[1]}, t^{\prime}, \alpha^{\prime})$ are said to be equivalent when $\tilde{G}^{\prime[0]} \cong \tilde{G}^{[0]}$, $\tilde{H}^{\prime[1]} \cong \tilde{H}^{[1]}$ and there are homomorphisms (not necessarily isomorphisms) $h: \tilde{H}^{[1]} \to  \tilde{H}^{\prime[1]}$ and $g: \tilde{G}^{[0]}\to  \tilde{G}^{\prime[0]}$ that makes the diagram
\begin{equation} \label{eq:2step-ext}
  \begin{tikzcd}
   1\arrow[r] & H^{[1]}\arrow[d,equal] \arrow{r} &  \tilde{H}^{[1]}\arrow[d,"h"] \arrow[r,"t"] &\tilde{G}^{[0]}\arrow[d,"g"]\arrow[r] & G^{[0]}\arrow[d,equal]\arrow[r] & 1\\
   1\arrow[r] & H^{\prime[1]}\arrow[r] &  \tilde{H}^{\prime[1]}\arrow[r,"t^{\prime}"] & \tilde{G}^{\prime[0]}\arrow[r] & G^{\prime[0]}\arrow[r] &1
  \end{tikzcd}
\end{equation}
of two exact sequences commutative and compatible with the actions of $\tilde{G}^{[0]}$ on $\tilde{H}^{[1]}$ and $\tilde{G}^{\prime[0]}$ on $\tilde{H}^{\prime[1]}$. These equivalence classes are classified by the group cohomology $\mathcal{H}^3(G^{[0]},H^{[1]})$~\cite{Brown-K-S:1982}, which is isomorphic to the ordinary cohomology $H^3(BG^{[0]};H^{[1]})$.

For completeness, we also define the notion of \emph{2-group homomorphisms}, in terms of ordinary group homomorphisms between different elements of the associated crossed modules \cite{Baez2008}. Represent two 2-groups $\G$ and $\G^{\prime}$ by crossed modules $(\tilde{G}^{[0]},\tilde{H}^{[1]},t,\alpha)$ and $(\tilde{G}^{\prime[0]}, \tilde{H}^{\prime[1]}, t^{\prime}, \alpha^{\prime})$. A 2-group homomorphism $f:\G \to \G^{\prime}$, which is a functor such that $f:\text{Obj}(\G) \to \text{Obj}(\G^{\prime})$ and $f:\text{Mor}(\G)\to \text{Mor}(\G^{\prime})$ are continuous homomorphisms of topological groups, can be represented by a pair of maps $h:\tilde{H}^{[1]}\to \tilde{H}^{\prime[1]}$ and $g:\tilde{G}^{[0]} \to \tilde{G}^{\prime[0]}$ such that the diagram
\begin{equation}
  \begin{tikzcd}
    \tilde{H}^{[1]} \arrow[r,"t"]\arrow[d,"h"]& \tilde{G}^{[0]}\arrow[d,"g"]\\
    \tilde{H}^{\prime[1]}\arrow[r,"t^{\prime}"] & \tilde{G}^{\prime[0]}
  \end{tikzcd}
\end{equation}
is commutative and that $h,g$ are compatible with $\alpha,\alpha^{\prime}$:
\begin{equation}
h(\alpha(a)(b)) = \alpha^{\prime}(g(a)) h(b),
\end{equation}
for all $a \in \tilde{G}^{[0]},\, b \in \tilde{H}^{[1]}$.

\subsection{2-group symmetries in quantum field theory}
\label{sec:2grp-in-QFT}

Physically, an equivalence class $\G=(G^{[0]}, H^{[1]}, \alpha, \beta)$ of 2-groups describes a symmetry structure appearing in quantum field theories which have both a 0-form symmetry group $G^{[0]}$ and a 1-form symmetry group $H^{[1]}$. When the two symmetries are not independent, the Postnikov class $\beta$ of the corresponding 2-group symmetry is non-trivial, and the 0-form and 1-form parts cannot be analysed in isolation. 

While we will eventually specialise to the case of toric 2-groups in the bulk of this paper, we first recap how 2-groups appear in field theory more broadly. It is convenient to distinguish two broad categories of 2-groups which in general have different physical origin:
\begin{enumerate}
\item A \emph{continuous 2-group} $\G$ is one with a continuous 1-form symmetry group $H^{[1]}$. 
This class of 2-groups arises in field theory when, for example, the gauge transformation law for the 1-form symmetry background gauge field is modified so that there is no operator-valued 't Hooft anomaly involving the background gauge field for the 0-form symmetry, in an analogue of the Green--Schwarz mechanism for global symmetries \cite{Cordova:2018cvg}. The 0-form symmetry can here be abelian or non-abelian, connected or disconnected, and compact or non-compact. In this paper we focus our attention on the special case where both the 0-form and 1-form symmetry are connected, compact, and abelian groups, which we refer to as a `{\em toric 2-group}'.
\item A \emph{discrete 2-group} $\G$ is one with a discrete 1-form symmetry group $H^{[1]}$. This class of 2-group symmetries arises in gauge theories when the gauge Wilson lines, which are charged objects under the 1-form symmetry $H^{[1]}$, are not completely screened in the presence of the background gauge field for the 0-form symmetry group $G^{[0]}$. The local operator $\mathcal{O}$ that screens the gauge Wilson lines when the 0-form background gauge field is turned off is also charged under the 0-form symmetry. Hence, when the background gauge field of $G^{[0]}$ is turned on, $\mathcal{O}$ transmutes a gauge Wilson line into a flavour Wilson line \cite{Benini:2018reh,Bhardwaj:2021wif,Lee:2021crt}. 
\end{enumerate}

\section{Cobordism with 2-group structure}
\label{sec:cobordism-with-2-group}

In this Section we describe the background gauge fields on associated principal $\G$-2-bundles which play an important role in field theories with 2-group symmetries. We focus on describing the topological spaces $B|\G|$ that classify these background fields, and show how to calculate such classifying spaces in elementary examples, before introducing the cobordism groups that are central to this paper in \S \ref{sec:cobord_2grp}.

\subsection{Background fields and their classifying spaces}

A theory with a 2-group symmetry $\G$ can be coupled to a background gauge field (which a physicist might wish to decompose into components of an ordinary 1-form gauge field and a 2-form gauge field),\footnote{A more explicit description of the background fields will be sketched in \S \ref{sec:QED} where we discuss the 2-group symmetries appearing in 4d abelian chiral gauge theory.
} which is a connection on a principal $\G$-2-bundle in the sense defined in~\cite{bartels2006higher}.

Mathematically, Bartels moreover shows that such 2-bundles over a manifold $X$ are classified by the \v Cech cohomology group $\check{H}^1(X,\G)$ with coefficients valued in the 2-group $\G$. Baez and Stevenson prove~\cite{Baez2008} that there is a bijection
\begin{equation}
\check{H}^1(X,\G) \cong [X,B|\G|],
\end{equation}
where $[A,B]$ denotes the set of homotopy classes of maps from $A$ to $B$, and $|\G|$ is the geometric realization of the nerve $\mathcal{N}\G$ of the 2-group $\G$ when viewed as a groupoid as in (\ref{eq:groupoid}). 
The nerve $\mathcal{N}\G$ of the category $\G$ with $\Obj(\G)=G$ and $\Mor(\G)=H \rtimes G$ is a set of simplices that we can construct out of these objects and morphisms -- we will see some explicit examples shortly.

Quantum field theories with 2-group symmetry $\G$ are thus defined on spacetime manifolds $X$ equipped with maps to $B|\G|$, just as a theory with an ordinary symmetry $G$ is defined on spacetimes equipped with maps to $BG$.

\subsubsection{Elementary examples of $B|\G|$}

Since the classifying space $B|\G|$ will play a central role in what follows, we pause to better acquaint the reader with $B|\G|$ and how it can be computed by looking at some simple examples. The calculations in this subsection are purely pedagogical -- readers who are familiar with (or not interested in) such constructions might wish to skip ahead to \S \ref{sec:general-BG-fib}.

\subsubsection*{Pure 0-form symmetries }

In the case that there is no 1-form symmetry, and the 2-group symmetry simply defines an ordinary 0-form symmetry $G$, we expect to recover the usual classifying space of $G$. In this case, the 2-group corresponds to the quadruplet $\G=(G,0,0,0)$, to use the topological crossed module notation, and indeed
$B|(G,0,0,0)|=BG$.
To see this from the nerve construction, we first view $\G$ as a category, which is very simple in this case: $\text{Obj}(\G) = G$ with only identity morphisms at each element. So there are no non-degenerate $n$-simplices when $n>0$, while the set of 0-simplices is $G$. Hence, the geometric realisation of the nerve $|\G|$ is simply the group $G$ itself, and its classifying space is $BG$, as claimed.

\subsubsection*{Pure 1-form symmetries }

The case of a `pure 1-form symmetry', in which there is no 0-form symmetry at all, corresponds to 2-groups of the form $\G=(0,H,0,0)=:H[1]$, where $H$ is the (abelian) 1-form symmetry group.
The corresponding classifying space is~\cite{bartels2006higher}
\begin{equation} \label{eq:BBH}
B|H[1]| = B(BH),
\end{equation}
which we often denote simply $B^2 H$, which coincides with the well-known classifying space of an abelian gerbe~\cite{brylinski2007loop}. For example, when $H=\U$, $B|H[1]|$ is an Eilenberg--Maclane space $K(\Z,3)$.

\paragraph{An example: pure $\Z/2$ 1-form symmetry.}
For an explicit example that illustrates how to actually calculate the geometric realization of the nerve of such a $\G=H[1]$, we consider the simplest case where $H=\Z/2$. 
Viewing this as a category as in~(\ref{eq:groupoid}), we have that $\Obj(\Z/2[1])$ consists of only one element because $G$ is the trivial group, denoted simply by $\bullet$, while the group $\Mor(\Z/2[1])$ is just isomorphic to $H=\Z/2$. Diagrammatically, this structure can be represented as
\begin{equation*}
\begin{tikzcd}
\bullet \arrow["-1"', loop, distance=2em, in=35, out=325] \arrow["1"', loop, distance=2em, in=215, out=145]
\end{tikzcd}.
\end{equation*}
The nerve of $\Z/2[1]$ is then a simplicial set $\mathcal{N}\Z/2[1]$ built out of these objects and morphisms, whose low dimensional components are given as follows.
\begin{align*}
  \left(\mathcal{N}\Z/2[1]\right)_0 &= \left\{ \bullet\right\} \qquad &\text{(0-simplex)}\\
  \left(\mathcal{N}\Z/2[1]\right)_1 &= \left\{ \bullet \xrightarrow{-1} \bullet \right\} \qquad &\text{(1-simplex)}\\
  \left(\mathcal{N}\Z/2[1]\right)_2 &= \left\{ \bullet \xrightarrow{-1} \bullet \xrightarrow{-1} \bullet \right\} \qquad &\text{(2-simplex)}\\
                 &\text{etc.}
\end{align*}

This is enough information for us to work out the CW complex  cell structure for its geometric realisation, which recall is the topological space that we denote $|\Z/2[1]|$. The 0-cell is just a point. The 1-cell comes from the union of the 0-cell and an interval, with the gluing rule given by the 1-simplex \begin{tikzcd} \bullet \arrow["-1"', loop, distance=2em, in=35, out=325] \end{tikzcd} {\em i.e.} identifying both ends of the interval with the 0-cell. Hence, the 1-cell has the topology of a 1-sphere $S^1$. Similarly, the form of the 2-simplex, written more suggestively as
\begin{equation*}
  \begin{tikzcd}
\bullet \arrow[r, "-1", bend left=49, shift right] & \bullet \arrow[l, "-1", bend left=49]
\end{tikzcd},
\end{equation*}
tells us that the 2-cell is constructed by identifying the 1-cell $S^1$ as the boundary of a 2-ball, with antipodal points on the boundary identified. This gives the topology of the 2-dimensional real projective space $\R P^2$. Building inductively in this way, one can show that $|\Z/2[1]|$ is topologically $\R P^\infty$. On the other hand, we know that this is the same as the classifying space of $\Z/2$, {\it i.e.} $|\Z/2[1]|\cong B\Z/2$. Therefore, the classifying space of the $\Z/2$ 1-form symmetry bundle is $B|\Z/2[1]| \cong B^2\Z/2$. 

This argument can be generalised to any 2-group of the form $H[1] = (0,H,0,0)$. For any group $H$, it is proven that the geometrisation of the nerve $|H[1]|$ is the topological space $BH$ \cite{Segal1968}, and the classifying space of $H[1]$-2-bundles is thus $B^2H$, coinciding with the known classification of $H$-gerbes when $H$ is abelian.

\subsubsection*{2-groups with a trivial map}
\label{sec:2-groups-1form-trivial-action}

Moving up in complexity, let us now consider a 2-group with both the constituent 1-groups $\tilde G$ and $\tilde H$ being non-trivial, but with at least one of the maps in the crossed module definition being trivial. To wit, consider a particular 2-group $\G=(\tilde G,\tilde H,t,\alpha)$, written using the notation of (\ref{eq:xmod_tilde}), where the map $\alpha: \tilde G\rightarrow \text{Aut}(\tilde H)$ is trivial {\em i.e.} $\alpha(g)$ is the identity automorphism for any $g\in \tilde G$.

\paragraph{An example: $\tilde G=\Z/2$ and $\tilde H=\Z/2$.}

Probably the simplest example of a 2-group of this kind is one where $\tilde G$ and $\tilde H$ are both $\Z/2$. Since $\text{Aut}(\tilde H) \cong \text{Aut}(\Z/2)$ is trivial, $\alpha$ is automatically trivial. If $\tilde G$ and $\tilde H$ do not interact at all, meaning that the map $t$ is also trivial (the module is `uncrossed'), then the 2-group factorises as a product of a 0-form and a 1-form symmetry:
\begin{equation*}
  |\G| \cong \Z/2\times B\Z/2,
\end{equation*}
whose classifying space is just $B\Z/2\times B^2\Z/2$.

Since $t:\tilde H\rightarrow \tilde G$ must be a homomorphism, the only other option for $t$ is the identity map $t:-1\mapsto -1$. Setting $\G=(\Z/2,\Z/2,\text{id},0)$, then as a category $\text{Obj}(\G) \cong \Z/2$ and $\text{Mor}(\G)=G\times H \cong \Z/2 \times \Z/2$, whose action can be fully captured in the following diagram:
\begin{equation*}
\begin{tikzcd}
-1 \arrow["{(1,-1)}"', loop, distance=2em, in=215, out=145] \arrow[r, "{(-1,-1)}", bend left=49] & 1 \arrow["{(1,1)}"', loop, distance=2em, in=35, out=325] \arrow[l, "{(-1,1)}", bend left=49]
\end{tikzcd}
\end{equation*}
Since the morphisms shown in the diagram are all the morphisms in this category, we can identify the morphisms $(1,1)$ and $(1,-1)$ as the identity morphisms at the objects $1$ and $-1$, respectively.

The components of the nerve $\mathcal{NG}$ in low dimensions are given by
\begin{align*}
  \mathcal{NG}_0 &= \left\{ \bullet_{-1},\bullet_1\right\}\\
  \mathcal{NG}_1 &= \left\{\bullet_{-1}\xrightarrow{(-1,-1)}\bullet_{1}, \bullet_{1}\xrightarrow{(-1,1)}\bullet_{-1}\right\}\\
  \mathcal{NG}_2 &= \left\{\bullet_{-1}\rightarrow\bullet_1\rightarrow \bullet_{-1}, \bullet_1\rightarrow \bullet_{-1}\rightarrow \bullet_1\right\}
\end{align*}
From these data, it is easy to see that the 0-cell is just a pair of points, the 1-cell is a circle $S^1$,
\begin{tikzcd}
\bullet_{-1} \arrow[r, "{(-1,-1)}", bend left=60] & \bullet_1 \arrow[l, "{(-1,1)}", bend left=60]
\end{tikzcd}, and the 2-cell is a 2-sphere $S^2$, constructed by joining \begin{tikzcd}
\bullet_{-1} \arrow[r, bend left=60] & \bullet_1 \arrow[l, bend left=60]
\end{tikzcd} and \begin{tikzcd}
  \bullet_{1} \arrow[r, bend left=60] & \bullet_{-1} \arrow[l, bend left=60]
\end{tikzcd} as two ``hemispheres'', with the 1-cell at the equator. Continuing ad infinitum we obtain $|\G| \cong S^\infty$. But the infinite sphere is contractible, which means it is homotopy equivalent to a point. Hence, the classifying space for $\G$ must be trivial. 

This result should not be a surprise because the crossed module under consideration, namely $\G = (\Z/2,\Z/2,t,0)$ with $t$ an isomorphism, is equivalent to the trivial 2-group because both $G^{[0]}=\text{coker}\,t$ and $H^{[1]}=\ker t$ are trivial!

\subsubsection{General $B|\G|$ as a fibration} \label{sec:general-BG-fib}

Our discussion so far has been mostly pedagogical; the idea was to acquaint the reader with some basic computations of the classsifying spaces of 2-groups. Thankfully, more powerful tools are available to us for the computations that we will be interested in. First and foremost, for any 2-group $\G$ there is a fibration\footnote{The extensions encoded in (\ref{eq:2-group-fib}) are in fact classified by $H^3_{SM}(G^{[0]}, H^{[1]})$~\cite{schommer2011central}, the Segal--Mitchison group cohomology in degree 3 (a degree that usually classifies 2-step extensions). }
\begin{equation}
  \label{eq:2-group-fib}
  \begin{tikzcd}
    B(B H^{[1]}) \arrow{r} & B|\G| \arrow{d}\\
    & BG^{[0]}
  \end{tikzcd}.
\end{equation}
Note that, because the 1-form symmetry $H^{[1]}$ is always an abelian group, $BH^{[1]}$ is itself a (topological) group, and so $B(BH^{[1]})$ can be defined using the ordinary definition of the classifying space of a group. We will see in the next Subsection that this fibration is sufficient for our purpose of anomaly classification.

\subsection{Cobordism description of 2-group anomalies}
\label{sec:cobord_2grp}

In order to classify anomalies associated to a 2-group $\G$ via
cobordism, most of the structure contained in $\G$ as a 2-group can be
discarded. As discussed in the previous Subsection, $\G$-2-bundles are
classified by the topological space $B|\G|$, where $|\G|$ is the
geometric realization of the nerve (henceforth just `the nerve') of
$\G$, regardless of other information contained in $\G$. Thus, we can
use the nerve $|\G|$, itself a topological group, to define a
tangential structure on spacetime just as we would for an ordinary
0-form symmetry. For instance, if we are interested in a fermionic
theory with internal 2-group symmetry $\G$ (with no mixing with the
spacetime symmetry), we can take the symmetry type to be\footnote{If we do not
  require a spin structure, we could use $\SO$ instead of $\Spin$ in
  defining the symmetry type $\SS$ -- as we do in \S \ref{sec:Maxwell-revisited}.}
\begin{equation}
\SS = \Spin \times |\G|\;.
\end{equation}

Directly applying Freed--Hopkins' classification~\cite{Freed:2016rqq} of invertible,
reflection-positive field theories to this symmetry type, suggests that the cobordism group
\begin{equation}
H_{I\Z}^{d+2}(MT(\Spin\times|\G| ) \cong \text{Tors}\,\Omega^{\Spin}_{d+1}(B|\G|) \times \Hom \left( \Omega^{\Spin}_{d+2}(B|\G|), \Z \right)\;
\end{equation}
correctly classifies anomalies in $d$-dimensional fermionic theories with the
2-group symmetry $\G$. Our detection of anomaly theories is therefore
distilled into computing the spin bordism groups of $B|\G|$, for which
we can use the same methods as for an ordinary symmetry. 

In particular, our general strategy will be to build up the spin bordism groups of $B|\G|$ by using the `defining fibration' described
above, $B^2 H^{[1]} \to B|\G| \to BG^{[0]}$, as follows. We first use the Serre
spectral sequence (SSS) \cite{Serre1951HomologieSD} to compute the cohomology of $B|\G|$ from the
fibration. Then, one can use this newly computed cohomology to compute the
spin bordism groups of $B|\G|$, for example via the Adams spectral sequence
(ASS) \cite{Adams:1958}. Alternatively, we can convert the result to homology using the
universal coefficient theorem, and use it as an input of the Atiyah--Hirzebruch spectral sequence (AHSS) \cite{AHSS1961:ProcSymp61} 
\begin{equation}
E^2_{p,q} = H_p(B|\G|;\Omega^{\Spin}_q(\pt))\;.
\end{equation}
for another fibration $\pt \to B|\G| \to B|\G|$. Kapustin and Thorngren
already used the SSS associated to (\ref{eq:2-group-fib})
in~\cite{Kapustin:2013uxa} to compute the cohomology of
$B|\G|$. Moreover, this approach has been used to calculate bordism
groups relevant to 2-group symmetries with non-trivial Postnikov
class; Wan and Wang did so using the ASS in Ref.~\cite{Wan:2018bns},
while Lee and Tachikawa used the AHSS in Ref~\cite{Lee:2020ewl}. We usually adopt the AHSS-based approach in this paper.

\section{Maxwell revisited}
\label{sec:Maxwell-revisited}

As a (rather lengthy) warm-up example, let us discuss Maxwell theory. We consider a 4d $\U$ gauge theory without matter, and variants thereof in which we couple the action to various TQFTs without changing the underlying symmetry structure. The action for vanilla Maxwell theory is
\beq \label{eq:pure_Max}
S_{\text{Maxwell}} = -\frac{1}{4e^2}\int_{M_4} \dd a \wedge \star \dd a\, ,
\eeq
where $a$ denotes the dynamical $\U$ gauge field and $\star$ denotes the Hodge dual, assuming a metric $g$ on $M_4$.

This theory has two $\U$ 1-form global symmetries~\cite{Gaiotto:2014kfa}, referred to as `electric' and `magnetic' 1-form symmetries. Since each is associated with a continuous group, one can write down the corresponding conserved 2-form currents, which are
\beq \label{eq:je_jm}
j_e = \frac{2}{e^2} \dd a, \qquad j_m = \star\frac{\dd a}{2\pi}\, .
\eeq
The former is conserved ($\dd\star j_e=0$) by Maxwell's equations $\dd\star \dd a=0$, and the latter by $\dd^2=0$ (and using $\star\star=1$).\footnote{For Maxwell theory in $d$ dimensions, $j_e$ is always a 2-form current and so the electric symmetry is always 1-form, while $j_m$ is more generally a $(d-2)$-form, and so the magnetic symmetry is generally a $(d-3)$-form symmetry. }

It was already observed in~\cite{Gaiotto:2014kfa} that there is a mixed 't Hooft anomaly between the two 1-form symmetries, which is a local anomaly that can be represented by an anomaly polynomial $\dd B_e \wedge \dd B_m$, where $B_e$ and $B_m$ are background 2-form gauge fields for the electric and magnetic 1-form symmetries respectively. This kind of anomaly and its consequences concerning various quantum numbers are discussed in Ref.~\cite{Hsin:2019fhf,Brennan:2022tyl} (see also~\cite{Wang-Senthil:2016a}).  In Ref.~\cite{Jian:2020qab}, a variety of 5d SPT phases that have Maxwell-like theories on the 4d boundary, protected by electric and magnetic 1-form symmetries, were derived. Here we show how these results are simply captured by the cobordism  classification.

\subsection{Cobordism classification of 1-form anomalies}

Maxwell theory has two global $\U$ 1-form symmetries, and no 0-form global symmetries. 
The classifying space of this global symmetry structure $\G$ is that of an abelian $H$-gerbe with $H=\U_e\times \U_m$. From (\ref{eq:BBH}),
\beq
B|\mathbb{G}| = B^2 \U_e \times B^2 \U_m \, .
\eeq
Since there are no fermions, we do not need a spin structure, and so can define pure Maxwell theory on any smooth, orientable 4d spacetime manifold.\footnote{We do not account for any time-reversal symmetry, that would enable us to pass from oriented to unoriented bordism.} The 5d invertible field theories with this symmetry type are therefore classified by the generalized cohomology group
\beq \label{eq:MaxCobordism}
\Hom \left(\Tor \Omega_5^\SO(B|\mathbb{G}|), \R/\Z \right) \hookrightarrow H_{I\Z}^6\left(MT(\SO \times |\mathbb{G}| ) \right) \twoheadrightarrow  \Hom \left( \Omega_6^\SO(B|\mathbb{G}|), \Z\right)\, ,
\eeq 
where this sequence splits, but not canonically.
In terms of anomaly theories, the factor on the right of this SES captures local anomalies, and the factor on the left captures global anomalies; the abelian group that detects all anomalies is isomorphic to the direct product of the two.

\subsubsection*{Local anomalies}

In Appendix~\ref{app:maxwell} we compute that $\Omega^\SO_6\left((B^2\U)^2 \right) = \Z$, and so
\beq \label{eq:Om6_Max}
\Hom \left( \Omega_{6}^{\SO} (B^2 \U_e \times B^2 \U_m), \Z \right) \cong \Z \, ,
\eeq
corresponding to the integral of the degree-6 anomaly polynomial $dB_e \cup dB_m$ on a generator of $\Omega^\SO_6\left((B^2\U)^2 \right)$.
This detects the local mixed 't Hooft anomaly between the two 1-form symmetries discussed in~\cite{Gaiotto:2014kfa}.

\subsubsection*{Global anomalies}

In Appendix~\ref{app:maxwell} we also compute that $\Omega^\SO_5\left((B^2\U)^2 \right) = (\Z/2)^3$. Thus, the group classifying global anomalies is
\beq \label{eq:Om5_Max}
\Hom \left( \Tor \Omega_{5}^\SO (B^2 \U \times B^2 \U), \R/\Z \right)\cong (\Z/2)^3\, .
\eeq
In  terms of characteristic classes, these three factors of $\Z/2$ correspond to 
\begin{equation}
  w_2 \cup w_3,\quad  w_2\cup \tau_e, \quad w_2 \cup \tau_m,\nn
\end{equation}
where $w_{2,3}$ denote the second and third Stiefel--Whitney classes of the tangent bundle, and $\tau_{e,m}$ is a unique generator of $H^3(B^2\U_{e,m};\Z/2)$ ({\it c.f.} Appendix \ref{app:summ-class-results}) which can be thought of as a mod 2 reduction of an integral cohomology class represented by $\frac{dB_{e,m}}{2\pi}$.

\subsection{Phases of non-spin Maxwell theory}

The $w_2 w_3$ invariant corresponds to a global gravitational anomaly seen on non-spin manifolds, related to that in~\cite{Wang:2018qoy}.
The $w_2 \tau_{e,m}$ bordism invariants correspond loosely to 't Hooft anomalies for each of the 1-form symmetries, that prevents them from being gauged on certain (non-spin) gravitational backgrounds.
We discuss these in more detail next. 

\subsubsection*{The {\boldmath$w_2 \cup \tau_{e}$} phase}

To see the global anomaly that is captured by the $w_2 \cup \tau_{e}$,\footnote{For simplicity, we consider the case where the 1-form symmetry is electric. The corresponding story for the magnetic 1-form symmetry can be obtained by applying the electric-magnetic duality.} we must go beyond the vanilla Maxwell theory described by (\ref{eq:pure_Max}), and couple 4d Maxwell to a topological term. The modified action is
\begin{equation} \label{eq:fermi-monopole}
  \begin{split}
    S &=-\frac{1}{4e^2}\int_{M_4} \dd a \wedge \star \dd a + \pi \ii \int_{M_4} w_2(TM_4)\cup \rho_2 \left[ \frac{\dd a}{2\pi} \right]_{\Z}\\
    &= S_{\text{Maxwell}}+ \pi \ii \int_{M_4} w_2(TM_4)\cup c_1,
  \end{split}
\end{equation}
where $[ \cdot ]_{\Z}$ denotes an integral cohomology class and $\rho_2$ is mod $2$ reduction, and where we shall write $c_1$ for both the first Chern class of the $\U$ gauge
bundle and its mod $2$ reduction. The topological term couples a
background magnetic 2-form gauge field to the $\Z/2$ part of the
magnetic 1-form symmetry, and then equates it to the second
Stiefel--Whitney class of the tangent bundle. The electric 1-form
symmetry shifting $a \mapsto a + \lambda$, where $\lambda$ is a closed 1-form, remains intact.

This theory describes
fermionic-monopole electrodynamics; the topological term $\int w_2 \cup c_1$ forces all monopoles to become
fermionic~\cite{Thorngren:2014pza,Ang:2019txy}.  To see this, following~\cite{Wang:2018qoy}, consider
adding a magnetic monopole via an 't Hooft line operator of charge $1$
along $\ell \subset M_4$ that we excise from $M_4$ with the boundary condition
that $\int_{S^2_{\ell}} c_1 = 1$ on a small 2-sphere $S^2_{\ell}$ around
$\ell$. The theory is now defined on the complement $M^{\prime}_4$ of
$\ell$ in the presence of the 't Hooft operator, which is a manifold with
boundary $\partial M^{\prime}_4 \cong S^2_{\ell} \times \ell$. For the integral
$\int_{M^{\prime}_4}w_2(TM^{\prime}_4)\cup c_1$ to make sense on a manifold with
boundary, we need a trivialisation of the integrand on
$\partial M^{\prime}_4$.\footnote{This is because for an $n$-manifold with
  boundary $M^{\prime}$, the fundamental homology class $[M^{\prime}]$ is in
  $H_n(M^{\prime}, \partial M^{\prime})$. The integration of a cohomology class on
  $M^{\prime}$ is in fact a pairing between a cohomology class and the
  fundamental class $[M^{\prime}]$, so when we write
  $\int_{M^{\prime}} c$ for a cohomology class $c\in H^n(M^{\prime})$, we need to
  first find a class in $H^n(M^{\prime},\partial M^{\prime})$ that ``corresponds'' to
  $c$ to make sense of the pairing. This is always possible because
  the long exact sequence in cohomology
\begin{equation}
\ldots \rightarrow H^{n-1}(\partial M^{\prime}) \rightarrow H^n(M^{\prime},\partial M^{\prime}) \rightarrow H^n(M^{\prime}) \rightarrow 0\nn
\end{equation}
implies that any class $c \in H^{n}(M^{\prime})$ can always be lifted to a
class $C \in H^n(M^{\prime},\partial M^{\prime})$. The choice of this lift is the choice
of trivialisation of $c$ on $\partial M^{\prime}$.} Now, since $c_1$ is non-trivial on
$S^2_{\ell}$ due to the non-zero monopole charge, we
have to trivialise the $w_2(TM^{\prime}_4)$ factor. A trivialisation of the
second Stiefel--Whitney class is nothing but a spin structure. Since
there is a unique spin structure on $S^2$, a spin structure on
$S^2_{\ell} \times \ell$ is the same as a spin structure along $\ell$. Therefore, to
define the additional phase, we must choose a spin structure along the
monopole worldline: the monopole is a fermion. 

In order to see the 't Hooft anomaly afflicting $\U_e$, 
we now couple the theory to a background electric 2-form gauge field $B_e$, and promote the 1-form $\U_e$ global symmetry transformation
above to a `local' transformation by relaxing the condition that the 1-form parameter
$\lambda$ has to be closed. In fact, $\lambda$ does not strictly need to be a 1-form on $M_4$ -- more precisely, we now consider shifting the gauge field $a$ by any connection $\lambda$.
For a connection $\lambda$ with non-zero curvature $\dd \lambda$, the action (\ref{eq:fermi-monopole}) shifts under
$a \mapsto a+\lambda$, by
\begin{equation} \label{eq:w2tau_anom}
\delta S = \pi \ii \int_{M_4} w_2 \cup \rho_2\left[ \frac{\dd \lambda}{2\pi} \right]_\Z ,
\end{equation}
which encodes the 't Hooft anomaly associated with certain `large 2-form gauge transformations', and on certain gravitational backgrounds. 

For example, take $M_4$ to be $\CPP$. As usual, we can parametrize $\CPP$ with three complex coordinates $z_1$, $z_2$, and $z_3$, such that $\sum_{i=1}^3 z_i^\ast z_i = 1$ and with the equivalence $z_i \sim e^{i\alpha} z_i$ for $\alpha \in \R/\Z$. Define the 2-form
$\omega := \frac{i}{2} \partial \overline{\partial} \log (z_1^2 + z_2^2)$,
which is just the volume form on the $S^2 \cong \CP\subset \CPP$ submanifold defined by $z_3=0$.  Its cohomology class $[\omega]_\Z$ can be taken as a generator for $H^2(\CPP; \Z)$, and likewise $a:= \rho_2 [\omega]_\Z$ can be taken as a generator for $H^2(\CPP; \Z/2)$. We also have that $w_2 (T\CPP) = a$. If we shift $a$ by a connection $\lambda$ with $\left[\frac{\dd \lambda}{2\pi}\right]_\Z=[n\omega]_\Z=n[\omega]_\Z$, the shift in the action is 
\beq
\delta S = \pi \ii\langle [M_4], a^2 \rangle = n\pi\ii
\eeq
where $\langle \cdot, \cdot \rangle$ here denotes the pairing between mod 2 homology and cohomology, thus realising the $\Z/2$-valued global anomaly for odd values of $n$.

As usual, there is a dual description of this anomaly in terms of the 5d SPT phase that captures it via inflow to the 4d boundary. If $M_4$ were nullbordant, the phase would here be $S_{\text{SPT}}= \pi \ii \int_X w_2 \cup \tau_{e}$ where recall $\tau_e = \rho_2\left[\frac{dB_e}{2\pi}\right]_\Z$, for a 5-manifold $X$ such that $\partial X = M_4$ and to which the background fields are extended. 
To see this,
first note that one could cancel the anomalous shift \eqref{eq:w2tau_anom} by adding a `counter-term'
$S_{\text{ct.}} = -\pi \ii \int_{M_4} \widehat{w_2} \wedge  \frac{B_e}{2\pi}$ where $\widehat{w_2}$ is a closed 2-form constructed from $w_2$,\footnote{To construct $\widehat{w_2}$, first note that there is an integral lift $W_2\in H^2(M_4; \Z)$ of $w_2\in H^2(M_4; \Z/2)$ because $w_3=0$ for any orientable 4-manifold. (One sees this from the long exact sequence in cohomology associated to the coefficient sequence $0\to \Z \xrightarrow{\times 2} \Z \xrightarrow{\rho_2} \Z/2 \to 0$, for which the Bockstein connecting homomorphism $\beta:H^2(\cdot; \Z/2) \to H^3(\cdot; \Z):w_2 \mapsto \beta(w_2)$ where $\rho_2 \beta(w_2) = w_3$ ). Next, given the integral class $W_2$, construct any complex line bundle over $M_4$ with $c_1=W_2$, and take $\widehat{w_2}$ to be the curvature 2-form of that bundle.
} 
recalling that the
gauge transformation for $B_e$ is
$B_e \mapsto B_e + \dd \lambda$.
However, this `4d action' is not properly quantised. (It is `half-quantised', precisely because there is a $\Z/2$ anomaly.) One must instead write it as the 5d action
$S_{\text{ct.}} = -\pi \ii \int_X w_2 \cup \tau_{e}$.
Thus the fermionic-monopole electrodynamics has exactly the right anomaly to be a boundary state of the SPT phase given by $S_{\text{SPT}}= \pi \ii \int_X w_2 \cup \tau_{e}$ (see also Ref.~\cite{Jian:2020qab}).\footnote{This mixed anomaly can also be interpreted as a remnant of the mixed anomaly
between the electric and magnetic 1-form symmetry when only the $\Z/2$ subgroup of the electric 1-form symmetry is coupled to a background field $w_2$. }

\begin{figure}[t]
\begin{widepage}
\begin{centering}
  \includegraphics[width=1.0\textwidth]{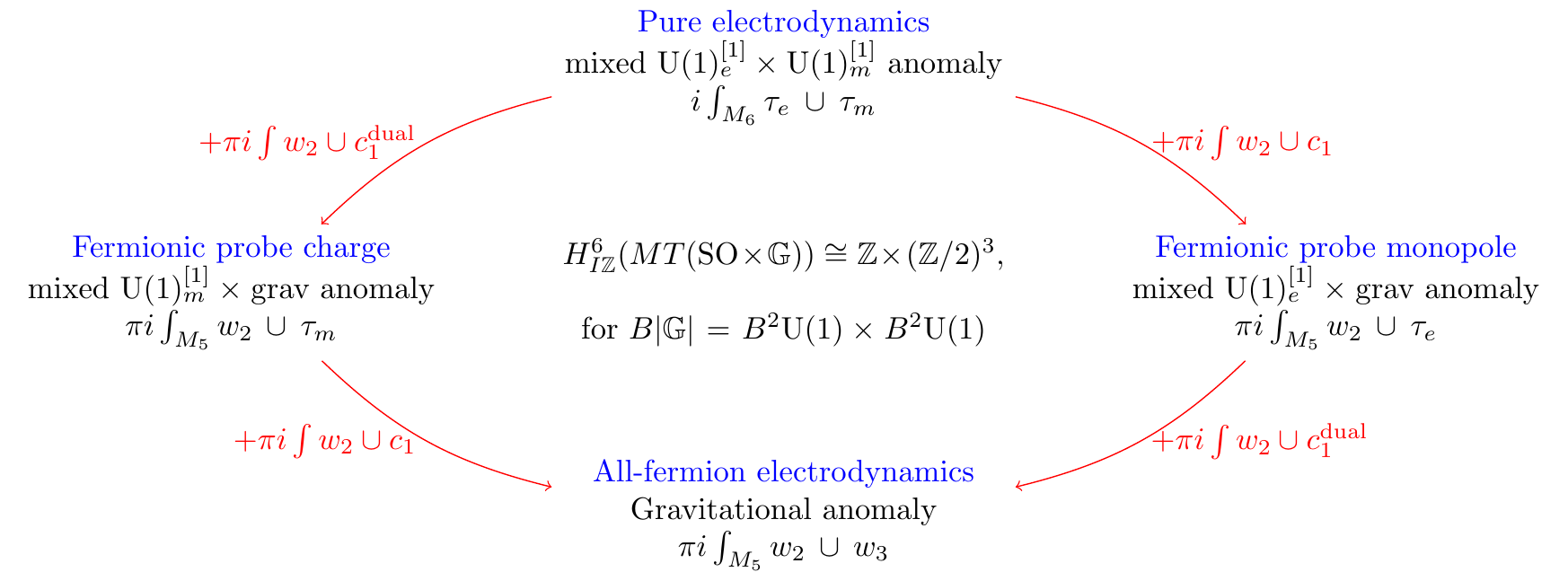}
\end{centering}
\end{widepage}
\caption{ A map of the four possible 't Hooft anomalies that can afflict a 4d theory with a pair of $\U$ 1-form global symmetries, defined on orientable manifolds without a spin structure. Each 't Hooft anomaly corresponds to a factor in the cobordism group $H_{I\Z}^6 \cong \Z \times (\Z/2)^3$, and is exhibited by Maxwell theory on its own (top, corresponding to the local $\Z$-valued anomaly) or coupled to a particular TQFT (left, bottom, and right, corresponding to the trio of global $\Z/2$-valued anomalies). The red arrows illustrate the relations between these four theories. For the global anomalies, these versions of 4d electromagnetism can also be realised as the boundary theories for 5d SPT phases. \label{fig:cartoon} }
\end{figure}

\paragraph{Bordism generator for the \boldmath{$w_2 \cup \tau_e$} anomaly.}

Of course, it is important to note that $\CPP$ is {\em not} nullbordant in $\Omega_4^\SO$, given the signature $\sigma = 1$ is a 4d bordism invariant. Given we saw the anomaly explicitly on $\CPP$, one cannot in fact realise it via the counterterm $S_{\text{ct.}} = -\pi \ii \int_X w_2 \cup \tau_{e}$ because there is no 5-manifold $X$ bounded by $M_4$. But as is well-known~\cite{Freed:2004yc}, the phase of the partition function $Z$ on such a non-nullbordant manifold suffers from an ambiguity, since it can always be shifted by a choice of generalized theta angle, {\em i.e.} a coupling to a non-trivial TQFT corresponding to an element in $\Hom(\Omega_4^\SO(\cdot), \R/\Z)$. Thus, one can fix $\arg Z[M_4, B_e] = \theta_0$ to a reference phase, and then $\arg  Z$  is uniquely defined on any other 4-manifold $(M_4^\prime, B_e^\prime)$ that is bordant to $(M_4, B_e)$.

This phase can be calculated by constructing a 5d mapping torus $\tilde X$ by taking a cylinder that interpolates between $(M_4, B_e)$ and $(M_4^\prime, B_e^\prime)$, and gluing its ends to make a closed 5-manifold. This $\tilde X$ will be a representative of the generator of the $\Z/2$ factor in $\Omega_5^\SO((B^2 \U)^2)$ that we have claimed is dual to $w_2 \cup \tau_e$.

In more detail, take $M_4 = \CPP$ and let $B_e^0$ denote any reference choice of background 2-form gauge field for the electric 1-form symmetry on  $M_4$. 
Next take the product manifold $\CPP \times [0,2\pi]$ with a product metric.
Over the interval $I = [0,2\pi]$ one implements (for example linearly)
the `large' 2-form gauge transformation 
$B_e \mapsto B_e^{\prime}:=B_e + n \omega$, $n \in \Z$.\footnote{We refer to this as a `large' gauge transformation because the gauge parameter $n \omega$ is closed but not exact, having a `winding number' of $n$.}
Since $B_e$ and $B_e^{\prime}$ are gauge-equivalent, the manifold $\CPP \times [0,2\pi]$ can be glued at 0 and $2\pi$ to make a closed 5-manifold with $\U$-2-bundle:
\beq
\tilde X := S^1 \times \CPP, \qquad B_e(\theta, z_i) = B_e^0 + \frac{t}{2\pi} n\omega,
\eeq
where $t\in I$. 

To verify that this mapping torus can be taken as a generator of the bordism group  $\Omega^\SO_5\left((B^2\U)^2 \right)$, it is enough to evaluate the bordism invariant `anomaly theory' $w_2 \cup \tau_e$ on $[\tilde X]$ and find a non-trivial value for the phase. Physically, evaluating $\langle w_2 \cup \tau_e, [\tilde X] \rangle$ computes the phase accrued by the partition function on $(\Sigma, B_e)$ upon undergoing the 2-form gauge transformation $B_e \to B_e^\prime$.
Doing the computation, we have that $\tau_e = \left[\frac{\dd t}{2\pi}\right]\cup \left[n\omega\right] = y \cup nx \in H^3(\tilde X; \Z/2)$,
where $x$ is the non-trivial element of $H^2(\tilde X; \Z/2)$ obtained by pulling back $a$ along the projection $\tilde X \to \CPP$, and
 $y$ is the non-trivial element of $H^2(\tilde X; \Z/2)$ obtained by pulling back the generator of $H^1(S^1; \Z/2)$ along $\tilde X \to S^1$. For $n$ odd, we have that $\tau_e = y\cup x$, the unique non-trivial element of $H^3(\tilde X; \Z/2)$. Thus $w_2 \cup \tau_e = x^2 y \in H^5(\tilde X; \Z/2)$ for $n$ odd, and so the anomaly theory evaluates to $1 \mod 2$ on this mapping torus, while it is trivial for $n$ even.

\subsubsection*{The \boldmath{$w_2 \cup \tau_m$} phase}

An identical account can be given for the magnetic 1-form symmetry, if one replaces the topological coupling $\pi \ii \int w_2 \cup c_1$ in (\ref{eq:fermi-monopole}) by  $\pi \ii \int w_2 \cup c_1^\text{dual}$, where $c_1^\text{dual}$ denotes the first Chern class of the electromagnetic dual of the gauge field. In that case, the 4d theory suffers from a 't Hooft anomaly afflicting the magnetic 1-form symmetry, which obstructs $\U_m[1]$ from being gauged on non-spin manifolds such as $\CPP$. The 5d anomaly theory is $\int_{M_5} w_2 \cup \tau_m$.

Physically, this same anomaly can be understood from a different perspective. Suppose one couples vanilla Maxwell theory (\ref{eq:pure_Max}) to a charge-1 fermion, defined on all orientable 4-manifolds using a $\Spin_c$ structure. The electric 1-form symmetry is explicitly broken, but one can still probe anomalies involving the magnetic 1-form symmetry and the gravitational background. The magnetic 1-form symmetry remains intact, even though we effectively have a `charge-$\frac{1}{2}$' monopole due to the constraint $c_1=\frac{w_2 \mod 2}{2}$ that follows the definition of $\Spin_c$, because this monopole is not dynamical but rather acts as a constraint on the field configurations that are summed in the path integral. The symmetry type is now $\Spin_c \times \G_m$, with $B|\G_m|=B^2 \U_m$, for which the cobordism group $H_{I\Z}^6(MT(\Spin_c \times \G_m)) \cong \Z/2 \times \Z^2$.\footnote{The two factors of $\Z$ arise simply because we have included the `$\U$ gauge symmetry' in our definition of the symmetry type, because it is now entangled with the spacetime symmetry. If one computed the reduced $\Spin_c$ bordism these factors would go, leaving only the $\Z/2$-valued $w_2 \cup \tau_m$ 't Hooft anomaly for the magnetic 1-form symmetry.} Of course, all the anomalies involving the electric 1-form symmetry are absent, as is the $w_2 w_3$ anomaly because the $\Spin_c$ requirement trivialises $w_3$, but the $w_2 \cup \tau_m$ global anomaly remains.

\subsubsection*{The \boldmath{$w_2 \cup w_3$} phase}

The final $\Z/2$-valued global anomaly that is detected by (\ref{eq:Om5_Max}) corresponds to the 5d SPT phase $\int_{M_5} w_2 \cup w_3$. To exhibit this anomaly, one starts from vanilla Maxwell theory and couples $w_2(TM_4)$ as a background gauge field to both the electric and magnetic $\U$ 1-form symmetries. 
This SPT phase is purely gravitational and has been extensively analysed in the literature. Its boundary states include the theory of all-fermion electrodynamics~\cite{Kravec:2014aza,Wang:2018qoy} and a fermionic theory with an emergent $\SU(2)$ gauge symmetry~\cite{Wang:2018qoy}.

\subsection{Scalar QED and 1-form anomaly interplay} \label{sec:SQED_interplay}

Now let us couple Maxwell theory, defined as before with an $\SO$ structure, to a charge-2 boson. This coupling to matter breaks the electric 1-form symmetry down to a discrete remnant, 
\beq
\U_e[1]  \rightarrow \Z/2_e[1] \, ,
\eeq
while the full $\U_m[1]$ magnetic 1-form symmetry is preserved. This version of scalar QED therefore furnishes us with two global 1-form symmetries, one discrete and one continuous, and no 0-form symmetries. The classifying space of the global symmetry $\G^\prime$ is therefore 
\beq
B|\G^\prime| = B^2 \Z/2_e \times B^2 \U_m\, ,
\eeq
and the 5d invertible field theories with this symmetry type are classified by the sequence (\ref{eq:MaxCobordism}) but with $|\G|$ replaced by $|\G^\prime|$.
In Appendix~\ref{app:charge2} we compute the relevant bordism groups to stitch together this generalized cohomology group.

\paragraph{Local anomalies:}
We compute from Appendix~\ref{app:charge2} that 
\beq
\Hom \left( \Omega_{6}^{\SO} (B^2\Z/2_e \times B^2\U_m), \Z \right) = 0 \, ,
\eeq
and so
there are {\em no} local anomalies; clearly, the mixed local 't Hooft anomaly `$dB_e \cup dB_m$' can no longer be realised now that one of the 1-form symmetries is broken to a discrete group, because the associated 2-form gauge field now has zero curvature.

\paragraph{Global anomalies:}
On the other hand, we find that
$\Omega^\SO_5\left(B^2\Z/2 \times B^2\U \right) = (\Z/2)^4$. The group
classifying global anomalies is now \beq \Hom \left( \Tor \Omega_{5}^\SO
  (B^2\Z/2_e \times B^2\U_m), \R/\Z \right)\cong (\Z/2)^4\, , \eeq and we
notice that there is an extra factor of $\Z/2$ corresponding to an
extra global anomaly. At the level of characteristic classes, we can
represent these four global anomalies in terms of products of
Stiefel--Whitney classes and cohomology classes of
$H^{\bullet}(B^2\U\times B^2\Z/2;\Z/2)$ that are in degree-5. There are five of them
in total:
\begin{equation} w_2\cup w_3, \quad w_2 \cup \tau_m, \quad w_2 \cup \sq^1u_2, \quad u_2\cup \tau_m
  ,\quad \sq^2\sq^1 u_2,\nn
\end{equation}
where $u_2$ is the unique generator of $H^2(B^2\Z/2;\Z/2)$ (see
Appendix \ref{app:summ-class-results}). However, once pulled back to
an orientable 5-manifold $X$, Wu's relation \cite{Wu1950} tells us
that $\sq^2\sq^1 u_2 = w_2(TX) \cup \sq^1u_2$, so it is not an
independent bordism invariant, and we end up with four invariants, as
expected from the bordism group computation.

\subsection*{The anomaly interplay}

Given an appropriate map $\pi$ between two spectra $MTH$ and $MTH^\prime$, there is an induced pullback map between the corresponding cobordism theories,  $\pi^\ast:H_{I\Z}^{d+2}\left(MTH^\prime \right)\to H_{I\Z}^{d+2}\left(MTH \right)$, that can be used to relate anomalies between the two theories. This idea of `anomaly interplay' has recently been used, for example, to relate local anomalies in 4d $\UU(2)$ gauge theory to Witten's $\SU(2)$ anomaly~\cite{Davighi:2020bvi}, to study anomalies in $\Z/k$ symmetries in 2d~\cite{Grigoletto:2021zyv,Grigoletto:2021oho} with applications to bootstrapping conformal field theories, and to derive anomalies in non-abelian finite group symmetries in 4d~\cite{Davighi:2022icj}. Physically, the idea of `pulling back anomalies' from one symmetry to another is not new, but goes back to Elitzur and Nair's analysis of global anomalies~\cite{Elitzur:1984kr}, following Witten~\cite{WITTEN1983422}.
In all these cases the symmetry type takes the form $\{\Spin \text{~or~} \SO \} \times G^{[0]}$, where $G^{[0]}$ is a 0-form symmetry. In that case, a map of spectra $\pi$ is induced by any group homomorphism $\pi:G\to G^\prime$.

In the present case, where we have theories with 1-form global symmetries, it is straightforward to adapt this notion of anomaly interplay (which is just pullback in cobordism). Again, the crucial fact we use is that the {\em nerves} associated with the 1-form symmetries are themselves just ordinary (topological) groups, between which we can define group homomorphisms.
Letting $\pi:\Z/2 \to \U:(1,-1) \to (1,e^{\ii\pi})$ denote the subgroup embedding, there is an associated map between symmetry types, $\pi:\SO\times |\G^\prime| \to \SO \times |\G|$ and an induced pullback map between the cobordism theories, 
\beq
\pi^\ast: H_{I\Z}^6(MT(\SO\times |\G^\prime|)) \longrightarrow H_{I\Z}^6(MT(\SO\times |\G|) )\, .
\eeq
Note that, since $H_{I\Z}^\bullet$ is a contravariant functor, the map between anomaly theories goes in the opposite direction to the subgroup embedding that we started with. Moreover, there is a pullback diagram for the whole short exact sequence characterizing $H_{I\Z}^6$, which encodes the notion of `anomaly interplay':
\begin{gather} \label{eq:MAX_interplay}
\begin{CD}
0 @>>> (\Z/2)^4 @>>> (\Z/2)^4 @>>> 0 @>>> 0 \\
@. @AAA     @A\pi^\ast AA   @AAA  @.     \\
0 @>>> (\Z/2)^3 @>>> (\Z/2)^3\times \Z @>>> \Z @>>> 0\, .
\end{CD}
\end{gather}
This anomaly interplay diagram can be used to track 't Hooft anomalies in the 1-form global symmetries, through the `integrating in' of a charge-2 boson.

Since all the generators of the cobordism groups can be represented by characteristic classes, we can represent the pullback $\pi^\ast$ by its action on the characteristic classes. The non-trivial action of $\pi^\ast$ is encoded in
\begin{align}
\pi^\ast: \, &\tau_e \cup \tau_m \mapsto u_2 \cup \tau_m, \label{eq:tetm_pullback}\\
 &w_2 \cup \tau_e \mapsto w_2 \cup \sq^1u_2. \label{eq:w2te_pullback}
\end{align}
Trivially, $\pi^\ast$ maps $w_2 \cup \tau_m$ and $w_2 \cup w_3$ to themselves.

The most interesting pullback relation is Eq. (\ref{eq:tetm_pullback}), which says that a $\Z$-valued local anomaly pulls back to a $\Z/2$-valued global anomaly. This is somewhat analogous to the interplay studied in~\cite{Davighi:2020bvi,Davighi:2020uab} between $\UU(2)$ local anomalies and $\SU(2)$ global anomalies for 4d 0-form symmetries, where the maps there corresponded to pulling back exponentiated $\eta$-invariants.

To see how the interplay works in this example involving 1-form symmetries, which is arguably simpler than the story for chiral fermion anomalies, we follow the general methodology set out in Ref.~\cite{Davighi:2020uab}. To wit, we start with a 5-manifold $M_5$ representative of a class in $\Omega_5^\SO(B^2\Z/2_e \times B^2\U_m)$ that is dual to $u_2 \tau_m$ ({\em i.e.} on which $u_2 \tau_m$ evaluates to 1 mod 2). We choose
\beq \label{eq:s2s2s1}
M_5 = S^2_e \times S^2_m \times S^1_\theta,
\eeq
equipped with a $\Z/2$ background 2-bundle that has support only on $S^2_e$, and a $\U_m$ background 2-bundle that has support only on $S^2_m \times S^1_\theta$. Specifically, the $\Z/2$ 2-form connection has non-trivial 2-holonomy round $S_e^2$, {\em viz} $\int_{S^2_e} u_2 =$ 1 mod 2. The $\U_m$ 2-form connection $B_m$ on $S^2_m \times S^1_\theta$ can be written
$B_m = B_m^0 + \frac{\theta}{2\pi} k \omega_m$, $k \in \Z$, where $\omega_m$ is the volume form on $S^2_m$ such that $\int_{S^2_m} \omega_m = 1$, $B_m^0$ is any connection on the $S^2_m$ factor, and $\theta \in [0,2\pi)$ parametrizes $S^1_\theta$. The important thing is the flux relation
\beq
\int_{S^2_m \times S^1_\theta} dB_m = k\int_{S^1_\theta} \frac{d\theta}{2\pi} \int_{S^2_m} \omega_m = k\, .
\eeq
Recalling $\tau_m = [dB_m/2\pi]_\Z$, integrating gives $\langle x_2 \cup \tau_m, [M_5] \rangle = k$ mod 2. When $k$ is odd, $M_5$ is not nullbordant, and the background fields cannot be simultaneously extended to any 6-manifold that $M_5$ bounds.

Now, to show that this global anomaly is the pullback under $\pi^\ast$ of the local anomaly $\tau_e \cup \tau_m$, we simply use the subgroup embedding $\pi:\Z/2 \to \U$ to embed the $\Z/2$ 2-connection in an $\U$ 2-connection $B_e$. One can take
\beq \label{eq:Be5mfd}
B_e = \frac{n}{2} \omega_e, \qquad n \in 2\Z+1, \qquad \int_{S^2_e} \omega_e = 1.
\eeq
The 5-manifold (\ref{eq:s2s2s1}) equipped with these structures $B_e$ and $B_m$ can be regarded as the pushforward in bordism of the $M_5$ that we started with, $\pi_\ast M_5$. This is nullbordant in $\Omega_5^\SO(B^2\U_e \times B^2\U_m)$; it can be realised as the boundary of a six-manifold $M_6 = D^3_e \times S^2_m \times S^1_\theta$, where $D^3_e$ is one half of a 3-sphere $S^3$ that is bounded by $S^2_e$, to which the electric 2-form connection (\ref{eq:Be5mfd}) can now be extended with $\int_{S^3} \frac{dB_e}{2\pi} = 2n$.\footnote{We emphasize that $B_e$ is {\em no longer} a flat connection when extended into the $S^3$ bulk, even though it restricts to a flat connection on the boundary of $D^3_e$.} Using Stokes' theorem, we thus have
\beq
\int_{S^2_e \times S^2_m \times S^1_\theta} \frac{B_e}{2\pi} \wedge \frac{dB_m}{2\pi} = \int_{D^3_e} \frac{dB_e}{2\pi} \int_{S^2_m \times S^1_\theta} \frac{dB_m}{2\pi} = nk\, ,
\eeq
obtaining the same phase from $\tau_e \tau_m$ as we did from $u_2 \tau_m$.

On the other hand, we only need to see that the characteristic class $\tau_e$ reduces to the class $\sq^1u_2$ when we restrict to a flat 2-bundle with mod 2 2-holonomy, in order to show that $w_2 \cup \tau_e$ pulls back to $w_2 \cup \sq^1u_2$. To see this most clearly, it is best to use the language of cochains instead of differential forms. By identifying $\U$ with $\R/\Z$, we represent the $\U$ 2-form gauge field $B_e$ by a real 2-cochain $b_e$. If the gauge field is flat, $\delta b_e$ must be trivial as a cochain valued in $\R/\Z$. In other words, $\delta b_e$ is an integral 3-cochain, whose cohomology class can be identified with $\tau_e$. To use this flat $\U$ 2-cochain to describe a $\Z/2$ 2-form gauge field, we further impose that it must have mod 2 2-holonomy, {\em i.e.} $b_e$ must be half-integral-valued, not just any real cochain. Then $\tilde{b}_e:= 2b_e$ is an integral cochain whose mod 2 reduction defines a cohomology class in $H^2(M_5;\Z/2)$ that coincides with $u_2$. As $\sq^1u_2 = \beta_2(u_2)$, where $\beta_2$ is the Bockstein homomorphism in the long exact sequence for cohomology
\begin{equation}
\ldots \rightarrow H^n(M_5;\Z/2) \rightarrow H^n(M_5;\Z/4) \rightarrow H^n(M_5;\Z/2) \xrightarrow{\beta_2} H^{n+1}(M_5;\Z/2) \rightarrow \ldots \nn
\end{equation}
induced by the short exact sequence $0 \rightarrow \Z/2 \rightarrow \Z/4 \rightarrow \Z/2 \rightarrow 0$,  it can be represented by the mod 2 reduction of $\frac{1}{2} \delta \tilde{b}_e$. But this is exactly the same as $\delta b_e$ which represents $\tau_e$. Therefore, the mod 2 reduction of $\tau_e$ is $\sq^1 u_2$ when we embed the $\Z/2$ 1-form symmetry inside the $\U$ 1-form symmetry.

There will be no further examples of anomaly interplay in the present paper; in particular, we do not consider any examples with non-trivially fibred 2-groups (although see footnote~\ref{foot:interplay} for some comments along these lines).

\section{QED anomalies revisited} \label{sec:QED}

The Maxwell examples in the previous Section exhibited only 1-form
global symmetries. In this Section, we move on to theories with both
0-form and 1-form global symmetries that are fused together in a
non-trivial 2-group structure. 

Quantum electrodynamics (QED) in 4d with certain fermion content
furnishes us with such a theory. Here both the 0-form and 1-form
symmetry groups are $\U$, and the 2-group structure is non-trivial
when there is a mixed 't Hooft anomaly between the global and gauged
0-form $\U$ currents, as was discovered
in~\cite{Cordova:2018cvg}. There is a further possible 't Hooft
anomaly in the 2-group structure that comes from the usual cubic $\U$
anomaly for the 0-form symmetry, but, rather than being a $\Z$-valued
local anomaly, the 2-group structure transmutes this cubic anomaly to a
discrete, $\Z/m$-valued global anomaly~\cite{Cordova:2018cvg}, with
$m$ given by the modulus of the integral Postnikov class of the
2-group. After reviewing the physics arguments for these statements,
we derive them from the cobordism perspective. Our spectral sequence calculations reproduce the order of the finite $\Z/m$ anomaly.

\subsection{From the physics perspective} \label{sec:U1U1-physics}

Consider a system of Weyl fermions coupled to a $\U_a$ gauge group with dynamical gauge field $a$. Assuming that a fermion with unit charge is present, the electric 1-form symmetry is broken completely. This leaves only the magnetic $\U^{[1]}$ 1-form symmetry, with 2-form current $j^m = \star\frac{f}{2\pi}$ as in (\ref{eq:je_jm}), where $f=\dd a$. Given enough Weyl fermions, one can find a global $\U^{[0]}$ 0-form symmetry that does not suffer from the ABJ anomaly. It is possible, however, that upon coupling a background gauge field $A$ to this global 0-form symmetry, there is an operator-valued mixed anomaly that is captured by the anomaly polynomial term
\begin{equation}
\Phi_6 \supset \frac{\kappa}{16\pi^3} f \wedge F \wedge F\, , 
\end{equation}
where $F=dA$ is the field strength for the background gauge field $A$. It describes, via the usual descent procedure, the shift in the effective action under the background gauge transformation $A \mapsto A + \dd \lambda^{(0)}$, with $\lambda^{(0)}$ a $2\pi$-periodic scalar, by
\begin{equation}
\label{eq:op-val-anom}
\delta S = \frac{\ii \kappa}{2}\int_{M_4} \lambda^{(0)} \frac{f \wedge F}{4\pi^2}\, .
\end{equation}
It was realised in Ref.~\cite{Cordova:2018cvg} that we should not interpret this term as an anomaly, but rather as a non-trivial 2-group structure. 

To see why this is the case, we first couple a 2-form background gauge field $B^{(2)}$, which satisfies the usual normalisation condition
\begin{equation}
  \label{eq:H-norm}
  \int_{M_3} \frac{\dd B^{(2)}}{2\pi} \in \Z
\end{equation}
on closed 3-manifolds,
to the magnetic 1-form symmetry via the coupling
\begin{equation}
S_{\text{coupling}} = \frac{\ii}{2\pi} \int_{M_4} f \wedge B^{(2)}.
\end{equation}
Then the potential anomaly \eqref{eq:op-val-anom} can be cancelled by modifying the background gauge transformation for $B^{(2)}$ from an ordinary 1-form gauge transformation $B^{(2)} \mapsto B^{(2)} + \dd \lambda^{(1)}$, that is independent of the 0-form gauge transformation of $A$, to a 2-group gauge transformation
\begin{equation}
\label{eq:2-gp-gauge transformation}
A \mapsto A + \dd \lambda^{(0)}, \qquad B^{(2)} \mapsto B^{(2)} + \dd \lambda^{(1)} + \frac{\hat{\kappa}}{2\pi} \lambda^{(0)} F,
\end{equation}
provided that we identify $\hat{\kappa}$ with $-\kappa/2$.\footnote{This is well-defined because it can be shown that $\kappa$ is always even.} Here the gauge transformation parameter $\lambda^{(1)}$ is a properly normalised $\U$ gauge field. 

This modified transformation mixes the magnetic $\U^{[1]}$ 1-form symmetry with the $\U^{[0]}$ 0-form symmetry, and encodes the 2-group structure at the level of the background fields. In our quadruplet notation, the 2-group is 
\begin{equation}
  \mathbb{G}= (\U,\U, 0,\hat{\kappa}),
\end{equation}
with the Postnikov class  $\hat{\kappa}\in H^3(B\U;\U) \cong \Z$, where we use the fact that $H^3(B\U;\U)$ with {\it continuous} $\U$ coefficients is isomorphic to $H^4(B\U;\Z)\cong \Z$ via the universal coefficient theorem. 
The nerve of such a 2-group is the extension
\begin{equation}
  |\mathbb{G}|=\U^{[0]}\times_{\hat{\kappa}} \U^{[1]}.
\end{equation}
The non-trivial 2-group structure can also be seen without turning on the background gauge fields explicitly. If we write $j$ for the 1-form current associated with $\U^{[0]}$, then the Ward identity takes the non-trivial form
\begin{equation}
\partial^{\mu} j_{\mu}(x) j_{\nu}(y) = \frac{\hat{\kappa}}{2\pi} \partial^{\lambda}\delta^{(4)}(x-y) j^m_{\nu\lambda}(y)\, ,
\end{equation}
where $j^m_{\nu\lambda}$ are the components of the magnetic 2-form current $j^m$. This Ward identity
shows how the 0-form and 1-form currents are fused, realising Eq.~(\ref{eq:current-fusion}).

The 2-group symmetry $\U^{[0]}\times_{\hat{\kappa}} \U^{[1]}$ can still suffer from an 't Hooft anomaly. 
Recall that, when $\hat{\kappa} =0$, the anomaly polynomial for our system of Weyl fermions is
\begin{equation}
\Phi_6 = \frac{\mathcal{A}_3}{6} c_1(F)^3 - \frac{\mathcal{A}_{\text{mixed}}}{24} p_1 (R) c_1(F),
\end{equation}
where $c_1(F)= F/2\pi$ is the first Chern class of the $\U^{[0]}$ bundle, and $p_1(R) = \frac{1}{8\pi^2} \Tr R\wedge R$ is the first Pontryagin class of the tangent bundle, where $R$ is the curvature 2-form. 
Before we continue, it is important to discuss the role of these anomaly coefficients in the (co)bordism context, still for the case $\hat\kappa = 0$. Na\"ively, one might think that the cubic anomaly coefficient $\mathcal{A}_3$ and the mixed $\U$-gravitational anomaly coefficient $\mathcal{A}_{\text{mixed}}$ are the two integers that classify anomalies according to cobordism, {\em i.e.} that $\mathcal{A}_3$ and $\mathcal{A}_{\text{mixed}}$ can be chosen as generators of the group
$\Hom \left(\Omega^{\Spin}_6 \left( B\U \right),\Z\right) \cong \Z \times \Z$.
However, this is not the correct identification because $\mathcal{A}_3$ and $\mathcal{A}_{\text{mixed}}$ are not quite independent. It can be shown that
\begin{equation}
\alpha_1 := c_1(F)^3 \qquad \text{and} \qquad \alpha_2 := \frac{c_1(F)}{6} \left( c_1(F)^2-\frac{1}{4} p_1(R) \right)
\end{equation}
are independent integral cohomology classes.\footnote{That $\alpha_1$ is integral is evident from the definition of $c_1(F)$. To see that $\alpha_2$ is integral, we observe that it is the anomaly polynomial for a Weyl fermion with charge $+1$, and then apply the Atiyah--Singer index theorem.} In terms of these basis generators, we can write $\Phi_6$ as
\begin{equation}
\Phi_6 = \frac{1}{6}\left( \mathcal{A}_3 - \mathcal{A}_{\text{mixed}} \right) \alpha_1 + \mathcal{A}_{\text{mixed}} \alpha_2.
\end{equation}
Again, by the Atiyah--Singer index theorem, $\Phi_6$ must be integral. Since $\alpha_1$ and $\alpha_2$ are integral basis generators, we can deduce that the two integers $(r,s)$ that label  $\Hom \left(\Omega^{\Spin}_6 \left( B\U \right),\Z\right)$ are
\begin{equation}
(r,s) = \left( \frac{1}{6}\left( \mathcal{A}_3-\mathcal{A}_{\text{mixed}} \right), \mathcal{A}_{\text{mixed}} \right).
\end{equation}

Continuing, let's switch the Postnikov class $\hat{\kappa}$ back on, and couple the background 2-form gauge field $B^{(2)}$ to the $\U^{[1]}$ 1-form symmetry. We are free to add a Green--Schwarz counter-term
\begin{equation} \label{eq:GS-term-4d}
S_{\text{GS}} = \frac{i n}{2\pi} \int_{M_4} B^{(2)} \wedge F, \qquad n \in \Z
\end{equation}
to the action.
Under the 2-group transformation \eqref{eq:2-gp-gauge transformation}, the effective action shifts if the anomaly coefficients $\mathcal{A}_3$ and $\mathcal{A}_{\text{mixed}}$ are non-zero, by
\begin{equation}
\delta S = \ii\frac{\mathcal{A}_3 + 6n\hat{\kappa}}{6} \int_{M_4} \lambda^{(0)} \frac{F \wedge F}{4\pi^2} - \ii\frac{\mathcal{A}_{\text{mixed}}}{24}\int_{M_4} \lambda^{(0)} p_1(R) .
\end{equation}
Since one may choose the counterterm coefficient $n$ to be any integer, one sees that $\mathcal{A}_3$ is only well-defined modulo $6\hat{\kappa}$~\cite{Cordova:2018cvg}. It follows that the integer $r=\frac{1}{6}\left( \mathcal{A}_3-\mathcal{A}_{\text{mixed}} \right)$ that we claim classifies the anomaly is not really valued in $\Z$, being well-defined only modulo $|\hat{\kappa}|$. This corresponds to a {\em global} anomaly in the 2-group symmetry, that is valued in the cyclic group
\beq \label{eq:Zm_group}
\Z/m, \qquad m=|\hat\kappa|.
\eeq
We emphasize that, even though $\mathcal{A}_3$ is well-defined modulo $6\hat{\kappa}$ (as was derived in~\cite{Cordova:2018cvg}), the discrete group that classifies the global anomaly has order $|\hat\kappa|$ (and not $6|\hat\kappa|$). The mixed gravitational anomaly remains a $\Z$-valued local anomaly.

\subsection{From the bordism perspective}
\label{sec:QED-bordism-perspective}

In this Section we show how the 't Hooft anomalies afflicting the 2-group global symmetry, that we have just described, can be precisely understood using cobordism. To do so, we need to compute the generalized cohomology groups
\begin{equation}
H^6_{I\Z}\left( MT \left( \Spin \times |\G|\right) \right) , \qquad |\G| =  U(1)^{[0]}\times_{\hat{\kappa}} U(1)^{[1]}\, ,
\end{equation}
which are built from $\Omega_5^\Spin(B|\G|)$ and $\Omega_6^\Spin(B|\G|)$, for each value of the Postnikov class $\hat\kappa$. These abelian groups detect and classify all possible anomalies for this 2-group symmetry type. 

To compute these bordism groups, we follow the general strategy outlined in \S\ref{sec:cobord_2grp}.  We first apply the
cohomological Serre spectral sequence to the fibration
\begin{equation}
  \label{eq:fib1}
  K(\Z,3)\to B|\G|\to K(\Z,2)
\end{equation}
to calculate the cohomology of $B|\G|$ in relevant low degrees. Then we  will convert the
result into homology groups, which are fed into the Atiyah--Hirzebruch
spectral sequence for the fibration  $\text{pt}\to B|\G|\to B|\G|$,
for which the second page is
$E^2_{p,q} = H_p\left(B|\G|;\Omega^\Spin_q(\text{pt})\right)$, to compute the spin bordism.

So, to begin, the $E_2$ page of the cohomological Serre spectral sequence for the
fibration \eqref{eq:fib1} is given by
\begin{equation}
  E_2^{p,q} = H^p\big(K(\Z,2);H^q(K(\Z,3);\Z)\big).
\end{equation}
Using
$H^\bullet(K(\Z,3);\Z) \cong \{\Z,0,0,\Z,0,0,\Z/2,0,\Z/3,\Z/2,\ldots\}$, given in
Table \ref{tab:integral-cohomology-KZ3} of Appendix
\ref{app:summ-class-results}, we can construct the $E_2$ page as shown
in the left-hand side of Fig. \ref{fig:SerreSSU1U1}. In fact, what is
shown there is the $E_4$ page, since the entries are sparse enough
that there are no non-trivial differentials in the region we are
interested in until page $E_4$.
\begin{figure}[h]
  \centering
  \includegraphics[scale=0.7]{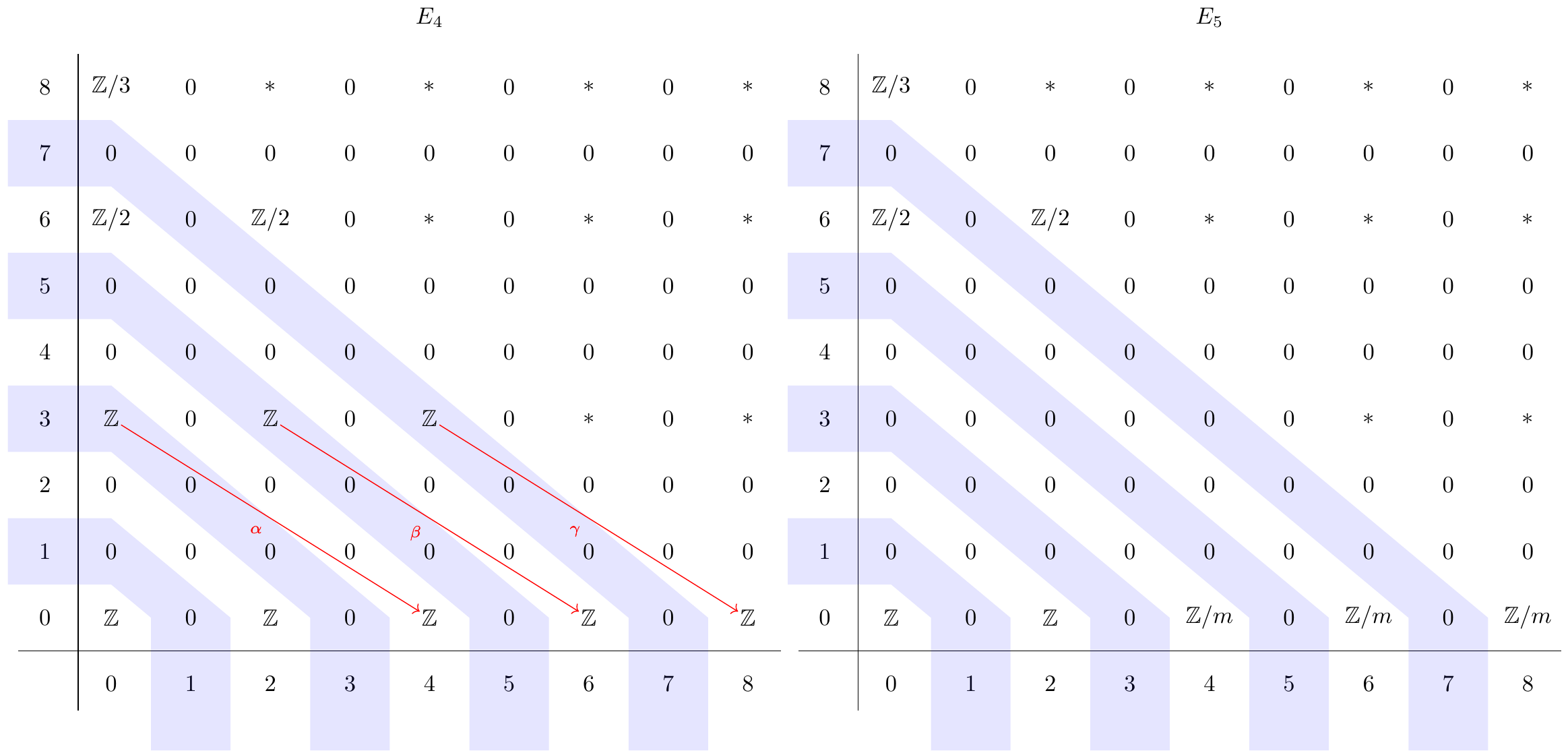}
  \caption{Fourth and fifth pages of the Serre spectral sequence to compute the integral cohomology of the fibration $K(\Z,3)\to B|\G|\to K(\Z,2)$.}
  \label{fig:SerreSSU1U1}
\end{figure}

The differentials $\alpha$, $\beta$, and $\gamma$ shown on the left-hand diagram of
Fig. \ref{fig:SerreSSU1U1} are linear in the Postnikov class
$\hat{\kappa}\in \Z$ (see the Appendix of Ref. \cite{Kapustin:2013uxa}).
More precisely, we write
\begin{equation}
  H^3(K(\Z,3);\Z)\cong \Hom (\Z^{[1]},\Z) \cong \Z \nn
\end{equation}
using the universal coefficient theorem and the fact that
$K(\Z,3) = B^2\U^{[1]} = B^3\Z^{[1]}$. Here, we include the
superscript to emphasise that this $\Z$ comes from our $\U$ 1-form
symmetry part. From this, we can write the entry $E^{0,3}_4$ in the
Serre spectral sequence as $\Hom (\Z^{[1]},\Z)$. Then, the
differential $\alpha$ is given by `contraction' (adopting the terminology
of \cite{Kapustin:2013uxa}) with the Postnikov class
\begin{equation}
  \begin{split}
    \alpha: \Hom (\Z^{[1]},\Z) &\to
    H^4(B\U^{[0]};\Z) \\
    x &\mapsto x \circ \hat{\kappa}
\end{split}
\end{equation}
where we make use of the fact that the Postnikov class is a
cohomology class
\begin{equation}
  \hat{\kappa} \in H^3(B\U^{[0]},\U^{[1]})\cong H^4(B\U^{[0]};\Z^{[1]})\;. \nn
\end{equation}
Similar arguments apply for the differentials $\beta$ and $\gamma$. Thus,
$\alpha$, $\beta$, and $\gamma$ map $1$ to $\pm\hat{\kappa}$, resulting in the
$E_5$ page as shown, where $m:= |\hat{\kappa}|$.  We can then read off the
integral cohomology groups to be
\begin{equation}
  \label{eq:cohom-BG}
  H^\bullet(B|\G|;\Z) \cong \{\Z,0,\Z,0,\Z/m ,0, e(\Z/2,\Z/m),0, e(\Z/6,\Z/m),\ldots\}\, 
\end{equation}
where the notation $e(A,B)$ denotes an extension of $A$ by $B$, {\it
  viz.} a group that fits in the short exact sequence
$B \hookrightarrow e(A,B) \twoheadrightarrow A$. We compute the mod 2 cohomology by the same method
in Appendix \ref{app:mod-2-cohomology}.

Heuristically, one can also argue for the form of $\alpha$ and $\beta$ as
follows ({\it c.f.} Appendix B.6 of Ref. \cite{Lee:2020ewl}). Let us
start with a representative of a generator of the cohomology group
$H^3(K(\Z,3),\Z)\cong \Z$, and ask what it becomes when $K(\Z,3)$ is the
fibre of $B|\G|$.  Given the normalisation condition
(\ref{eq:H-norm}), the 3-form
\begin{equation}
  \label{eq:htilde}
  \tilde h:=\dd B^{(2)}/2\pi
\end{equation}
represents the generator of $H^3(K(\Z,3);\Z)\cong \Z$. This is true when
$K(\Z,3)$ stands on its own. 
However, when we pass to the 2-group $\G$ and take $K(\Z,3)$ to be the fibre of
$B|\G|$, then $\tilde h$ is not gauge-invariant under (\ref{eq:2-gp-gauge
  transformation}) for a general Postnikov class, and cannot be a
representative of any cohomology class for $B|\G|$. We can remedy this
by modifying the definition of $\tilde h$ to
\begin{equation}
  \label{eq:h}
  h:= \tilde h - \frac{\hat \kappa}{2\pi} A \wedge F .
\end{equation}
The trade off is that the gauge-invariant $h$ is no longer closed; instead
\begin{equation} \label{eq:dh}
\dd h = - \frac{\hat{\kappa}}{4\pi^2} F \wedge F =-\hat{\kappa} c_1\cup c_1 ,
\end{equation}
where $c_1:= \frac{F}{2\pi}$ is the first Chern class of the
$U(1)^{[0]}$ bundle, and $c_1 \cup c_1$ can be taken as the generator of
$H^4(K(\Z,2); \Z)$.  The 2-group relation (\ref{eq:dh}) implies that
both the differentials $\alpha$ and $\beta$ in Fig.~\ref{fig:SerreSSU1U1} map
$1$ to $-\hat{\kappa}$.\footnote{The extra minus sign comes from the convention used to
  define the Postnikov class.}

From this cohomological starting point, we can proceed to compute the spin bordism groups. 
We find it is helpful to split the discussion into the cases where $m$ is even or odd, for which the bordism group calculations are tackled using different tricks. As a warm up, we first consider the simplest case where $m$ is zero, corresponding to a 0-form and 1-form symmetry that do not mix.

\subsubsection{Zero Postnikov class}

We first consider the trivial toric 2-group where $B|\G|$ is a simply a product space
$K(\Z,2) \times K(\Z,3)$. As $B|\G|$ is a product, we can use the K\"unneth
theorem to determine the homology groups of $K(\Z,2) \times K(\Z,3)$ from
the homology groups of each factor, given by Eq.~\eqref{eq:integral-cohomology-of-BU1} and  Table
\ref{tab:integral-cohomology-KZ3} in Appendix~\ref{app:summ-class-results}. We obtain
\begin{equation}
H_{\bullet}(K(\Z,2)\times K(\Z,3);\Z) \cong \left\{ \Z,0,\Z,\Z,\Z,\Z \times \Z/2, \Z, \Z \times \Z/6, \Z \times \Z/2,\ldots \right\}
\end{equation}
and construct the second page of the AHSS as shown in
Fig.~\ref{fig:AHSS-U1xBU1}, with non-trivial differentials on the
$E^2$ page indicated by coloured arrows. 

\begin{figure}[h]
  \centering
  \includegraphics[scale=0.8]{./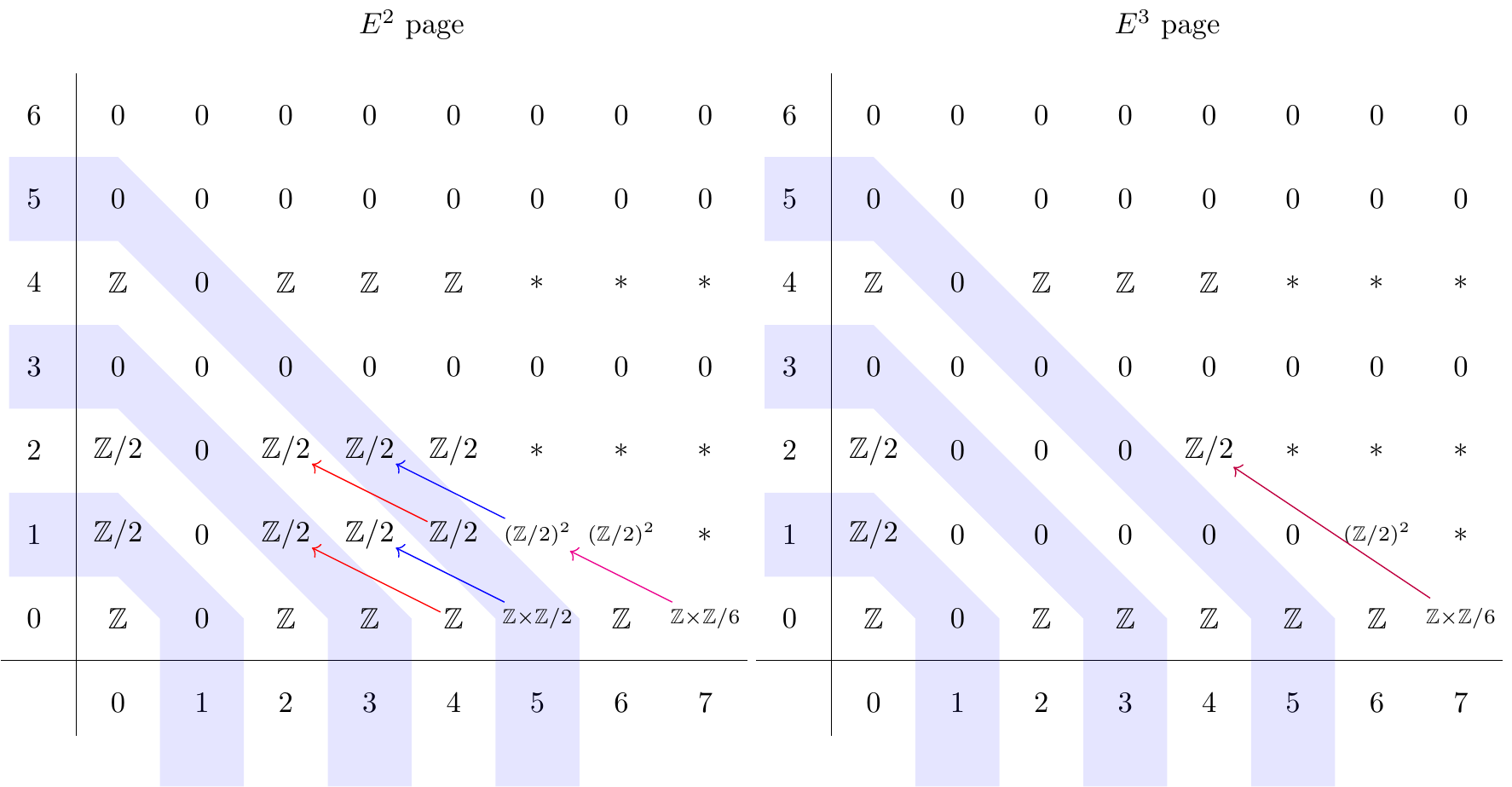}
  \caption{The $E^2$ and $E^{3}$ pages for the Atiyah--Hirzebruch spectral sequence for $\Omega^\Spin_{\bullet}(K(\Z,2) \times K(\Z,3))$. }
  \label{fig:AHSS-U1xBU1}
\end{figure}

The differentials
on the zeroth and the first rows are the composition
$\widetilde{\text{Sq}^2}\circ \rho_2$ and $\widetilde{\text{Sq}^2}$,
respectively, where $\rho_2$ is reduction modulo 2. To compute the action
of these differentials, we need to know how the Steenrod squares act
on the mod 2 cohomology ring of the product $K(\Z,2)\times K(\Z,3)$. From
the mod 2 cohomology rings of $K(\Z,2)$ and $K(\Z,3)$ given by
\eqref{eq:mod2-cohomology-of-BU1} and
\eqref{eq:mod2-cohomology-ring-KZ3}, we obtain
\begin{equation}
H^{\bullet}(K(\Z,2)\times K(\Z,3);\Z/2) \cong \Z/2[c_1,\tau_3, \sq^2\tau_3,\sq^4\sq^2\tau_3,\ldots],
\end{equation}
where $c_1$ and $\tau_3$ are the unique generators in degree $2$ and
$3$, respectively.  In our diagram, the red maps correspond to
$\text{Sq}^2 c_1 = c_1^2$, the blue maps to $\text{Sq}^2 \tau_3$ being a
generator of $H^{\bullet}(K(\Z,3);\Z/2)$, and the magenta maps to
$\text{Sq}^2(c_1 \tau_3) = c_1^2 \tau_3$. The differential depicted in the
$E^3$ page from $E_{7,0}^3$ in the diagram must be non-trivial by
comparing $\Omega^{\Spin}_{6}(K(\Z,2) \times K(\Z,3))$ with the result computed
with the Adams spectral sequence. We can then read off the bordism
groups in lower degrees, which we collect below in Table
\ref{tab:abelian-2grp-bordism-odd}. We piece together the cobordism
group
\begin{equation}
H_{I\Z}^{6}(MT(\Spin \times \U \times B\U)) \cong \Z \times \Z,
\end{equation}
which classifies anomalies for this symmetry type. The pair of
$\Z$-valued local anomalies anomalies just corrresponds to the usual
cubic $c_1^3$ and mixed gravitational $c_1 p_1$ anomalies associated
with the $\U$ 0-form symmetry. The presence of the 1-form symmetry
here plays no role in anomaly cancellation for this dimension.


\subsubsection{Even Postnikov class}

Now we turn to the case where $m=|\hat\kappa|$ is a non-zero even integer. 
Continuing from the cohomology calculation above, summarised in Eq.~(\ref{eq:cohom-BG}), there are two options for the extension
$e(\Z/2,\Z/m)$; either the trivial extension $\Z/2 \times \Z/m$ or the
non-trivial one $e(\Z/2, \Z/m) \cong \Z/(2m)$, which are non-isomorphic. By comparing the mod 2
cohomology calculated from applying the universal coefficient theorem
to \eqref{eq:cohom-BG} and the one calculated directly from the Serre
spectral sequence, one can show that the correct extension is the direct
product $\Z/2 \times \Z/m$.  The integral homology for $B|\G|$ is then
\begin{align}
  H_{\bullet} \left( B|\G|;\Z \right) &\cong \left\{ \Z,0,\Z,\Z/m, 0, \Z/m \times \Z/2, 0, e(\Z/6,\Z/m), \ldots \right\} \label{eq:homZ-BG}
\end{align}
and the
mod 2 homology is 
\begin{equation}
H_{\bullet}(B|\G|;\Z/2) \cong \left\{ \Z/2, 0, \Z/2, \Z/2, \Z/2, \Z/2\times \Z/2 , \ldots \right\}\, ,
\end{equation}
The $E^2$ page of the AHSS is then given by
Fig.~\ref{fig:AHSS-evenm}. We observe that there are 
non-vanishing differentials already on the $E^2$ page (which will
not be the case when we turn to the case of odd $m$). These
non-trivial $E^2$ differentials make the spectral sequence easier to compute, as follows.

\begin{figure}[htbp]
\centerline{\includegraphics[scale=0.85]{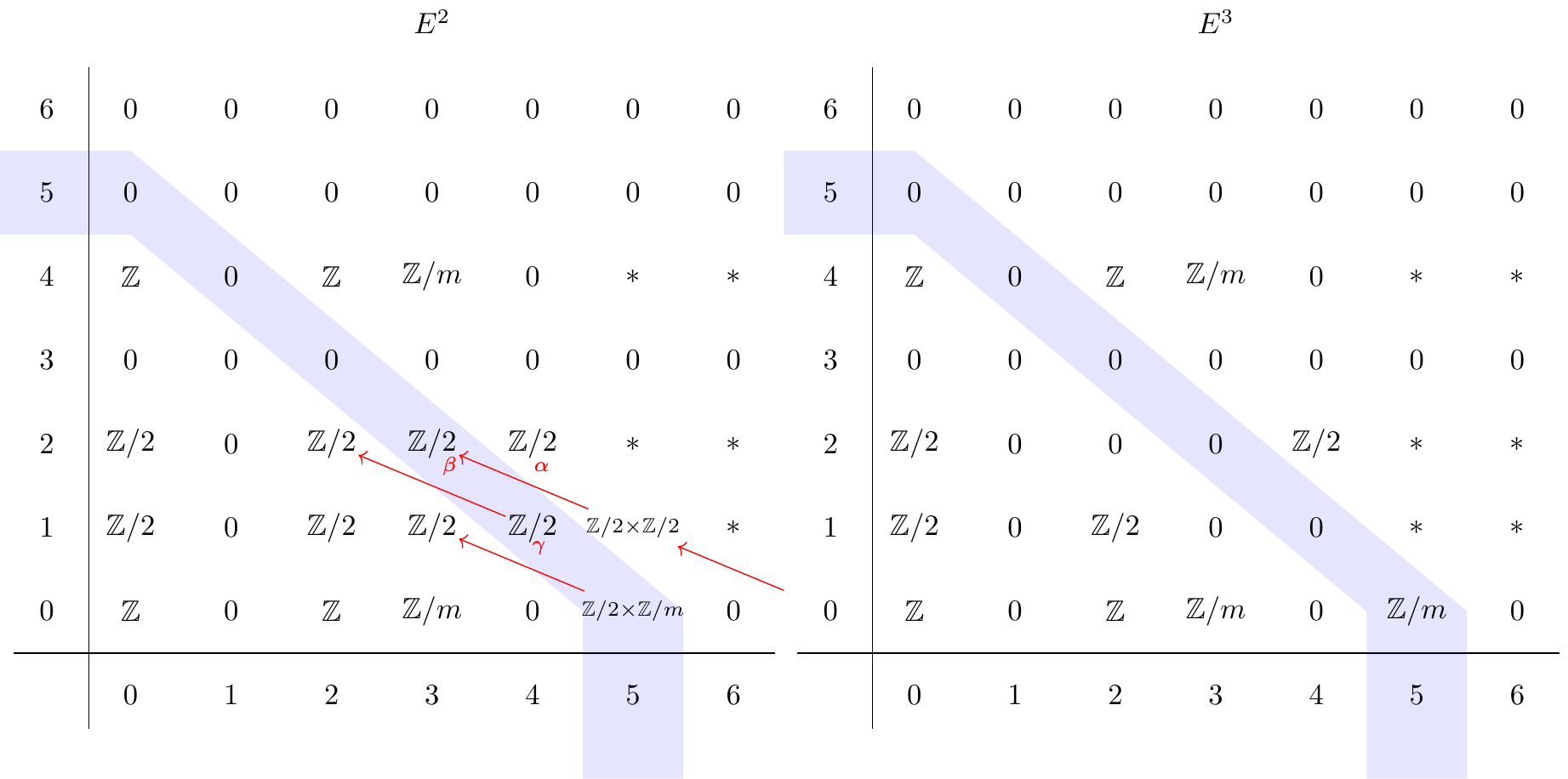}}
\caption[]{\label{fig:AHSS-evenm} The $E^2$ page of the AHSS for the
  fibration $\text{pt} \to B|\G| \to B|\G|$ when $\G$ is the 2-group
  $\U^{[0]} \times _{\hat{\kappa}} \U^{[1]}$ with $m:= |\hat{\kappa}|$ even.}
\end{figure}

The differentials $\alpha$ and $\beta$ on the second page are duals
$\widetilde{\sq^2}$ of the Steenrod square $\sq^2$, while the
differential $\gamma$ is a mod 2 reduction followed by the dual of
$\mathrm{Sq}^2$ \cite{teichner1993signature}. The Steenrod square dual
to $\beta$ sends a unique generator in $H^2(B|\G|;\Z/2)$ to a unique
generator in $H^4(B|\G|;\Z/2)$ ({\it c.f.} Appendix
\ref{app:mod-2-cohomology}), so $\beta$ must be the non-trivial map from
$\Z/2$ to $\Z/2$, killing off both factors. Similarly, the Steenrod
square dual to $\alpha$ acts on the unique generator of
$H^3(B|\G|;\Z/2)$ that comes from the generator
$\tau_3\in H^3(K(\Z,3);\Z/2)$ of the fibre, which we will also label by
$\tau_3$. The image is a generator of $H^5(B|\G|;\Z/2)$ that comes from
$\text{Sq}^2 \tau_3$ of $H^5(K(\Z,3);\Z/2)$. It generates the $\Z/2$
factor in the mod 2 cohomology that is a reduction from the $\Z/2$
factor in the integral cohomology, and not the $\Z/m$ factor since it
is the $\Z/2$ factor that arise from the fibre's contribution.
Therefore, both $\alpha$ and $\gamma$ must be non-trivial, with
$\ker \gamma \cong \Z/m$, as indicated in the $E^3$ page in
Fig. \ref{fig:AHSS-evenm}.

Before continuing, we pause here to emphasize the importance of using
(spin) cobordism, rather than just cohomology, to study 't Hooft anomalies in these theories.
At the level of cohomology, the existence of a non-trivial $\Z/2$-valued cohomology class 
$\text{Sq}^2 \tau_3 \in H^5(K(\Z,3);\Z/2)$ might suggest there is a non-trivial SPT phase on
5-manifolds $M_5$, or equivalently a non-trivial anomaly theory for the corresponding 4d theory,
with partition function
\begin{equation}
Z = \exp \left( \pi\ii \int_{M_5} \text{Sq}^2 \tau_3 \right),
\end{equation}
where $\text{Sq}^2\tau_3$ is pulled back to $M_5$. But, one can use Wu's relation
to trade the Steenrod square operation for a Stiefel--Whitney class~\cite{Wu1950}, {\em viz.}
\begin{equation}
  \label{eq:w2t3-assumed}
  Z =  \exp \left( \pi\ii \int_{M_5} w_2(TM_5)\cup \tau_3 \right).
\end{equation}
If we now restrict to $M_5$ being a spin manifold, then $w_2(TM_5)$ is trivial and we immediately
learn that this SPT phase is trivial. Of course, this `trivialisation' of the SPT phase corresponding
to the cohomology class in $H^5(K(\Z,3);\Z/2)$ is automatically captured by the spectral sequence computation for spin bordism,
by the non-triviality of the map $\gamma$ on turning from the second to the third page of the AHSS.

Continuing with the bordism computation, we now find that
the entries in the range $p+q \le 5$ stabilise on this $E^3$ page, whence we
can read off the spin bordism groups for 2-group symmetry
$\G = \U^{[0]} \times_{\hat{\kappa}} \U^{[1]}$ with even Postnikov class
$\hat{\kappa}$ up to degree-5, which are listed in Table
\ref{tab:abelian-2grp-bordism-odd}. There is still an undetermined
group extension in $\Omega^{\Spin}_3(B|\G|)$ but we will argue from the
physics point of view in Section \ref{sec:2d} that it must be the
non-trivial extension $\Z/(2m)$.  In summary, our results are those
given in Table~\ref{tab:abelian-2grp-bordism-odd}.

\begin{table}[h]
  \centering
  \begin{tabular}{|c|ccccccc|}
\hline
    $i$ & $0$ & $1$ & $2$ & $3$ & $4$ & $5$ &$6$\\
    \hline
    $\Omega^{\Spin}_i(B|\G|)$, $\hat{\kappa}\neq 0$ & $\Z$ & $\Z/2$ & $\Z \times \Z/2$ & $\Z/(2|\hat{\kappa}|)$ & $\Z$ & $\Z/|\hat{\kappa}|$ & $\Z \times \text{Tors}$  \\
    $\Omega^{\Spin}_i(B|\G|)$, $\hat{\kappa}=0$ & $\Z$ & $\Z/2$ & $\Z \times \Z/2$ & $\Z$ & $\Z\times \Z$ & $\Z$ & $\Z\times \Z \times \text{Tors}$  \\
\hline
  \end{tabular}
  \caption{Spin bordism groups for the twisted 2-group symmetry
    $\G=\U^{[0]}\times_{\hat{\kappa}}\U^{[1]}$. Here,
    $\text{Tors}$ denotes an undetermined pure torsion part. For
    comparison, we also include the bordism groups in the case
    $\hat\kappa=0$, which corresponds to a direct product of 0-form and
    1-form symmetries. One can thus see how, for both 2d and 4d
    quantum field theories, local anomalies in the untwisted case
    morph into global anomalies when the Postnikov class $\hat\kappa$ is
    turned on, thanks to a Green--Schwarz-like mechanism.  }
  \label{tab:abelian-2grp-bordism-odd}
\end{table}

\subsubsection{Odd Postnikov class} \label{ss:odd_post}

We now turn to the case where $m$ is odd, for which the spin bordism computation turns out to be rather more difficult, technically.
Firstly, when $m$ is odd, the extension $e(\Z/2, \Z/m)$ is unambiguously
$\Z/m \times \Z/2$ (which is, for odd $m$, isomorphic to $\Z/(2m)$). The integral homology of $B|\G|$ is as written in Eq.~(\ref{eq:homZ-BG}), and the mod 2 homology groups are now
\begin{align}
  H_{\bullet} \left( B|\G|;\Z/2 \right) &\cong \left\{ \Z/2, 0, \Z/2, 0, 0, \Z/2, \ldots \right\} \label{eq:homZ2-BG}
\end{align}
by the universal coefficient theorem. Now we feed these results into the
AHSS for the trivial point fibration  $\text{pt}\to B|\G|\to B|\G|$.
The $E^2$ has no non-trivial differentials in the range of interest. 

\begin{figure}[htbp]
\centerline{\includegraphics[scale=0.85]{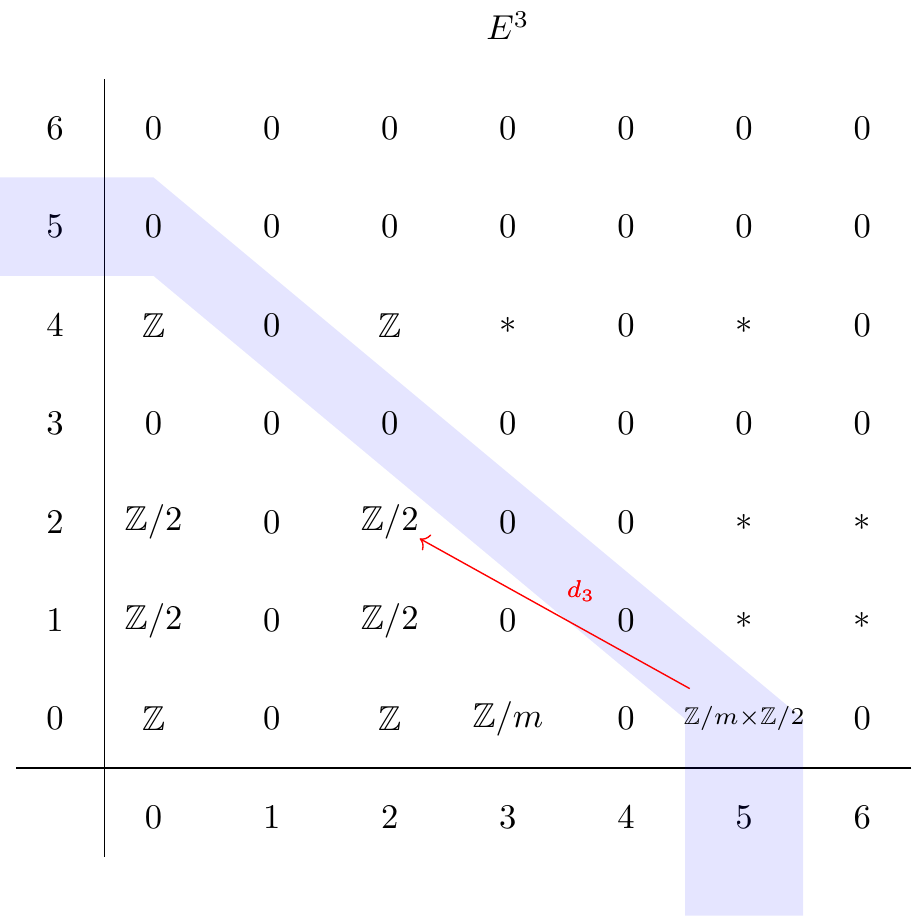}}
\caption[]{The $E^3$ page of the AHSS for the fibration
  $\text{pt} \to B|\G| \to B|\G|$ when $\G$ is the 2-group
  $\U^{[0]} \times _{\hat{\kappa}} \U^{[1]}$ with
  $m := |\hat{\kappa}|$ odd.\label{fig:AHSS-oddm} }
\end{figure}

The $E^3$ page is shown in
Fig. \ref{fig:AHSS-oddm}, 
with a single potentially non-trivial differential in the range of
interest, $\dd_3: E^3_{5,0} \to E^3_{2,2}$. If
this differential were trivial, then $\Omega^{\Spin}_5(B|\G|)$ would equal $\Z/m\times \Z/2$, and $\Omega^{\Spin}_4(B|\G|)$ would equal $\Z \times\Z/2$. 
The factor of $\Z/m$ in $\Omega^{\Spin}_5(B|\G|)$ would correspond to the global anomaly that we described in the previous Section, which we know to be valued in $\Z/m$ as in (\ref{eq:Zm_group}). The extra factor of $\Z/2$ would presumably correspond to a further global anomaly that we have not seen so far by physics arguments.
On the other hand, if this differential were the non-trivial map, then 
 $\Omega^{\Spin}_5(B|\G|)$ would equal $\Z/m$, agreeing precisely with the physics account, and $\Omega^{\Spin}_4(B|\G|)$ would just equal $\Z$.

It turns out that this all-important $d_3$ differential is {\em non}-trivial.
But being a differential on the third page, there are no straightforwardly-applicable formulae (analogous to the formulae in terms of Steenrod squares that are available on the second page) that we can use to compute it directly. Indeed, our usual AHSS plus ASS techniques are not sufficient to constrain this differential. This differential can nonetheless be evaluated using different arguments,\footnote{We are very grateful to Arun Debray for sharing this ingenious argument to compute this differential. Appendix~\ref{app:Arun} is written by Arun Debray. }
which we present in Appendix~\ref{app:Arun}. The gist of the argument is as follows.

The central character is a long exact sequence in bordism groups,\footnote{See Appendix E of Ref~\cite{Debray:2023yrs} for a related theorem. The proof of the theorem used in this paper, and that of Ref~\cite{Debray:2023yrs}, will appear in future work~\cite{DebrayNEW}.} analogous to the Gysin long exact sequence in ordinary homology, of the form
\begin{equation}
\dots \to \Omega^{\Spin}_d (S(V)) \to \Omega^{\Spin}_d(B|\G|) \to \Omega^{\Spin}_{d-2}((B|\G|)^{V-2}) \to \Omega^{\Spin}_d (S(V)) \to \dots
\end{equation}
Here $V \to B|\G|$ is the pullback bundle (which is rank 2) of the tautological line bundle $L \to B\U^{[0]}$ along the quotient map $q:B|\G| \to B\U^{[0]}$, $S(V) \to B|\G|$ is the associated sphere bundle of $V$, and $(B|\G|)^{V-2}$ is the Thom spectrum of the virtual bundle. This long exact sequence in bordism is extremely powerful: by proving that the groups  $\Omega^{\Spin}_5 (S(V))$ and $\Omega^{\Spin}_{3}((B|\G|)^{V-2})$ both lack free and 2-torsion summands, one learns that the group $\Omega^{\Spin}_5(B|\G|)$ in the middle, which is our bordism group of interest, also lacks 2-torsion. The $\Z/2$ factor discussed above is therefore absent (from which we learn that it must be killed by the differential $d_3$ which is therefore non-trivial).

Putting things together, we can thus extract all the spin bordism groups for 2-group symmetry
$\G = \U^{[0]} \times _{\hat{\kappa}} \U^{[1]}$, with odd Postnikov class
$\hat{\kappa}$, up to degree-5. 

\subsubsection{The free part}

When $m$ is non-trivial, regardless of its parity, we can also show
that the free part of $\Omega^{\Spin}_6(B|\G|)$ is given by $\Z$ coming
from the entry $E^r_{2,4}$. First, observe that the free part of the
$\Omega^{\Spin}_6(B|\G|)$, if there is one at all, can only come from this
entry as any other entry from the diagonal $p+q=6$ is either trivial
or pure torsion. Next, we need to show that $E^{\infty}_{2,4}$ is
non-trivial. The only differential that could kill it along the way is
$d_5:E^5_{7,0} \to E^5_{2,4}$. But this differential must be trivial
because $E^5_{7,0}$ is pure torsion (as $E^2_{7,0}$, given by
$H_7(B|\G|;\Z) \cong e(\Z/6,\Z/m)$ in \eqref{eq:homZ-BG}, is pure
torsion), and $\Hom(G,\Z) = 0$ if $G$ is pure torsion. Therefore,
there is one free factor in the diagonal $E^{\infty}_{p,q}$ with
$p+q=6$, and we obtain
\begin{equation}
\Hom\left( \Omega^{\Spin}_6(B|\G|), \Z \right) \cong \Z .
\end{equation}

From these results, we piece together the cobordism group that classifies anomalies for this global 2-group symmetry, valid for even and odd $m \neq 0$:
\beq
\boxed{
H^6_{I\Z}\left( MT \left( \Spin \times |\G|\right) \right) \cong \Z/m \times \Z\, .}
\eeq
In terms of the anomaly coefficients, the space of anomaly theories is classified by
\beq
(r,s) = \left(\frac{1}{6}\left(\mathcal{A}_3 - \mathcal{A}_{\mathrm{mixed}} \right) \mod m,  \mathcal{A}_{\mathrm{mixed}}  \right)\in \Z/m \times \Z\, , \qquad m=|\hat\kappa|\, ,
\eeq
in agreement with the results of the previous Subsection. To reiterate, the local anomaly associated with $c_1^3$ ({\em i.e.} whose coefficient is the sum of $U(1)^{[0]}$ charges cubed) becomes a global anomaly when the 0-form and 1-form symmetries are fused into a 2-group defined so that the mixed 't Hooft anomaly ($\sim f \wedge F \wedge F$) vanishes. The mixed gravitational anomaly associated with $p_1 c_1$ remains a $\Z$-valued local anomaly.

The result for the spin bordism group in degree-3 in Table~\ref{tab:abelian-2grp-bordism-odd} is also interesting, implying the presence of an novel global anomaly structure for the corresponding symmetry in two dimensions. We discuss this from a physics perspective in \S \ref{sec:2d}.

\section{Abelian 2-group enhancement in two dimensions}
\label{sec:2d}

In this Section, we consider 2d avatars of the 4d 2-group anomalies
that we have been discussing. In this lower-dimensional version we
will find some interesting differences.

In \S \ref{sec:QED-bordism-perspective}, we computed the spin bordism
groups for the 2-group symmetry
$\G=\U^{[0]} \times_{\hat{\kappa}} \U^{[1]}$. In particular, we obtained
$$\Omega_3^\Spin(B|\G|) \cong \Z/(2m)$$
where $m=|\hat{\kappa}|$. With the fourth bordism group given by $\Z$, we can
put things together to get the cobordism group
\begin{equation} H^4_{I\Z}\left( M \left( \Spin \times \G\right) \right) \cong
  \Z/(2m) \times \Z\, .
\end{equation}
This is similar to the result for $H^6_{I\Z}$ that was pertinent to
anomalies in 4d, except for the fact that the global anomaly is here
`twice as fine' as the $\Z/m$ anomaly in 4d.\footnote{The
  $\Z$-valued local anomaly here is simply the gravitational anomaly
  associated with $-\frac{1}{24}p_1(R) \Tr_{\bf{R}} 1 \subset \Phi_4$ in the
  degree-4 anomaly polynomial, which can always be cancelled by adding
  neutral fermions and so plays no further role in our discussion.}
That extra division by 2 will correspond to a subtle new global
anomaly associated with the spin structure, which will be our main
interest in this Section.

Before we discuss the global $\Z/(2m)$ anomaly in more depth, let us first discuss the physics interpretation of a 2-group symmetry $\G$ in two dimensions, which is rather different to the 4d case.
Recall that in 4d, the 1-form symmetry $\U^{[1]}$ was identified with the magnetic 1-form symmetry, with 2-form current proportional to $\star f$. But in general $d \geq 3$ spacetime dimensions, the magnetic symmetry is a $(d-3)$-form symmetry, and so there is no such symmetry in 2d.
There is, however, a trivially conserved `1-form symmetry' $\U^{[1]}_\text{top}$ whose 2-form current is simply $j_\text{top} = \mathrm{vol}_2$, the volume form. This symmetry does not act on any line operators in the theory and so one should not think of it as a physical 1-form symmetry -- but it will play a role in what follows.

Now suppose there is a global 0-form symmetry $G^{[0]}$ with background gauge field $A$ with curvature $F$. If we first cancel the pure gravitational anomaly by adding neutral fermions, the local anomalies in $G^{[0]}$ are captured by the degree-4 anomaly polynomial
\beq
\Phi_4 = \Tr_{\bf{R}} \left(\frac{F \wedge F}{8\pi^2} \right)\, .
\eeq
The usual 't Hooftian interpretation of $\Phi_4 \neq 0$ would be that, if one tries to gauge $G^{[0]}$, the anomaly `breaks' the symmetry $G^{[0]}$ in the quantum theory. However, Sharpe re-interprets $\Phi_4 \neq 0$ as indicating a weaker 2-group symmetry structure in the 2d quantum theory~\cite{Sharpe:2015mja}, corresponding to the extension
$B\U \hookrightarrow \G \twoheadrightarrow G^{[0]}$.
In other words, in 2d one can invoke the auxiliary $\U$ 1-form symmetry to trade an anomaly in a 0-form symmetry for a 2-group structure. This is of course just a simpler version of the 4d situation described in \S \ref{sec:U1U1-physics}, in which a mixed 't Hooft anomaly corresponding to a term $\sim f \wedge F \wedge F \supset \Phi_6$ was re-interpreted as signalling a weaker 2-group symmetry.

But in this Section we will see that, in the case that one needs a spin structure to define the field theory, it is {\em not always possible} to completely absorb the 't Hooft anomaly associated with $\Phi_4$ by a well-defined 2-group structure. Rather, there can be a residual $\Z/2$-valued anomaly left over, which one should interpret as an 't Hooft anomaly in the 2-group symmetry itself.

\subsection{`Spin structure anomalies' in two dimensions} \label{sec:spin-struct-anom-2d}

To see how this works, let us now be more precise and set $G^{[0]}=\U^{[0]}$. Letting $\{q_i\}$ denote the $\U^{[0]}$ charges of a set of (say, left-handed) chiral fermions, the anomaly polynomial is
\beq
\Phi_4 = \frac{1}{2} \mathcal{A}_{2}\, c_1 \cup c_1\, , \qquad \mathcal{A}_{2}:=\sum_i q_i^2\, ,
\eeq
where $c_1=F/2\pi$ is the first Chern class of the $\U^{[0]}$ bundle. By inflow, one can use the associated Chern--Simons form $I_3 = \frac{\mathcal{A}_{2}}{8\pi^2}A\wedge F$ to compute the variation of the effective action under the background gauge transformation $A\to A+ d\lambda^{(0)}$,
\beq
\delta S_0 = -\ii\frac{\mathcal{A}_{2}}{4\pi} \int_{M_2} \lambda^{(0)} F\, .
\eeq
Now, let us try to implement the philosophy above, and trivialise this anomaly by promoting $\U^{[0]}$ to a 2-group global symmetry by fusing with  $\U^{[1]}_\text{top}$. We turn on a background 2-form gauge  field $B^{(2)}$ that couples to the trivially conserved current $j_\text{top}= \mathrm{vol}_2$, via the coupling
\beq \label{eq:2dGS}
S_{\text{coupling}} = \ii \int_{M_2} \star j_\text{top} B^{(2)} = \ii\int_{M_2} B^{(2)}, 
\eeq
because $\star \mathrm{vol}_2=1$. 

We consider the $\U^{[0]}$ and $\U^{[1]}_\text{top}$ symmetries to be fused via the toric 2-group structure $\G=\U^{[0]} \times_{\hat{\kappa}} U(1)^{[1]}_{\text{top}}$, where  $\hat\kappa \in \Z$ is the Postnikov class. 
This prescribes the by-now-familiar transformation (\ref{eq:2-gp-gauge transformation}) on the background gauge fields, under which
\beq
\delta S_{\text{coupling}} = \ii\frac{\hat\kappa}{2\pi} \int_{M_2} \lambda^{(0)} F\, .
\eeq
We see that the local anomaly associated with the $F\wedge F$ term in $\Phi_4$ is completely removed iff
\beq \label{eq:SpinAnom}
\hat\kappa = \frac{\mathcal{A}_2}{2}.
\eeq
But, since the Postnikov class $\hat\kappa$ is neccessarily integral, this is possible only when $\mathcal{A}_2$ is an even integer.
Thus, there is an order 2 anomaly remaining when $\mathcal{A}_2$ is odd.

This chimes perfectly with our computation of the bordism group $\Omega_3^\Spin(B|\G|)$ in \S \ref{sec:QED-bordism-perspective}, that we quoted at the beginning of this Section, which implies that there is in general a mod $2|\hat\kappa|$ global anomaly. If one imagines fixing the 2-group structure, in other words fixing an integer $\hat\kappa$, then one can vary the fermion content and ask whether there is a global 't Hooft anomaly for the 2-group $\G$. We can think of the coupling term (\ref{eq:2dGS}) as a Green--Schwarz counterterm, which effectively shifts the anomaly coefficient by
\beq
\mathcal{A}_2 \to \mathcal{A}_2 + 2n\hat\kappa\, , \qquad n \in \Z, 
\eeq
and so the residual anomaly is clearly 
\beq
\mathcal{A}_2 \mod 2m\,, \qquad m = |\hat\kappa|.
\eeq
This is the $\Z/(2m)$-valued global anomaly in the 2-group symmetry that is detected by $\Omega_3^\Spin(B|\G|)$. Even if we choose the minimal value for $\hat\kappa$, a mod 2 anomaly persists iff
\beq
\sum_i q_i^2 = 1 \mod 2 \implies N_o = 1 \mod 2\, ,
\eeq
where $N_o$ is the number of fermions with odd $\U^{[0]}$ charges.

One can offer a different perspective on this anomaly by thinking about a generator for $\Omega_3^\Spin(B|\G|)$, which makes it more transparent how this mod 2 anomaly is related to the requirement of a spin structure. If we look back at the AHSS in Fig.~\ref{fig:AHSS-oddm}, we see that the factor of $\Z/2$ that ends up in $\Omega_3^\Spin(B|\G|)$ comes from the $E^2_{2,1}$ element, which stabilizes straight away from the $E^2$ page. Since $E^2_{2,1} = H_2\left(B|\G|; \Omega_1^\Spin(\pt)\right)$, this suggests a generator for this $\Z/2$ factor can be taken to be a mapping torus 
\beq
M^3 = S^2 \times S^1, 
\eeq
with $c_1(F) \in 2\Z+1$ on the $S^2$ factor, corresponding to an odd-charged monopole, and the non-bounding spin structure on the $S^1$, corresponding to a non-trivial element  of $\Omega_1^\Spin(\pt) \cong \Z/2$. This would suggest that the transformation by $(-1)^F$, which counts fermion zero modes, is anomalous on $S^2$ with an odd-charged monopole configuration for the background gauge field.\footnote{We do not believe, however, that this anomaly can be detected from the torus (in this case, just a circle) Hilbert space, using the methods set out in Ref.~\cite{Delmastro:2021xox}. This is because a system with one Weyl fermion with unit $\U$ charge (together with a Weyl fermion of opposite chirality to cancel the gravitational anomaly) will have an even number of Majorana zero modes, which allows construction of a $\Z/2$-graded Hilbert space even in the sector twisted by $-1\in \U$. (The failure to construct a $\Z/2$-graded Hilbert space is one sign of anomalies involving the spin structure, as studied in~\cite{Delmastro:2021xox}.)}

We can of course see this from an elementary calculation. For our set of left-handed chiral fermions with charges $\{q_i\}$, the index of the Dirac operator on $S^2$, which recall is $\mathrm{Ind}(i\slashed{D}_2) = n_L - n_R$ the number of (LH minus RH) zero modes of the Dirac operator, is equal to
\beq
\mathrm{Ind}(i\slashed{D}_2) = \int_{S^2} \Phi_2 = \int_{S^2} \Tr_{\bf{R}} \left(\frac{F}{2\pi}\right) = c_1(F) \sum_i q_i
\eeq
by the 2d Atiyah--Singer index theorem. Thus, the total number of zero modes $N:=n_L + n_R$, which is congruent to $n_L - n_R$ mod 2, satisfies
\beq
N = c_1(F) \sum_i q_i \mod 2\, .
\eeq
Choosing $c_1(F)$ to be odd on $S^2$, as we are free to do, we see that $(-1)^F$, which counts these zero modes, flips the sign of the partition function when 
\beq
\sum_i q_i = 1 \mod 2 \implies N_o = 1 \mod 2.
\eeq
For such fermion content, the 2-group symmetry and $(-1)^F$ are therefore equivalently anomalous. (From the 2-group perspective, the anomalous transformation corresponds simply to the element $e^{\ii \pi} \in \U^{[0]}$.\footnote{
The existence of this global $\Z/2$-valued anomaly, despite the fact the anomalous $\U$ gauge transformation is clearly connected to the identity, further evidences the fact that global anomalies do not require the existence of `large gauge transformations' captured by, say, $\pi_d(G)$ (when restricting the spacetime topology to be a sphere). This disagreement between homotopy- and bordism-based criteria for global anomalies was discussed in Ref.~\cite{Davighi:2020kok}. In this case, the global anomaly arises because an otherwise local anomaly is only incompletely dealt with by the Green--Schwarz mechanism.
})

This should be contrasted with the situation in four dimensions that we examined in \S \ref{sec:U1U1-physics}. In that case, the corresponding formula for the number of fermion zero modes, on say $M_4 = S^4$, is $\mathrm{Ind}(i\slashed{D}_4) = \frac{1}{2} c_1^2 \sum_i q_i^2$. But, for a spin 4-manifold, the Atiyah--Singer index theorem implies that $c_1^2$ is an {\em even} integer, and moreover cancelling the local mixed gravitational anomaly $\implies \sum_i q_i = 0 \implies \sum_i q_i^2 = 0\mod 2$. The index and therefore the number of zero modes is then necessarily even, meaning there is no analogous anomaly in $(-1)^F$. This is why the spin bordism calculation exposes `only' a mod $|\hat\kappa|$ anomaly in the 4d case, but the finer mod $2|\hat\kappa|$ in 2d.\footnote{A more mundane way to understand this difference is therefore simply as a consequence of the normalisation of the various characteristic classes on spin manifolds of the relevant dimension.}

\paragraph{Example: the 3-4-5-0 model.}
To furnish an explicit example of a theory featuring this irremovable anomaly,
  we consider the so-called ``3-4-5-0 model''; that is, a $\U$ gauge theory
  in 2d with two left-moving Weyl fermions $\psi_1$, $\psi_2$ with charges
  $3$ and $4$, and two right-moving Weyl fermions $\chi_1$, $\chi_2$ with
  charges $5$ and $0$. The theory is free of both gauge and
  gravitational anomalies, and so we can ask about its global symmetries. 
There is a somewhat trivial $\U_{\chi_2}$ global symmetry
  rotating the neutral fermion $\chi_2$ (with charge $1$). Na\"ively, there are two further $\U$ global symmetries, call them $\U_1$ and
  $\U_2$, that act on the remaining fermions, whose charges $Q_1$ and
  $Q_2$ can be chosen independently in one basis as follows:
  \begin{table}[h!]
    \centering
    \begin{tabular}{c||ccc}
      Field & $Q_{\text{gauge}}$ & $Q_1$ & $Q_2$\\
      \hline
      $\psi_1$ & $3$ & $3$ & $1$\\
      $\psi_2$ & $4$ & $4$ & $3$\\
      $\chi_1$ & $5$ & $5$ & $3$
    \end{tabular}
  \end{table}

\noindent
  One immediately sees that $Q_1$ coincides with the gauge
  charge. Thus, the $\U_1$ transformation can be undone by a gauge
  transformation. Thus, the correct global symmetry is a product
  $\U_2\times \U_{\chi_2}$ with charges given by:
\begin{table}[h!]
  \centering
  \begin{tabular}{c||ccc}
    Field & $Q_{\text{gauge}}$ & $Q_2$ & $Q_{\chi_2}$\\
    \hline
    $\psi_1$ & $3$ & $1$ & $0$\\
    $\psi_2$ & $4$ & $3$ & $0$ \\
    $\chi_1$ & $5$ & $3$ & $0$\\
    $\chi_2$ & $0$ & $0$ & $1$
  \end{tabular}
\end{table}

\noindent
The corresponding 't Hooft anomaly for each $\U$ factor of the global symmetry is odd as desired.  

\subsection{Non-spin generalisation}

We have argued that the order 2 anomaly just described, for a 2d theory with 2-group symmetry $\G=\U^{[0]} \times _{\hat{\kappa}} \U^{[1]}$, is intrinsically related to the requirement of a spin structure and the use of spin bordism. More broadly, all the 2-group anomalies that we study are sensitive to the full choice of tangential structure -- this is especially the case when computing the order of a finite global anomaly. 

To develop this idea further, we consider a theory defined with the same 2-group structure, but now using a Spin$_c$ structure. Recall that in $d$ dimensions the group Spin$_c(d)$ is $[\Spin(d)\times \U]/\Z/2$, defined by identifying the element $(-1)^F\in \Spin(d)$ with the order 2 element of a $\U$ symmetry, that we here take to be an auxiliary global symmetry used to define spinors; using a Spin$_c$ rather than a Spin structure allows us to define our fermionic theory on non-spin manifolds (indeed, all orientable manifolds up to and including dimension 4 admit a $\Spin_c$ structure).

\begin{table}[h]
  \centering
  \begin{tabular}{|c|ccccccc|}
\hline
    $i$ & $0$ & $1$ & $2$ & $3$ & $4$ & $5$ & $6$ \\
    \hline
    $\Omega^{\Spin_c}_i(\pt)$ & $\Z$ & $0$ & $\Z$ & $0$ & $\Z^2$ & $0$ & $\Z^2$ \\ 
    $\Omega^{\Spin_c}_i(B|\G|)$ & $\Z$ & $0$ & $\Z^2$ & $\Z/|\hat{\kappa}|$ & $\Z^3$ & $e(\Z/2|\hat{\kappa}|, \Z/|\hat{\kappa}|)$ & $\Z^4$ \\ 
\hline
  \end{tabular}
  \caption{$\Spin_c$ bordism groups for a point~\cite{bahri1987eta}, and for 2-group symmetry $\G=\U^{[0]} \times _{\hat{\kappa}} \U^{[1]}$.}
  \label{tab:SpinC-bord}
\end{table}

\begin{figure}[htbp]
\centerline{\includegraphics[scale=0.85]{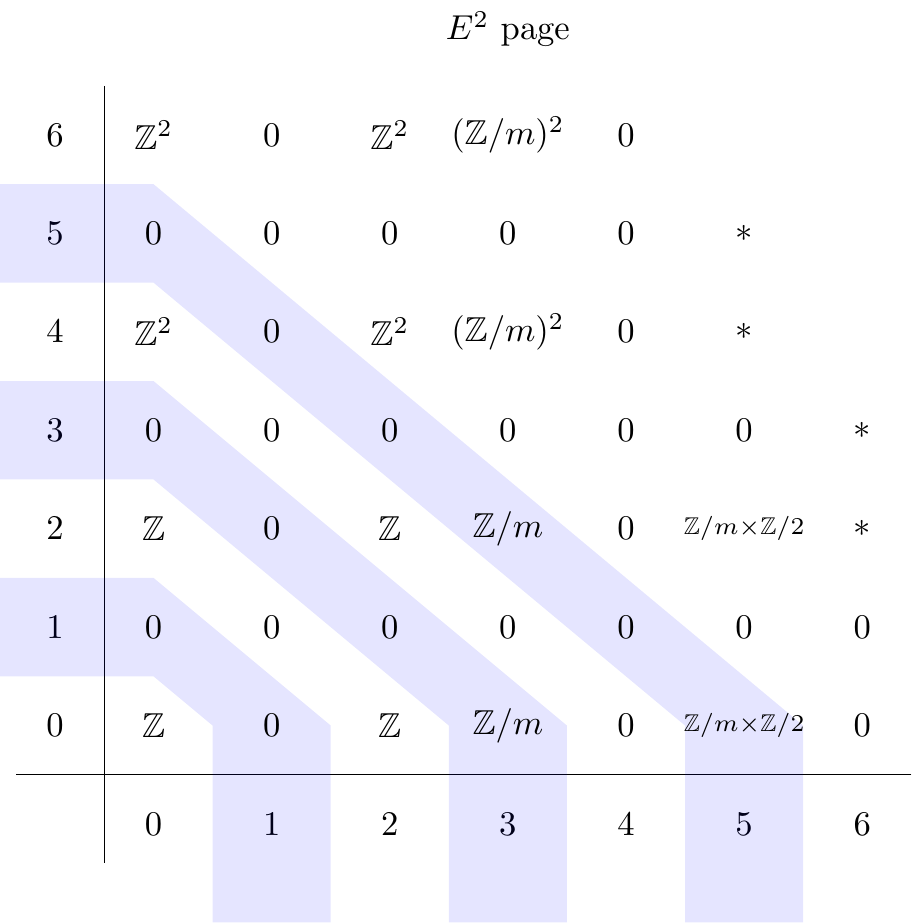}}
\caption[]{\label{fig:AHSS-SpinC} The $E^2$ page of the AHSS for spin$_c$ bordism, for the
  fibration $\text{pt} \to B|\G| \to B|\G|$, where $\G$ is the 2-group
  $\U^{[0]} \times _{\hat{\kappa}} \U^{[1]}$ with $m:= |\hat{\kappa}|$. Up to the $p+q=6$ diagonal there are no non-vanishing differentials whatsoever, and the entries shown in this region stabilize to the last page.}
\end{figure}

Having already computed the integral homology (\ref{eq:homZ-BG}) of $B|\G|$ in \S \ref{sec:QED-bordism-perspective}, and using the well-known result for the $\Spin_c$-bordism groups of a point~\cite{bahri1987eta} (see the first row in Table~\ref{tab:SpinC-bord}), 
it is straightforward to use the AHSS
\beq
E^2_{p,q} = H^p\big(B|\G|; \Omega^{\Spin_c}_q (\pt)\big) \implies \Omega^{\Spin_c}_q (B|\G|)
\eeq
to compute the $\Spin_c$ bordism groups of interest. The $E^2$ page is shown in Fig.~\ref{fig:AHSS-SpinC}. Things could not be much simpler; in the region of interest, all differentials are vanishing and all the elements on $E^2$ up to the $p+q=6$ diagonal stabilize to the last page. The bordism groups are recorded in the second row of Table~\ref{tab:SpinC-bord}. There is an extension problem that we cannot resolve using the AHSS for $\Omega_5$, but it is not relevant to our discussion here.

It is curious that $\Omega^{\Spin_c}_3(B|\G|)\cong\Z/m$ compared to $\Omega^{\Spin}_3(B|\G|)\cong \Z/(2m)$. One might na\"ively expect the same classification because the anomaly coefficient $\mathcal{A}_2$ is shifted by $2n\hat{\kappa}$ for any $n\in \Z$ from the 2-group transformation of the Green--Schwarz counter-term $\ii n\int_{M_2} B^{(2)} $. The puzzle can be resolved by correctly identifying the four integers in $\Omega^{\Spin_c}_4(B\U)\cong \Z^4$ that properly classify all perturbative anomalies, before turning on the 2-group structure. 

Let $F$ be the field strength for the background gauge field of the $\U$ 0-form global symmetry as before, and let $G$ denote the field strength associated to the $\Spin_c$ connection.\footnote{We use the convention for $\Spin_c$ connections, typical in the physics literature (see {\em e.g.}~\cite{Seiberg:2016rsg}), whereby $2G$ corresponds to a properly normalised $\U$ connection, and $G$ itself can be `half-integral'.
} Consider $N$ left-moving fermions with charges $\{q_i\}$ under the ordinary $\U$ global symmetry and with charges $\{g_i\}$ under the $\Spin_c$ connection. There is no constraint on the $q_i$, but the $g_i$ must all be odd integers. The anomaly polynomial associated to
this matter content is given by
\begin{equation}
\Phi_4 = \frac{1}{2}\sum_{i=1}^Nq_i^2 c_1(F)^2 + \sum_{i=1}^Ng_iq_i c_1(F)c_1(G) + \frac{1}{2}\sum_{i=1}^N g_i^2 c_1(G)^2 - N \frac{p_1(R)}{24},
\end{equation}
where $p_1(R)$ is the first Pontryagin class of the tangent bundle. The four anomaly coefficients that accompany each term in $\Phi_4$ are not strictly independent from one another, so they cannot be taken as labels for each factor of $\Z$ in $\Omega^{\Spin}_4(B\U)$. 

To extract a set of correct integer labels, we first choose an integral basis for the anomaly polynomial in terms of linear combinations of $c_1(F)^2$, $c_1(F) c_1(G)$, $c_1(G)^2$, and $p_1(R)$. One option is
\begin{equation}
\alpha_1 = c_1(F)^2,\quad\alpha_2 = \frac{1}{2}c_1(F^2)+c_1(F)c_1(G),\quad\alpha_3 =  \frac{1}{2}c_1(G)^2 - \frac{p_1(R)}{24}, \quad \alpha_4 = \frac{p_1(R)}{3}.
\end{equation}
We know that $\alpha_4$ is integral because $\int_{M_4}p_1(R)/3$ is just the signature of $M_4$ which is integral (and takes a minimal value of $1$, {\em e.g.} on $M_4=\C P^2$) for any orientable 4-manifold $M_4$. To see the integrality of $\alpha_3$, we apply the Atiyah--Singer index theorem for a $\Spin_c$ connection coupled to a single left-moving fermion with $q=0$ and $g=1$. Next, applying the same index theorem to a single left-moving fermion with $q=1$ and $g=1$ tells us that $\alpha_2$ is an integer.  Finally, the class $\alpha_1$ is integral by definition, and indeed it takes a minimal value of $1$. To see this, take $M_4 = \CPP$ with $F$ such that $\int F/2\pi = 1$ on a $\C P^1$ subspace. Hence $c_1(F)$ is the generator $x$ of $H^2(\CPP;\Z)$, so $c_1(F)^2=x^2=1\in H^4(\CPP; \Z)$ generates the top cohomology, which evaluates to 1 on the fundamental class $[\CPP]$.

In terms of these basis elements, the anomaly polynomial can be written as $\Phi_4 = \sum_{i=1}^4n_i\alpha_i$, where the variables $n_i$ are
\begin{equation}
n_1 = \frac{1}{2}\sum_{i=1}^Nq_i(q_i-g_i), \quad n_2 = \sum_{i=1}^N q_ig_i, \quad n_3 = \sum_{i=1}^Ng_i^2, \quad n_4 = \frac{1}{8}\sum_{i=1}^N(g_i^2-1),
\end{equation}
and our arguments above tell us that each $n_i$ must be an integer. This is clearly the case for $n_2$ and $n_3$. To see that $n_1$ is an integer, recall that $g_i$ are all odd. So when $q_i$ is odd, $(q_i-g_i)$ is even, and vice versa. So their product must be even. Similarly, to see that $n_4$ is an integer, we observe that $g_i-1$ and $g_i+1$ are both even and differ by 2, so one of them must be divisible by $4$. Hence their product is divisible by $8$. 

When we turn the 2-group structure back on, the shift in anomaly coefficients induced by the Green--Schwarz $S \supset \ii n\int_{M_2} B^{(2)}$ counter-term amounts to a shift 
\beq
n_1 \mapsto n_1 + n\hat{\kappa}.
\eeq 
It is now obvious that the local anomaly labelled by the integer $n_1$ is truncated down to a mod $|\hat{\kappa}|$ anomaly.
If one first takes care to cancel the anomalies involving the $\Spin_c$ connection, which amounts to choosing charges $\{g_i\}$ such that $n_2$, $n_3$, and $n_4$ vanish, then we find that the 2-group structure can indeed be used to completely capture the remaining 't Hooft anomaly in the ordinary $\U^{(0)}$ 0-form global symmetry, in the sense proposed by Sharpe~\cite{Sharpe:2015mja}. This should be contrasted with the residual order 2 anomaly (\ref{eq:SpinAnom}) that cannot be cancelled in this way for the theory defined with a spin structure.\footnote{At the level of  (\ref{eq:SpinAnom}), in the $\Spin_c$ case we can think of the charges $q_i$ for the ordinary $\U^{(0)}$ 0-form symmetry as being `twisted' by the $\Spin_c$ charges $\{g_i\}$, which are necessarily non-zero. This makes the `effective' anomaly coefficient $2n_1$, analogous to $\mathcal{A}_2$ appearing in (\ref{eq:SpinAnom}), be always even. }

\section{Conclusion and outlook}
\label{sec:conclusion-outlook}

In this paper, we explored anomalies in 2-group global symmetries $\G$ in
field theories in $d=4$ and $d=2$ dimensions using the cobordism
classification. We compute the bordism
groups $\Omega^H_{d+1}(B|\G|)$ and $\Omega^H_{d+2}(B|\G|)$ of the classifying
space $B|\G|$ of the nerve of $\G$, with the tangential structure
$H = \SO$ or $\Spin$. The torsion part of the former group gives us
the global anomaly while the free part of the latter gives the
perturbative anomaly. Throughout the work, we focus on 2-groups $\G$ whose
0-form and 1-form symmetry parts are compact, connected abelian groups {\em ergo} tori, which we refer to as `toric 2-groups'.

As a warm-up example, we first looked at Maxwell's theory in 4d, on a general orientable manifold (not necessarily spin). The
2-group symmetry in this theory contains only the 1-form symmetry
part, being a product of the $\U$ electric 1-form symmetry and the
$\U$ magnetic 1-form symmetry. The familiar mixed anomaly~\cite{Gaiotto:2014kfa} between
these two $\U$ factor can be given in terms of a bordism invariant in the free part of $\Omega_6$. There are also three global anomalies, probed by the torsion subgroup of $\Omega_5$, which can only be seen on non-spin manifolds and can be interpreted as mixed anomalies between the 1-form symmetries and gravity. We
also studied anomalies when one of the $\U$ factor breaks explicitly
to its cyclic subgroup due to the presence of a charged scalar, and
related them to anomalies in pure Maxwell's theory via anomaly
interplay.

When both the 0-form and 1-form symmetry constituents of $\G$ are
$\U$, with the Postnikov class given by $\hat{\kappa}\in \Z$, we used a variety of techniques to prove that
the relevant spin bordism groups are
\begin{equation}
\Omega^{\Spin}_5(B|\G|) \cong \Z/|\hat{\kappa}|, \quad \Hom \left( \Omega^{\Spin}_6(B|\G|), \Z \right) \cong \Z\;.
\end{equation}
The free part of the sixth bordism group signals the existence of a
perturbative anomaly in 4d, which is just the mixed anomaly
between the $\U$ 0-form symmetry and gravity. On the other hand, the
perturbative cubic anomaly for $\U$ 0-form symmetry that would be
there were $\hat{\kappa}$ to vanish now transmutes into a discrete global
anomaly captured by the torsion part of the 5th bordism group. These
cobordism computations are in perfect agreement with 
the physics argument put forward by
C{\'o}rdova, Dumitrescu, and Intriligator~\cite{Cordova:2018cvg}.

We next discussed the fate of anomalies for a $\U$ 0-form symmetry in
2d, for both spin and non-spin manifolds. We revisited the
construction, originally due to Sharpe~\cite{Sharpe:2015mja}, whereby
this 0-form symmetry is enhanced to a 2-group symmetry by mixing it
with a $\U$ 1-form symmetry whose current is given by the top form on
the underlying manifold. When the anomaly coefficient $\mathcal{A}_2$
for the pure $\U$ anomaly is even, we found that, by choosing the
Postnikov class of the enhanced 2-group to be
$\hat{\kappa}=\mathcal{A}_2/2$, the anomaly can be fully absorbed away. On
the other hand, in the presence of the spin structure, when
$\mathcal{A}_2$ is odd and restricting to spin-manifolds, the most one
can do is reduce the anomaly down to an order 2 discrete anomaly,
that we intrepret as an anomaly in the spin structure.

There are many future directions to pursue that are
natural extensions of this work. One obvious alley is to look at more
general classes of 2-group beyond our `toric' assumption, {\it e.g.} 2-groups whose 0-form symmetry is a
non-abelian Lie group, and/or 2-groups whose 0-form or 1-form symmetry is
a finite group
\cite{Benini:2018reh,DelZotto:2020sop,Apruzzi:2021vcu,Bhardwaj:2021wif,Apruzzi:2021mlh,Lee:2021crt,
  DelZotto:2022fnw, DelZotto:2022joo,Bhardwaj:2022scy,Carta:2022fxc}. 
One can rigorously analyse anomalies in these 2-group symmetries using the cobordism
classification. Doing so involves an extension of the mathematical machinery 
developed in this paper, and would offer a non-trivial check on 
other methods of seeing the anomalies that have been explored
recently in the literature \cite{Bhardwaj:2022dyt}.
One can also study
how these anomalies relate to the anomalies studied in this paper
through a generalised version of anomaly interplay.\footnote{A version of anomaly interplay valid for theories with 2-group symmetry can be sketched as follows. In principle, one ought to start by defining a smooth 2-homorphism between a pair of 2-groups $\G$ and $\G^\prime$, which would induce a map between the corresponding bordism spectra. But our description of tangential structures in \S \ref{sec:cobord_2grp} suggests a shortcut: it is good enough to start with an ordinary (1-)homomorphism between the topological groups $\pi:|\G| \to |\G^\prime|$, discarding all information not captured by the nerves. This induces a map between cobordism groups going the other way. These ideas will be developed in future work. \label{foot:interplay}
} 

Another avenue to pursue concerns a higher-dimensional generalisation 
of the 2d 2-groups discussed in \S~\ref{sec:2d}, as follows. We saw that one
can cancel (almost) all pure $\U$ anomalies in 2d by enhancing the $\U$ 0-form symmetry to a
2-group by making use of the top-form $\U$ current. One then wonders
whether there is an analogous story for `top-group' symmetry
structures in higher dimensions, {\it e.g.} 4-group symmetry in 4d,
in which a `topological 3-form symmetry' with $j_4 = \text{vol}_4$ is
fused with the ordinary 0-form symmetry to kill the cubic
anomaly. Moreover, the remnant order 2 anomaly in the 2d case is a
direct result of the non-trivial role played by the spin structure. It
remains to be seen whether such an interplay between `top-group'
symmetries and spin-structure anomalies persists in higher dimensions.

For all these anomalies, there is of course a dual interpretation concerning symmetry protected topological (SPT) phases of matter that are protected by these generalised symmetry types, which are rigorously classified by cobordism. Furthermore, it is conceivable that by extending the bordism computations to more generalised symmetries one could uncover novel anomalies/phases, and suggest corresponding new constraints on the dynamics of theories with such symmetry.

\acknowledgments{ We are deeply grateful to Arun Debray for several stimulating discussions,
and for supplying a mathematical argument that enabled us to complete this paper.
We thank Pietro Benetti Genolini, Philip Boyle
  Smith, and Ben Gripaios for helpful discussions. We are
  also grateful to I\~naki Garc\'ia Etxebarria and Sakura
  Sch\"afer-Nameki for asking questions that partly motivated this
  project. This work was partially carried out while both authors were
  at DAMTP, University of Cambridge, in which time NL was supported by
  David Tong’s Simons Investigator Award, and we were both supported
  by the STFC consolidated grant ST/P000681/1. JD has received funding
  from the European Research Council (ERC) under the European Union’s
  Horizon 2020 research and innovation programme under grant agreement
  833280 (FLAY), and by the Swiss National Science Foundation (SNF)
  under contract 200020-204428. NL is supported by the Royal Society of
  London and by the STFC consolidated grant in `Particles, Strings and
  Cosmology’ number ST/T000708/1.}

\appendix

\section{Additional cohomology computations}
\label{app:some-cohom-comp}

In this Appendix, we collect results and computations for various cohomology groups.

\subsection{Cohomology groups of Eilenberg--Maclane spaces}
\label{app:summ-class-results}

In this Subsection, we collect classic results for cohomology groups
of various Eilenberg--Maclane spaces $K(G,n)$. The mod 2 cohomology
results below are obtained by Serre in the classic paper
\cite{Serre1953aP}.

\subsubsection*{\underline{$K(\Z/2,2)$}}

The mod 2 cohomology ring of an Eilenberg--Maclane space $B^2\Z/2=K(\Z/2,2)$ is
given by
\begin{equation}
\label{eq:mod2-cohomology-KZ2-2}
H^{\bullet}(K(\Z/2,2);\Z/2) \cong \Z/2[u_2,\sq^1 u_2,\sq^2\sq^1u_2,\ldots, \sq^{2^k}\sq^{2^{k-1}}\ldots\sq^2\sq^1u_2,\ldots],
\end{equation}
where $u_2$ is the unique generator in $H^2$.

\subsubsection*{\underline{$K(\Z,2)$}}

The classifying space $B\U$ of $\U$ is a $K(\Z,2)$ space. We have
\begin{equation}
  \label{eq:integral-cohomology-of-BU1}
  H^{\bullet}(K(\Z,2);\Z) = H^{\bullet}(B\U;\Z) \cong \Z[c_1],
\end{equation}
where $c_1$ is the unique generator in $H^2$ called the universal first Chern class. The mod 2 version is
given simply by
\begin{equation}
\label{eq:mod2-cohomology-of-BU1}
H^{\bullet}(K(\Z,2);\Z/2) \cong \Z/2[c_1],
\end{equation}
where $c_1$ is now the mod 2 reduction of the universal first Chern
class. The action of the Steenrod squares are given by $\sq^1 c_1 =0$,
$\sq^2 c_1=c_1^{2}$ which follows directly from the axioms.

\subsubsection*{\underline{$K(\Z,3)$}}

The mod 2 cohomology ring of an Eilenberg--Maclane space $K(\Z,3)$ is
given by
\begin{equation}
  H^{\bullet} \left( K(\Z,3);\Z/2 \right) \cong \Z/2[\tau_3, \text{Sq}^2\tau_3, \text{Sq}^4\text{Sq}^2\tau_3, \text{Sq}^8\text{Sq}^4\text{Sq}^2\tau_3,\ldots],
  \label{eq:mod2-cohomology-ring-KZ3}
\end{equation}
where $\tau_3$ is the unique generator in $H^3(K(\Z,3);\Z/2)$.

Integral cohomology groups of $K(\Z,3)$ can be obtained by applying
the universal coefficient theorem to the homology groups computed in
Ref. \cite{breen2014derived}. The resulting cohomology groups are
shown in Table \ref{tab:integral-cohomology-KZ3} below.
\begin{table}[h]
  \centering
  \begin{tabular}{|c|ccccccccc|}
    \hline
    $n$ & $0$ & $1$& $2$& $3$& $4$& $5$& $6$& $7$& $8$\\
    \hline
    $H^n(K(\Z,3);\Z)$ & $\Z$ & $0$& $0$& $\Z$& $0$& $0$& $\Z/2$& $0$& $\Z/3$\\
    \hline
  \end{tabular}
  \caption{Integral cohomology groups up to degree 8 of $K(\Z,3)$}
  \label{tab:integral-cohomology-KZ3}
\end{table}

\subsection{Mod 2 cohomology ring of the classifying space of an abelian 2-group}
\label{app:mod-2-cohomology}

In this Subsection, we use the Serre spectral sequence to calculate
the mod 2 cohomlogy ring of $B|\G|$ when $\G$ is the abelian 2-group
$\U^{[0]}\times_{\hat{\kappa}} \U^{[1]}$ studied in \S
\ref{sec:QED-bordism-perspective} with the Postnikov class $\hat{\kappa}$.

The cohomological Serre spectral sequence induced by the fibration
$K(\Z,3) \to B|\G| \to K(\Z,2)$ is
\begin{equation}
  E_2^{p,q} = H^p \left( K(\Z,2);H^q \left( K(\Z,3);\Z/2 \right) \right) \Rightarrow H^{p+q} \left( B|\G|;\Z/2 \right).
\end{equation}
The entries on the $E_2$ page can be computed from the results given
in \S \ref{app:summ-class-results}. There is no non-trivial
differentials on the $E_2$ or $E_3$ pages in the range of degrees we
are interested in, and the $E_2$ entries in this range propagate to
the $E_4$ page, shown in Fig. \ref{fig:mod2SerreSSoddm}. As explained
in \S \ref{sec:QED-bordism-perspective}, the differentials $\alpha$ and
$\beta$ on the $E_4$ page are given by contraction with the Postnikov
class $\hat{\kappa}$. Thus, they are non-trivial if and only if
$\hat{\kappa}$ is odd. In this case, the resulting $E_5$ page is shown on
the right-hand side of Fig. \ref{fig:mod2SerreSSoddm}. When
$\hat{\kappa}$ is even, the entries stabilise already on the The entries in
the range of interest stabilise on this page. On the other hand, when
$\hat{\kappa}$ is even, the entries stabilise already on the $E_4$ page. In
either case, we can read off the mod 2 cohomology groups to be
\begin{table}[h]
  \centering
  \begin{tabular}{|c|ccccccc|}
    \hline
    $n $ & $0$& $1$& $2$& $3$& $4$& $5$& $6$\\
    \hline
    $H^n(B|\G|;\Z/2)$, \quad $\hat{\kappa}$ odd  & $\Z/2$ & $0$& $\Z/2$& $0$& $0$& $\Z/2$& $\Z/2$\\
    $H^n(B|\G|;\Z/2)$, \quad $\hat{\kappa}$ even & $\Z/2$ & $0$& $\Z/2$& $\Z/2$& $\Z/2$& $\Z/2\times\Z/2$&$\Z/2 \times \Z/2$\\
    \hline
  \end{tabular}
  \caption{The mod 2 cohomology groups of $B|\G|$ where $\G$ is the 2-group $\U^{[0]}\times _{\hat{\kappa}} \U^{[1]}$ with the Postnikov class $\hat{\kappa}$}
  \label{tab:mod2-cohomology-groups-U1U1}
\end{table}

\begin{figure}[htbp]
\centerline{\includegraphics[scale=0.8]{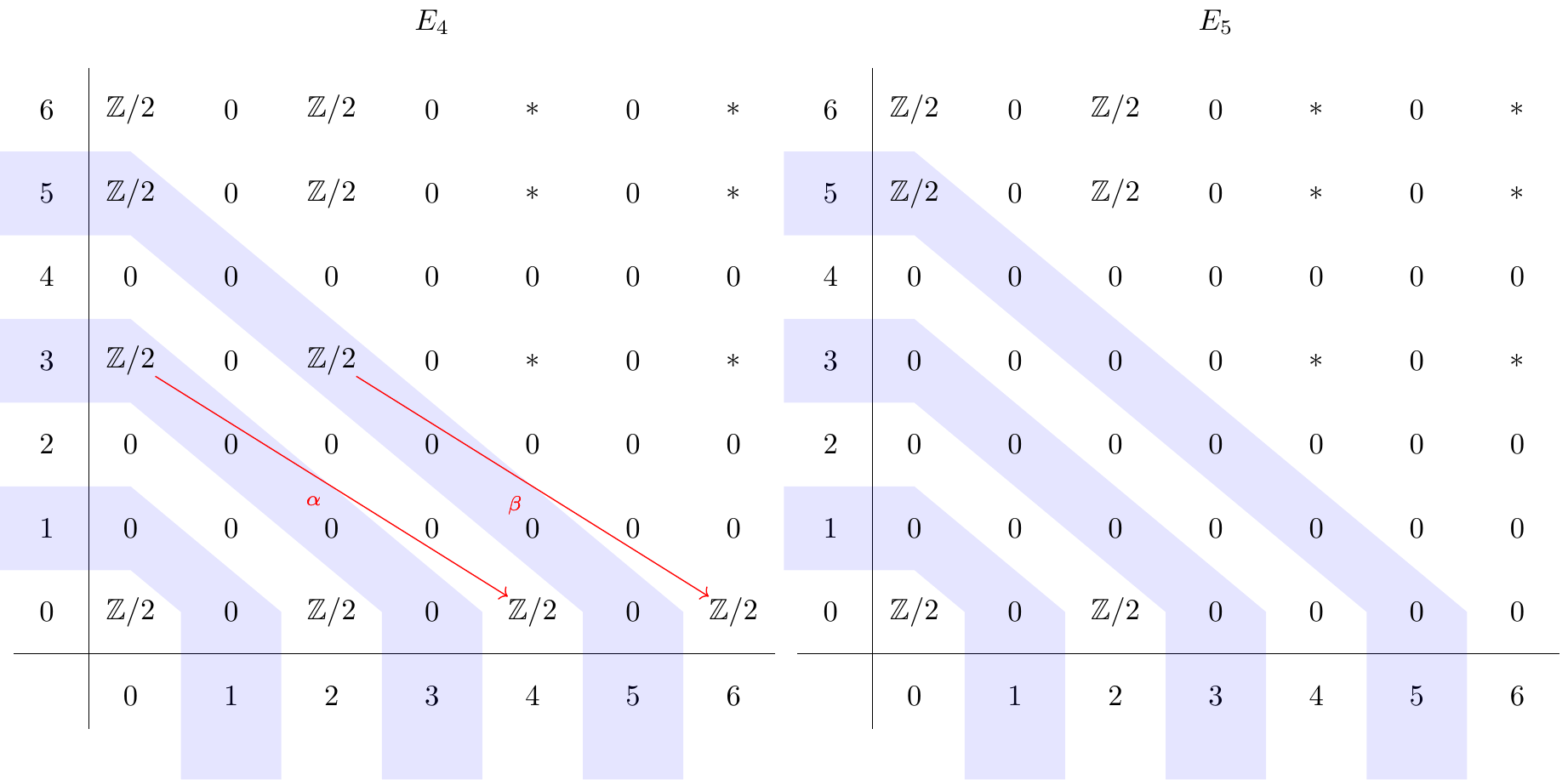}}
\caption[]{\label{fig:mod2SerreSSoddm} The Serre spectral sequence for
  the mod 2 cohomology for $B|\G|$ when
  $\G = \U^{[0]} \times _{\hat{\kappa}} \U^{[1]}$ with an odd Postnikov class
  $\hat{\kappa}$.}
\end{figure}

On the pages of the Serre spectral sequence, there are non-trivial
$\sq^2$ actions on the mod 2 cohomology rings of the fibre $K(\Z,3)$
and the base $K(\Z,2)$ that converge to $\sq^2$ actions on
$H^{\bullet}\left( B|\G|;\Z/2 \right)$ (Theorem 6.15 of
\cite{McCleary2000aB}). In particular, in the even $\hat{\kappa}$ case, we
have
\begin{equation}
  \begin{split}
  \sq^2:  H^2 \left( B|\G|;\Z/2 \right) & \to H^4 \left( B|\G|;\Z/2 \right)\\
    c_1 &\mapsto c_1^2
  \end{split}
\end{equation}
and
\begin{equation}
  \begin{split}
  \sq^2:  H^3 \left( B|\G|;\Z/2 \right) & \to H^5 \left( B|\G|;\Z/2 \right)\\
    \tau_3 &\mapsto \sq^2 \tau_3
  \end{split}
\end{equation}
where $c_1$, $\tau_3$, $c_1^2$, and $\sq^2 \tau_3$, are now taken to be the
unique generators for $H^i(B|\G|;\Z/2)$, $i=2,3,4,5$, respectively.

\section{Additional bordism calculations}
\label{sec:bordism-calculation}

\subsection{Maxwell} \label{app:maxwell}

Here we compute the bordism groups for the symmetry type \beq H= \SO \times
\G, \qquad B|\G|=B^2\U_e \times B^2\U_m\, , \eeq relevant to `free Maxwell
theory', {\em i.e.} $\U$ gauge group with no dynamical matter, which
has intact electric and magnetic $\U$ 1-form global symmetries,
discussed in \S \ref{sec:Maxwell-revisited}. Since
$\U \cong S^1 \cong B\Z$, the classifying space for a $U(1)$ 1-form symmetry
$B^2\U \cong K(\Z, 3)$ is an Eilenberg--Maclane space whose cohomology is
known. Using the K\"unneth theorem we find the integral homology of
$B|\G|= B^2\U \times B^2\U$, as recorded in Table~\ref{tab:MAXhom}. We also
need the oriented bordism groups of a point~\cite{10.2307/1970136}, which
are recorded in Table~\ref{tab:SO-bord}.

\begin{table}[h]
  \centering
  \begin{tabular}{|c|cccccccc|}
    \hline
    $i$ & 0 & 1 & 2 & 3 & 4 & 5 & 6 & 7\\
    \hline
	$H_i(K(\Z,3);\Z)$ & $\Z$ & $0$ & $0$ & $\Z$ & $0$ & $\Z/2$ & $0$ & $\Z/3$ \\
	$H_i(K(\Z,3);\Z/2)$ & $\Z/2$ & $0$ & $0$ & $\Z/2$ & $0$ & $\Z/2$ & $\Z/2$ & $0$ \\
\hline
   $H_i((B^2\U)^2;\Z)$ & $\Z$ & $0$ & $0$ & $\Z^2$ & $0$ & $(\Z/2)^2$ & $\Z$ & $(\Z/3)^2$ \\
   $H_i((B^2\U)^2; \Z/2)$ & $\Z/2$ & $0$ & $0$ & $(\Z/2)^2$ & $0$ & $(\Z/2)^2$ & $(\Z/2)^3$ & $0$ \\
\hline
  \end{tabular}
  \caption{Integral and mod 2 homology groups of $(B^2\U)^2=\left(K(\Z,3)\right)^2$, relevant to Maxwell theory.}
  \label{tab:MAXhom}
\end{table}

\begin{table}[h]
  \centering
  \begin{tabular}{|c|ccccccccc|}
    \hline
    $i$ & $0$ & $1$ & $2$ & $3$ & $4$ & $5$ & $6$ & $7$ & $8$ \\
    \hline
    $\Omega^{\SO}_i(\pt)$ & $\Z$ & $0$ & $0$ & $0$ & $\Z$ & $\Z/2$ & $0$ & $0$ & $\Z^2$ \\
    \hline
  \end{tabular}
  \caption{Oriented bordism groups for a point~\cite{10.2307/1970136}.}
  \label{tab:SO-bord}
\end{table}

We next write down the AHSS for oriented bordism associated to the
trivial fibration of $|\G|$ by a point, for which the second page is
given by \beq E^2_{p,q} = H_p\left((B^2\U)^2; \Omega_q^{\SO}(\pt) \right)\,
.  \eeq This is shown in Fig.~\ref{fig:AHSS_Maxwell}.

\begin{figure}[h]
\centering
  \includegraphics[width=0.7\textwidth]{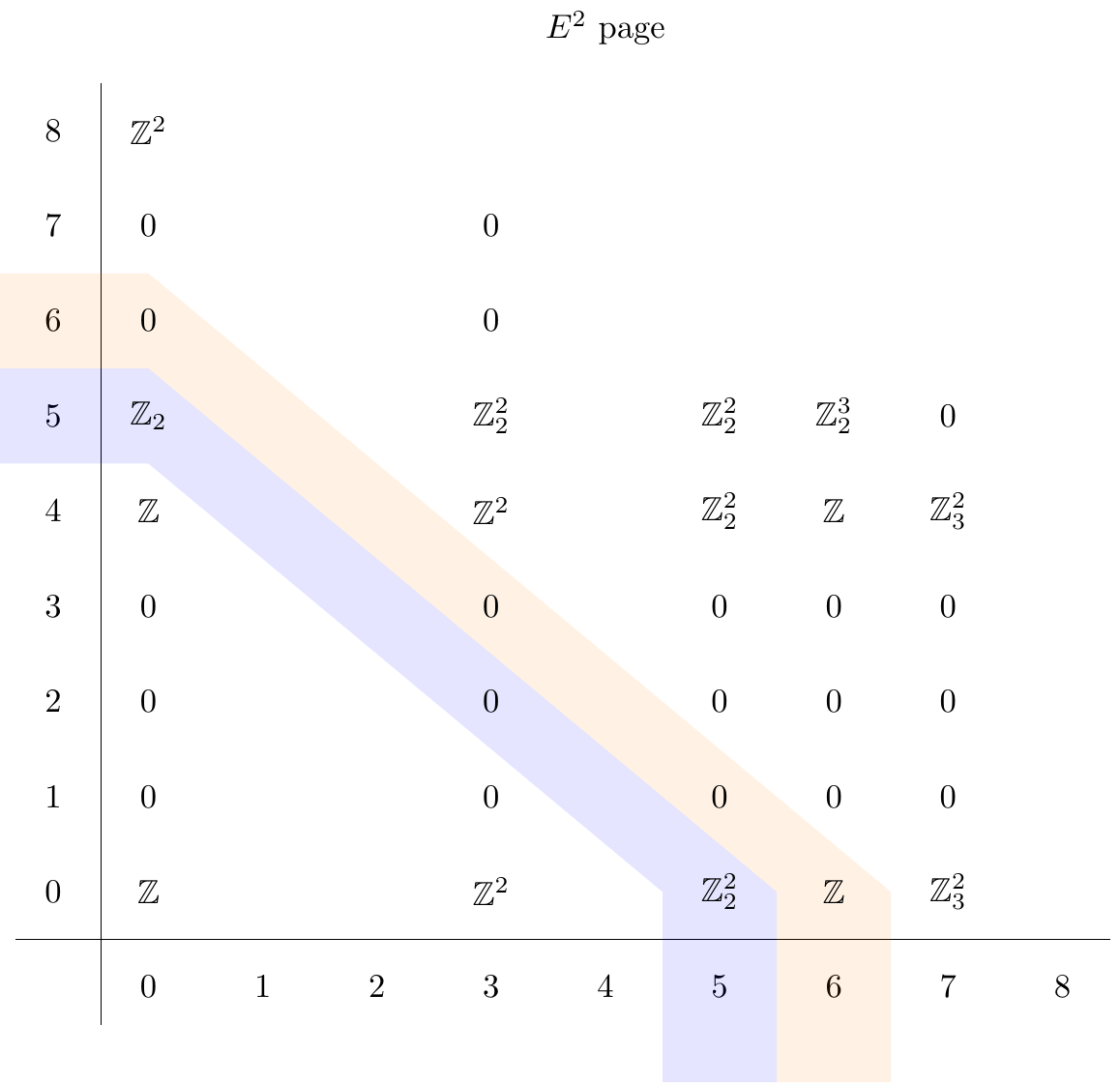}
\caption{The second page of the AHSS for $\Omega^{\SO}_\bullet\left((B^2\U)^2 \right)$.    \label{fig:AHSS_Maxwell} }
\end{figure}

We want to read off the fifth and sixth bordism groups, which receive contributions from the $p+q=5$ (blue) and $p+q=6$ (orange) diagonals respectively. For the sixth bordism group, there is a single factor $E^2_{6,0}=\Z$ to consider. The only possible differential is $d^5:E^5_{6,0} \to E^5_{0,5}\cong \Z/2$, whose kernel is obviously isomorphic to $\Z$, meaning that $E^\infty_{6,0}=\Z$. Thus, 
\beq
\Omega^\SO_6\left((B^2\U)^2 \right) = \Z \, .
\eeq
For the fifth bordism group, the $E^2_{5,0}=(\Z/2)^2$ element stabilizes to the last page, because the only possible differential is $d^4:E^4_{5,0} \to E^4_{0,4} :\Z/2 \mapsto \Z$ which is therefore the zero map. For the $E^2_{0,5}$ element, it is again the $d^5:E^5_{6,0} \to E^5_{0,5}$ map that is important. We can in fact deduce that this differential must be zero simply because it hits the zeroth column which records the bordism groups of a point, and since there is a canonical split $\Omega^\SO_\bullet(X)\cong \Omega^\SO_\bullet (\pt) \oplus \widetilde{\Omega}^\SO_\bullet(X)$ for $X$ connected, where the second factor is the reduced bordism of $X$, the $E_{0,5}$ factor must stabilize to the last page.\footnote{Equivalently, we could have used the AHSS for reduced bordism to get the same result.} This also makes it obvious that there can be no non-trivial extension, and so we can read off
\beq
\Omega^\SO_5\left((B^2\U)^2 \right) = (\Z/2)^3 \, .
\eeq
It is easy to read off the lower-degree bordism groups too, and we summarize the results in Table~\ref{tab:Maxwell-bord}.

\begin{table}[h]
  \centering
  \begin{tabular}{|c|ccccccc|}
\hline
    $i$ & $0$ & $1$ & $2$ & $3$ & $4$ & $5$ & $6$  \\
    \hline
    $\Omega^{\SO}_i\left((B^2U(1))^2 \right)$ & $\Z$ & $0$ & $0$ & $\Z^2$ & $\Z$ & $(\Z/2)^3$ & $\Z$ \\ 
\hline
  \end{tabular}
  \caption{Oriented  bordism groups up to degree 6, for a pair of $\U$ 1-form symmetries.}
  \label{tab:Maxwell-bord}
\end{table}

\subsection{Scalar QED with charge-2 boson } \label{app:charge2}

Next we compute the bordism groups for 
\beq
H= \SO \times \G^\prime, \qquad B|\G^\prime| = B^2\Z/2_e \times B^2\U_m\, ,
\eeq
relevant to QED with a charge-2 boson, which has the effect of breaking the $\U_e$ 1-form symmetry down to a discrete $\Z/2_e$ subgroup. Using the K\"unneth theorem we find the integral homology of $B^2\Z/2 \times B^2\U$, as recorded in Table~\ref{tab:SQEDhom}.

\begin{table}[h]
  \centering
  \begin{tabular}{|c|cccccccc|}
    \hline
    $i$ & 0 & 1 & 2 & 3 & 4 & 5 & 6 & 7\\
    \hline
	$H_i(B^2 \Z/2;\Z)$ & $\Z$ & $0$ & $\Z/2$ & $0$ & $\Z/4$ & $\Z/2$ & $\Z/2$ & $\Z/2$ \\
	$H_i(B^2\Z/2;\Z/2)$ & $\Z/2$ & $0$ & $\Z/2$ & $\Z/2$ & $\Z/2$ & $(\Z/2)^2$ & $(\Z/2)^2$ & $(\Z/2)^2$ \\
\hline
   $H_i(B^2\Z/2 \times B^2\U;\Z)$ & $\Z$ & $0$ & $\Z/2$ & $\Z$ & $\Z/4$ & $(\Z/2)^3$ & $\Z/2$ & ${\scriptstyle (\Z/2)^2\times \Z/3 \times \Z/4}$ \\
   $H_i(B^2\Z/2 \times B^2\U; \Z/2)$ & $\Z/2$ & $0$ & $\Z/2$ & $(\Z/2)^2$ & $\Z/2$ & $(\Z/2)^4$ & $(\Z/2)^4$ & $(\Z/2)^4$ \\
\hline
  \end{tabular}
  \caption{Integral and mod 2 homology groups of $B^2\Z/2$ and $B^2\Z/2 \times B^2\U$. Recall the homology of $B^2 \U$ is recorded in the first two lines of Table~\ref{tab:MAXhom}.}
  \label{tab:SQEDhom}
\end{table}

We next write down the AHSS for oriented bordism associated to the trivial fibration $\pt \to B^2\Z/2 \times B^2\U \to B^2\Z/2 \times B^2\U$, for which the second page is given by (see Fig.~\ref{fig:AHSS_ScalarQED}),
\beq
E^2_{p,q} = H_p\left(B^2\Z/2 \times B^2\U; \Omega_q^{\SO}(\pt) \right)\, .
\eeq
\begin{figure}[h]
\centering
  \includegraphics[width=0.7\textwidth]{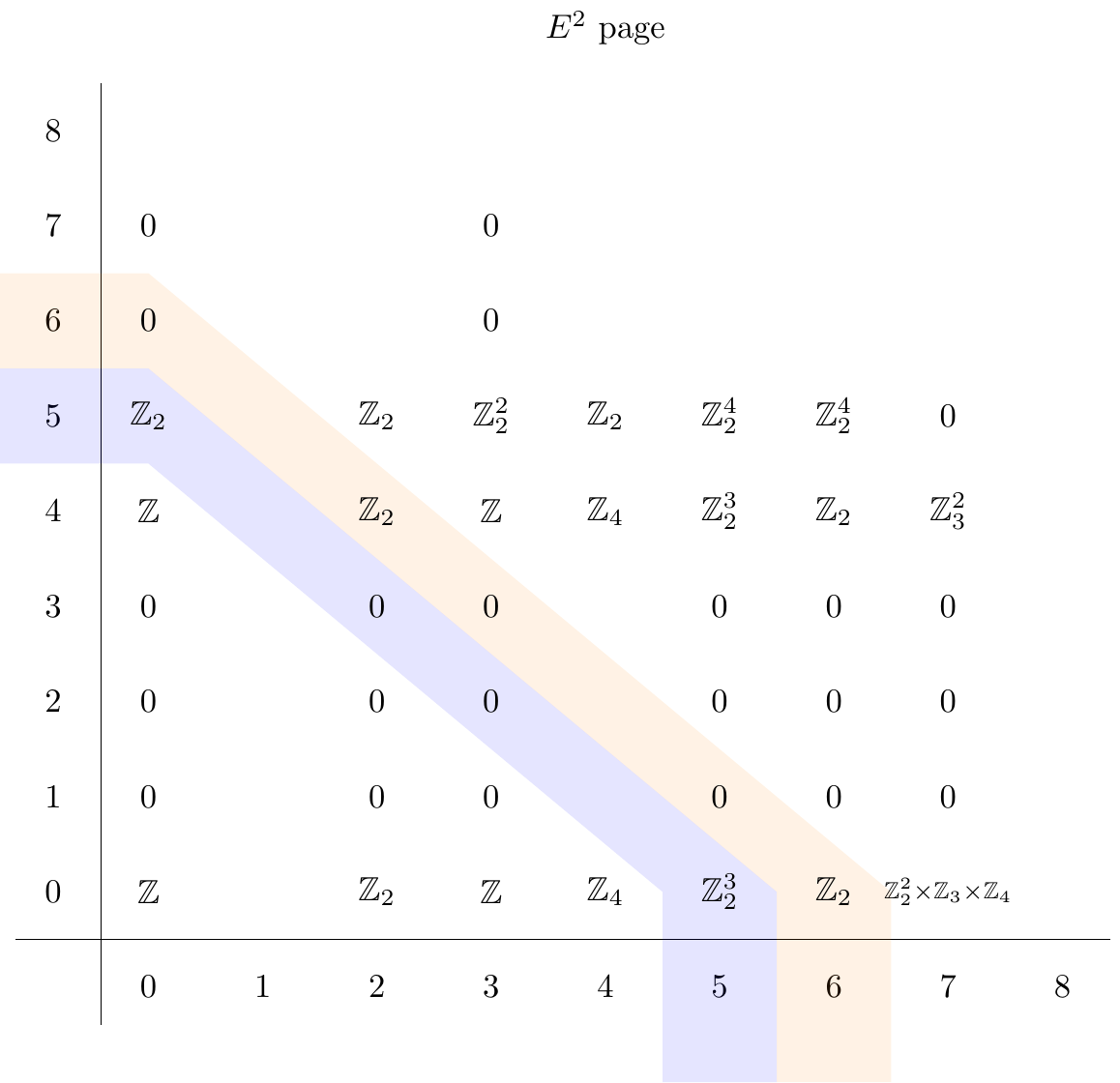}
\caption{The second page of the AHSS for $\Omega^{\SO}_\bullet\left(B^2\Z/2 \times B^2\U \right)$.    \label{fig:AHSS_ScalarQED} }
\end{figure}
We want to read off $\Omega_5$ and $\Omega_6$ from the stabilization of elements on the $p+q=5$ and $p+q=6$ diagonals respectively. For the fifth bordism group, very similar arguments to those of~\ref{app:maxwell} tell us that $E_{5,0}^\infty=(\Z/2)^3$ and $E_{0,5}^\infty=\Z/2$, and the canonical split $\Omega^\SO_\bullet(X)\cong \Omega^\SO_\bullet (\pt) \oplus \widetilde{\Omega}^\SO_\bullet(X)$ again tells us that the  extension problem is trivial, giving:
\beq
\Omega^\SO_5\left(B^2\Z/2 \times B^2\U \right) = (\Z/2)^4 \, .
\eeq
For $\Omega_6$, there is a potentially non-vanishing differential on the fifth page, $d^5:E_{7,0}\to E_{2,5}:(\Z/2)^2 \times \Z/3 \times \Z/4 \mapsto \Z/2$. Without knowing this differential, one cannot read off $\Omega_6$ -- the most we can conclude is that it equals $\Z/2$, $\Z/4$, or $\Z/2 \times \Z/2$. In any case, we have that $\Omega^\SO_6\left(B^2\Z/2 \times B^2\U \right)$ is pure torsion, which is all that's relevant for discussing the corresponding anomalies in 4d.

\section{Computation of a key differential} \label{app:Arun}

This Appendix is written by Arun Debray.

In this appendix, we finish the computation from \S\ref{ss:odd_post}, showing that if $\G$ is a $2$-group with
0- and 1-form parts both $\U$ and $k$-invariant equal to $m$ times the generator of $H^4(B\U;\Z) \cong \Z$,
where $m$ is odd, then $\Omega_5^\Spin(B|\G|)\cong \Z/m$. After the calculation in \S\ref{ss:odd_post}, the only
ambiguity is in the $2$-torsion, which could be either $0$ or $\Z/2$, depending on the value of a $d_3$ in the
Atiyah-Hirzebruch spectral sequence in Fig.~\ref{fig:AHSS-oddm}. This differential resisted several of our
standard techniques to address it, as did an analogous differential in the Adams spectral sequence computing the
$2$-completion of $\Omega_5^\Spin(B|\G|)$.

Here we finish the computation of the $2$-torsion subgroup of $\Omega_5^\Spin(B|\G|)$ in a different
way, resolving the differentials only implicitly. Specifically, we fit $\Omega_\bullet^\Spin(B|\G|)$ into a long exact
sequence~\eqref{our_specific_LES} and compute enough other terms in the long exact sequence to determine
$\Omega_5^\Spin(B|\G|)$. This long exact sequence is a version of the Gysin sequence for spin bordism, and one of
its homomorphisms has a geometric interpretation as a \textit{Smith homomorphism}, a map between bordism groups
defined by sending a manifold $M$ to a smooth representative of the Poincar\'e dual to a characteristic class of
$M$. The general version of~\eqref{our_specific_LES} was written down in~\cite{Debray:2023yrs} to study Smith
homomorphisms and their applications to anomalies of quantum field theories; see there and~\cite{Cordova:2019wpi,Hason:2020yqf} for
more on Smith homomorphisms and their role in physics.

\begin{theorem}[\cite{Debray:2023yrs}]
\label{Smith_thm}
Let $V\to X$ be a vector bundle of rank $r$ over a CW complex $X$, $\pi\colon S(V)\to X$ be the sphere bundle of
$V$, $X^{V-r}$ be the Thom spectrum of the rank-zero virtual vector bundle $V-\underline\R^r$, and $E_\bullet$ be a
generalised homology theory. Then there is a long exact sequence
\begin{equation}
\label{smith_LES}
	\dotsb\longrightarrow E_n(S(V))\overset{\pi_*}{\longrightarrow} E_n(X)\overset{\psi}{\longrightarrow}
	E_{n-r}(X^{V-r}) \longrightarrow E_{n-1}(S(V))\longrightarrow\dots
\end{equation}
\end{theorem}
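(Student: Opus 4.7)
The plan is to realise the sequence~\eqref{smith_LES} as the long exact sequence in $E$-homology associated to a cofibre sequence built from the vector bundle $V \to X$. First I would recall the Thom space $X^V := D(V)/S(V)$, where $D(V)$ and $S(V)$ are the disk and sphere bundles of $V$, respectively. By construction the inclusion $S(V) \hookrightarrow D(V)$ is a cofibration whose cofibre is $X^V$, giving the Puppe cofibre sequence
\begin{equation*}
S(V) \hookrightarrow D(V) \longrightarrow X^V \longrightarrow \Sigma S(V) \longrightarrow \dotsb
\end{equation*}

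Next I would collapse the disk bundle onto the base: the zero-section inclusion $X \hookrightarrow D(V)$ is a fibrewise deformation retract and hence a homotopy equivalence, with inverse given by the bundle projection $D(V) \to X$. Substituting $X$ for $D(V)$ in the Puppe sequence yields an equivalent cofibre sequence $S(V) \xrightarrow{\pi} X \to X^V$, to which I would apply the generalised homology theory $E_\bullet$ to obtain a long exact sequence
\begin{equation*}
\dotsb \to E_n(S(V)) \xrightarrow{\pi_*} E_n(X) \to E_n(X^V) \to E_{n-1}(S(V)) \to \dotsb
\end{equation*}
To convert this into the stated form, I would then invoke the suspension relation $X^V \simeq \Sigma^r X^{V-r}$ for Thom spectra, which is immediate from the general identity $X^{\xi + \underline{\R}^r} \simeq \Sigma^r X^\xi$ applied with $\xi = V - \underline{\R}^r$. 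The resulting isomorphism $E_n(X^V) \cong E_{n-r}(X^{V-r})$ then produces precisely~\eqref{smith_LES}.

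The main obstacle is not the exactness of the sequence, which is essentially formal, but the bookkeeping required to make the construction work at the level of structured bordism spectra: when $E_\bullet = \Omega^{\Spin}_\bullet$, for example, one must carefully track how the spin structure on a manifold mapping to $X$ interacts with the collapse map and with the Thom isomorphism, and this is where the paper's reliance on the forthcoming reference will really do the work. A secondary but practically important task would be to identify the middle map $\psi$ geometrically with the Smith homomorphism sending $[M \xrightarrow{f} X]$ to the zero-locus of a generic section of $f^*V$ equipped with its induced tangential structure; this geometric description, while not needed for exactness itself, is what would ultimately let one compute $\psi$ on explicit bordism classes, and it would follow from unwinding the Pontryagin--Thom interpretation of the collapse map $X \to X^V$.
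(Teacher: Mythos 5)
Your proposal is correct and follows essentially the same route as the paper: identify the homotopy cofibre of $\pi\colon S(V)\to X$ with the Thom space $D(V)/S(V)$ via the homotopy equivalence $D(V)\simeq X$, apply the Puppe long exact sequence in $E$-homology, and shift degrees by $r$ using the equivalence between the Thom space and $\Sigma^r X^{V-r}$. The only cosmetic difference is that the middle term of the Puppe sequence should be the \emph{reduced} $E$-homology $\widetilde E_n(X^V)$ of the Thom space before it is rewritten as $E_{n-r}(X^{V-r})$ of the Thom spectrum, which is where that distinction dissolves.
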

\noindent The map $\psi$ is the Smith homomorphism, but we will not need that fact; see~\cite{Debray:2023yrs} for more information.
\begin{proof}
We will show that the Thom space $\mathrm{Th}(X, V)$ of $V$ is the homotopy cofibre of $\pi$, so there is the Puppe
long exact sequence $\dotsb\to E_n(S(V))\to E_n(X)\to \widetilde E_n(\mathrm{Th}(X, V))\to E_{n-1}(S(V))\to\dotsb$.
Once this is established, \eqref{smith_LES} follows: $X^{V-r}\simeq \Sigma^{-r}\Sigma^\infty_+ \mathrm{Th}(X, V)$,
and $\Sigma_+^\infty$ induces an isomorphism for any generalised homology theory and $\Sigma^{-r}$ shifts
generalised homology down in grading by $r$, yielding~\eqref{smith_LES}.\footnote{The long exact
sequence~\eqref{smith_LES}, and indeed this whole appendix, could have been done with $\mathrm{Th}(X, V)$, but once
we apply bordism, it is conceptually helpful to replace $\mathrm{Th}(X, V)$ by $X^{V-r}$ ---
$\widetilde\Omega_k^\Spin(\mathrm{Th}(X, V))$ is a bordism group of $(k-r)$-dimensional manifolds, and
$\Omega_k^\Spin(X^{V-r})$ is a bordism group of $k$-manifolds. The degree shift must occur somewhere, and we
believe our choice is clearest.}

The homotopy cofibre of an inclusion $A\hookrightarrow X$ of CW complexes such that the image of $A$ is a union of
cells is the quotient $X/A$; in general, one can compute the homotopy cofibre of a map $f\colon A\to X$ between CW
complexes by changing $f$ and $X$ by a homotopy so that $f$ is indeed a cellular inclusion. For $\pi\colon S(V)\to
X$ from the theorem statement, replace $\pi$ with the inclusion $i\colon S(V)\hookrightarrow D(V)$ of $S(V)$ into
the disc bundle $D(V)$ of $V\to X$; the map $\pi'\colon D(V)\to X$ is a homotopy equivalence and $\pi'\circ i =
\pi$, and the CW structures on $X$ and the standard CW structure on $D^n$ can be used to put CW structures on
$S(V)$ and $D(V)$ such that the image of $i$ is a union of cells. Thus the homotopy cofibre of $\pi\colon S(V)\to
V$ is the quotient $D(V)/S(V)$, which is by definition $\mathrm{Th}(X, V)$.
\end{proof}
We are interested in $E_\bullet = \Omega_\bullet^\Spin$, and specifically in $2$-torsion. To avoid worrying about torsion for
other primes, we \emph{localise at $2$}, meaning we tensor with the ring $\Z_{(2)}$ of rational numbers whose
denominators in lowest terms are odd. If $A$ is a finitely generated abelian group, so that $A$ is a direct sum of
a free abelian group $\Z^r$ and cyclic groups $\Z/p_i^{e_i}$ of prime-power order, localising at $2$ has the effect
of replacing $\Z^r$ with $(\Z_{(2)})^r$, preserving all $2^{e_i}$-torsion, and sending all odd-prime-power torsion
to zero. Since we are trying to determine the $2$-torsion subgroup of $\Omega_5^\Spin(B|\G|)$, which is a finitely
generated abelian group, we lose no relevant information by localising at $2$. Moreover, since $\Z_{(2)}$ is a flat
ring, tensoring with $\Z_{(2)}$ preserves long exact sequences ({\em i.e.}\ $2$-localisation is exact), so we can still
apply~\eqref{smith_LES} after $2$-localising.

For any $2$-group $\mathbb H$, the classifying space $B|\mathbb H|$ has a CW structure ({\em e.g.}\ because it is the
geometric realization of a simplicial space), so we can invoke Theorem~\ref{Smith_thm} for $X = B|\G|$. Quotienting $\G$
by its 1-form symmetry defines a map to the 0-form part of $\G$, which on classifying spaces induces a map
$q\colon B|\G|\to B\U$. Let $V\to B|\G|$ be the pullback of the tautological complex line bundle $L\to B\U$ by $q$.
Then Theorem~\ref{Smith_thm}, together with exactness of $2$-localisation, gives us a long exact sequence
\begin{equation}
\label{our_specific_LES}
	\dotsb\to \Omega_5^\Spin(S(V)) \otimes \Z_{(2)} \to \Omega_5^\Spin(B|\G|)\otimes
	\Z_{(2)} \to \Omega_3^\Spin((B|\G|)^{V-2}) \otimes\Z_{(2)} \to\dots
\end{equation}
We will show in Theorem~\ref{s_and_q_Smith} that $\Omega_5^\Spin(S(V)) \otimes \Z_{(2)}$ and
$\Omega_3^\Spin((B|\G|)^{V-2}) \otimes\Z_{(2)}$ both vanish; exactness then implies $\Omega_5^\Spin(B|\G|)\otimes
\Z_{(2)} = 0$ as well, meaning $\Omega_5^\Spin(B|\G|)$ has no $2$-torsion, which is what we want to prove in this
appendix.

Before doing so, we first need to figure out what $S(V)$ is:
\begin{lemma}
The sphere bundle $S(V)$ is homotopy equivalent to $K(\Z, 3)$, and this homotopy equivalence identifies the bundle
map $S(V)\to B|\G|$ with the map $K(\Z, 3)\to B|\G|$ given by inclusion of the 1-form $\U$ symmetry.
\end{lemma}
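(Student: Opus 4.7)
The plan is to identify $S(V)$ as the homotopy fibre of $q\colon B|\G|\to B\U$ and then invoke the defining fibration of $B|\G|$ to recognise that homotopy fibre as $K(\Z,3)$.

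First I would recall that the sphere bundle of the tautological complex line bundle $L\to B\U = \mathbb{C}P^\infty$ is the unit sphere $S^\infty \subset \mathbb{C}^\infty$: a point of $S(L)$ is a line $\ell\subset\mathbb{C}^\infty$ together with a unit vector $v\in\ell$, which is equivalent data to just specifying $v$. Thus $S(L)$ is contractible, and $S(L)\to B\U$ is a model for the universal principal $\U$-bundle $E\U\to B\U$.

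Next I would use that $V = q^{\ast} L$, so its sphere bundle is the strict pullback $S(V) \cong B|\G|\times_{B\U} S(L)$, and the bundle projection $S(V)\to B|\G|$ is the induced projection. Since $S(L)\to B\U$ is a Serre fibration with contractible total space, this strict pullback also computes the homotopy pullback, and the homotopy pullback of $q$ along the essentially constant map $E\U \simeq *\to B\U$ is by definition the homotopy fibre of $q$. Hence there is a weak equivalence $S(V) \simeq \mathrm{hofib}(q)$, under which the bundle projection $S(V)\to B|\G|$ corresponds to the canonical map from the homotopy fibre to the base.

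Finally, the defining fibration \eqref{eq:2-group-fib}, namely $K(\Z,3)\to B|\G|\to B\U$, identifies $\mathrm{hofib}(q)$ with $K(\Z,3) = B(B\U^{[1]})$, and the canonical map from the homotopy fibre becomes the fibre inclusion in this fibration, which on the level of 2-groups is the map induced by the inclusion of the 1-form $\U$ symmetry into $\G$. Chaining the three steps proves the lemma. No deep calculation is needed; the only subtlety is checking that each equivalence respects the relevant projections, which follows formally from the universal property of the pullback combined with the fact that strict pullback along a Serre fibration agrees with homotopy pullback.
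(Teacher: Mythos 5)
Your proposal is correct and follows essentially the same route as the paper's proof: identify $S(L)\to B\U$ with the contractible universal bundle $E\U\to B\U$, conclude that $S(V)\to B|\G|$ is the homotopy fibre of $q$ via the (homotopy) pullback square, and then identify that homotopy fibre with $K(\Z,3)$ together with the fibre inclusion. The only cosmetic difference is that you cite the defining fibration \eqref{eq:2-group-fib} at the last step, whereas the paper rederives it from the short exact sequence $1\to|\U[1]|\to|\G|\to\U\to 1$ and the fact that $B$ sends short exact sequences of topological groups to fibre sequences, so as to pin down that the canonical map $\mathrm{hofib}(q)\to B|\G|$ really is the map induced by the inclusion of the 1-form symmetry.
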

\begin{proof}
Let $q\colon B|\G|\to B\U$ be the classifying map for $V$. Then the sphere bundle of $V$ is the pullback of the
sphere bundle of the tautological line bundle $L\to B\U$, {\em i.e.}\ there is a homotopy pullback square
\begin{equation}
\begin{tikzcd}
	{S(V)} & {S(L)} \\
	{B|\G|} & B\U
	\arrow["q", from=2-1, to=2-2]
	\arrow[from=1-1, to=2-1]
	\arrow[from=1-2, to=2-2]
	\arrow[from=1-1, to=1-2]
\end{tikzcd}
\end{equation}
The sphere bundle of a complex line bundle is the associated principal $\U$-bundle; for $L\to B\U$, this is the
tautological bundle $E\U\to B\U$, meaning $S(L)$ is contractible. In general, a homotopy pullback in which one of
the two legs is contractible is the homotopy fibre (denoted $\mathrm{hofib}(\text{--})$) of the other leg, meaning
the map $S(V)\to B|\G|$ can be identified with the canonical map $\mathrm{hofib}(q)\to B|\G|$. Recall that $q$
arose from a short exact sequence of topological groups
\begin{equation}
\label{group_SES}
	1\longrightarrow G\overset{i}{\longrightarrow} H\overset{p}{\longrightarrow} K\longrightarrow 1
\end{equation}
by applying the classifying space functor; specifically, $G = |\U[1]|$, $H = |\G|$, and $K = \U$, and $q = Bp$. The
classifying space functor turns short exact sequences of groups into fibre sequences of spaces, meaning that
for~\eqref{group_SES} the homotopy fibre of $Bp$ is homotopy equivalent to $BG$, and this homotopy equivalence
identifies the canonical map $\mathrm{hofib}(Bp)\to BH$ with $Bi\colon BG\to BH$. For our specific choices of $G$,
$H$, and $K$, this tells us that up to homotopy equivalence, the map $S(V)\to B|\G|$ is the map $B|\U[1]|\simeq
K(\Z, 3)\to B|\G|$ as claimed in the theorem statement.
\end{proof}
\begin{lemma}
\label{s_and_q_Smith}
$\Omega_5^\Spin(K(\Z, 3))\otimes\Z_{(2)}$ and $\Omega_3^\Spin((B|\G|)^{V-2})\otimes \Z_{(2)}$ both vanish.
\end{lemma}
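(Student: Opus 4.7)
The plan is to attack both bordism groups via the Atiyah--Hirzebruch spectral sequence (AHSS) for spin bordism, exploiting the fact that $2$-locally $MSpin \simeq ko$ in the low degrees we care about. In each case the computation reduces to tracking a small number of candidate $\Z/2$ classes on the relevant diagonal and verifying that the $d_2$ differential eliminates them; both $d_2$'s will be governed by an explicit $\mathrm{Sq}^2$ calculation.

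For $\Omega_5^\Spin(K(\Z,3)) \otimes \Z_{(2)}$, I would set up the AHSS $E^2_{p,q} = H_p(K(\Z,3); \Omega_q^\Spin(\pt))$. Using Table~\ref{tab:integral-cohomology-KZ3} and its mod $2$ reduction, the only nonzero entries on the $p+q=5$ diagonal are $E^2_{5,0} = H_5(K(\Z,3); \Z) \cong \Z/2$ and $E^2_{3,2} = H_3(K(\Z,3); \Z/2) \cong \Z/2$. The outgoing differential $d_2 : E^2_{5,0} \to E^2_{3,1}$ is, as in \S\ref{sec:QED-bordism-perspective}, realised as $\rho_2$ followed by the dual of $\mathrm{Sq}^2$, and the incoming differential $d_2 : E^2_{5,1} \to E^2_{3,2}$ is the dual of $\mathrm{Sq}^2$ itself. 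Both are isomorphisms of $\Z/2$'s because $\mathrm{Sq}^2 \tau_3$ generates $H^5(K(\Z,3); \Z/2)$, cf.\ Eq.~\eqref{eq:mod2-cohomology-ring-KZ3}. Hence both entries die on passing to $E^3$ and we conclude $\Omega_5^\Spin(K(\Z,3)) = 0$, with no localisation required.

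For $\Omega_3^\Spin((B|\G|)^{V-2}) \otimes \Z_{(2)}$, note that $V$ is complex (hence real-oriented) of real rank $2$, so the integral and mod $2$ Thom isomorphisms provide degree-preserving identifications $H_*((B|\G|)^{V-2}; A) \cong H_*(B|\G|; A)$ for $A = \Z, \Z/2$, since the virtual rank of $V-2$ is zero. Using~\eqref{eq:homZ-BG} and the mod $2$ homology computed from it, on the $p+q=3$ diagonal of the AHSS we have $E^2_{3,0} = H_3(B|\G|; \Z) \cong \Z/m$, which vanishes after $2$-localisation since $m$ is odd, and $E^2_{2,1} = H_2(B|\G|; \Z/2) \cong \Z/2$; all other entries on the diagonal vanish. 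The differential $d_2 : E^2_{2,1} \to E^2_{0,2} \cong \Z/2$ is dual to $\mathrm{Sq}^2 : H^0((B|\G|)^{V-2}; \Z/2) \to H^2((B|\G|)^{V-2}; \Z/2)$, which under the Thom isomorphism reads as the twisted operation
\[
\alpha \longmapsto \mathrm{Sq}^2 \alpha + w_2(V) \smile \alpha
\]
on $H^*(B|\G|; \Z/2)$, via the Wu-type formula $\mathrm{Sq}^2 U = w_2(V)\,U$ for the Thom class $U$. Applied to $1 \in H^0(B|\G|; \Z/2)$ this delivers $w_2(V) = c_1 \neq 0 \in H^2(B|\G|; \Z/2)$, so $d_2$ is an isomorphism of $\Z/2$'s and $E^\infty_{2,1} = 0$. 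Higher differentials into this bidegree would emanate from $E^r_{r+2, 2-r}$ with $r \geq 3$, all of which vanish (for instance $H_4(B|\G|; \Z) = 0$ for odd $m$), so no extra classes appear.

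The hard part is pinning down this twist in the Steenrod operation on the Thom spectrum: it is precisely the non-vanishing of $w_2(V) = c_1$ on $B|\G|$ that turns $d_2$ into an isomorphism and kills the would-be surviving $\Z/2$. Once the twisted Wu formula is in hand, the remaining bookkeeping --- the agreement of $MSpin_{(2)}$ with $ko_{(2)}$ in this range, and the vanishing of all higher differentials on the short diagonals of interest --- is routine.
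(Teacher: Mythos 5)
Your argument is correct, but it takes a genuinely different route from the one in Appendix~\ref{app:Arun}. The paper works with the Adams spectral sequence throughout: it identifies the $\cA(1)$-module structure of $H^\bullet(K(\Z,3);\Z/2)$ in low degrees as $\Z/2\oplus\Sigma^3\uQ$ (plus terms in degree $\ge 7$) and of $H^\bullet((B|\G|)^{V-2};\Z/2)$ as $C\eta$ (plus terms in degree $\ge 5$), and then simply observes that the known $\Ext_{\cA(1)}$ charts are empty in topological degrees $5$ and $3$ respectively, so no differentials or extensions ever need to be discussed. You instead run the Atiyah--Hirzebruch spectral sequence, which forces you to evaluate two $d_2$'s; the underlying input is the same in both approaches, namely $\Sq^2\tau_3\neq 0$ in $H^5(K(\Z,3);\Z/2)$ and the Thom-twisted relation $\Sq^2 U = w_2(V)\,U = c_1\,U\neq 0$, which in the paper's language is exactly what produces the $\uQ$ and $C\eta$ summands. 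Your diagonal bookkeeping checks out: on $p+q=5$ for $K(\Z,3)$ only $E^2_{5,0}$ and $E^2_{3,2}$ are nonzero and both die under the (reduction composed with) dual-$\Sq^2$ differentials, and on $p+q=3$ for the Thom spectrum only $E^2_{3,0}\cong\Z/m$ (killed by $2$-localisation) and $E^2_{2,1}\cong\Z/2$ (killed by the twisted $d_2$, since $q^*c_1$ survives to $H^2(B|\G|;\Z/2)$ in the Serre spectral sequence) appear. Two cosmetic remarks: for the vanishing of $E^\infty_{2,1}$ only the \emph{outgoing} differential needs to be injective, so your discussion of incoming higher differentials is superfluous; and your first computation in fact shows $\Omega_5^\Spin(K(\Z,3))=0$ integrally, slightly more than the lemma requires. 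The trade-off between the two methods is the usual one: the Adams approach front-loads the work into the $\cA(1)$-module identification and then reads off the answer from standard charts with no differentials to resolve, whereas your AHSS approach is more elementary but leans on the identification of the low-degree $ko$-AHSS differentials with (twisted) duals of $\Sq^2$.
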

\begin{proof}
We check both of these using the Adams spectral sequence; the relevant differentials and extension questions are
trivial for degree reasons, so these computations are straightforward analogues of Beaudry and Campbell's computations
in Ref.~\cite{beaudry2018guide}. We direct the reader to~\cite{beaudry2018guide} for more on our proof strategy and notation. The Adams spectral
sequence computes $2$-completed spin bordism, not $2$-localised spin bordism, but because $\Omega_5^\Spin(K(\Z,
3))$ and $\Omega_3^\Spin((B|\G|)^{V-2})$ are finitely generated abelian groups, this distinction is not important:
the $2$-localisation of a finitely generated abelian group $A$ vanishes if and only if the $2$-completion of $A$
vanishes, if and only if $A$ lacks both free and $2$-torsion summands.

For $K(\Z, 3)$, we need as input the $\cA(1)$-module structure on $H^\bullet(K(\Z, 3);\Z/2)$ in degrees $6$ and
below. This was computed by Serre~\cite[\S 10]{Serre1953aP}: $H^\bullet(K(\Z, 3);\Z/2)$ is a polynomial algebra on generators
$\tau_3$ in degree $3$, $\Sq^2\tau_3$ in degree $5$, and other generators in degrees too high to matter to us. Using this,
one can compute that if $\uQ$ denotes the $\cA(1)$-module $\cA(1)/(\Sq^1, \Sq^2\Sq^3)$, then there is an
isomorphism of $\cA(1)$-modules
\begin{equation}
\label{A1_KZ3}
	H^\bullet(K(\Z, 3);\Z/2)\cong \textcolor{BrickRed}{\Z/2} \oplus \textcolor{MidnightBlue}{\Sigma^3\uQ}
		\oplus P,
\end{equation}
where $P$ is concentrated in degrees $7$ and above, hence is irrelevant for our computation. We draw this
decomposition in Fig.~\ref{KZ3_Adams}, left. Liulevicius~\cite[Theorem 3]{Liu62} computes
$\Ext_{\cA(1)}(\textcolor{BrickRed}{\Z/2})$ and Beaudry--Campbell~\cite[Figure 29]{beaudry2018guide} compute
$\Ext_{\cA(1)}(\textcolor{MidnightBlue}{\uQ})$, so we can draw the $E_2$-page of the Adams spectral sequence in
Fig.~\ref{KZ3_Adams}, right. The $E_2$-page vanishes in topological degree $5$, {\em i.e.}\ for $t-s = 5$, so the
$E_\infty$-page must also vanish in that degree, and we conclude $\Omega_5^\Spin(K(\Z, 3))\otimes\Z_{(2)} = 0$.

\begin{figure}[h!]
\centering
\begin{subfigure}[c]{0.2\textwidth}
	\begin{tikzpicture}[scale=0.6, every node/.style = {font=\tiny}]
        \foreach \y in {0, ..., 6} {
                \node at (-2, \y) {$\y$};
        }
		\begin{scope}[BrickRed]
  			\fill (0, 0) circle (3pt);
			\node[right] at (0, 0) {$1$};
		\end{scope}
		\begin{scope}[MidnightBlue]
    		\questionupsidedon (0, 3, );
			\node[right] at (0, 3) {$\tau_3$};
		\end{scope}
	\end{tikzpicture}
\end{subfigure}
\begin{subfigure}[c]{0.4\textwidth}
\begin{sseqdata}[name=KZ3, classes = fill, scale=0.6, xrange={0}{6}, yrange={0}{4}, Adams grading,
>=stealth,
x label = {$\displaystyle{s\uparrow \atop t-s\rightarrow}$},
x label style = {font = \small, xshift = -16.5ex, yshift=5.5ex},
class labels = { left = 0.07em, font=\small }
]
\begin{scope}[BrickRed]
	\class(0, 0)\AdamsTower{}
	\class(1, 1)\structline(0, 0)(1, 1)
	\class(2, 2)\structline
	\class(4, 3)\AdamsTower{}
\end{scope}
\begin{scope}[MidnightBlue]
	\class(3, 0)\AdamsTower{}
\end{scope}
\end{sseqdata}
\printpage[name=KZ3, page=2]
\end{subfigure}
\caption{
Left: the $\cA(1)$-module structure on $H^\bullet(K(\Z, 3);\Z/2)$ in degrees $6$ and below. Right: the $E_2 =
E_\infty$-page of the Adams spectral sequence computing the $2$-completion of $\Omega_\bullet^\Spin(K(\Z, 3))$.
There is nothing in topological degree $5$, so the $2$-completion of
$\Omega_5^\Spin(K(\Z, 3))$ vanishes.}
\label{KZ3_Adams}
\end{figure}

For $(B|\G|)^{V-2}$, using the description of the $\cA(1)$-module structure on $H^\bullet(B|\G|;\Z/2)$ in low degrees
from \S\ref{app:mod-2-cohomology} and the way Stiefel--Whitney classes twist the Steenrod squares of a Thom spectrum
(see~\cite[\S 3.3]{beaudry2018guide}), we obtain an isomorphism of $\cA(1)$-modules
\begin{equation}
	H^\bullet((B|\G|)^{V-2};\Z/2)\cong C\eta \oplus P',
\end{equation}
where $C\eta\coloneqq \Sigma^{-2}\widetilde H^\bullet(\mathbb{CP}^2;\Z/2)$ and $P'$ is concentrated in degrees $5$ and
above, hence is irrelevant to our computation. We draw this isomorphism in Fig.~\ref{GThom_Adams}, left.
\begin{figure}[h!]
\centering
\begin{subfigure}[c]{0.2\textwidth}
	\begin{tikzpicture}[scale=0.6, every node/.style = {font=\tiny}]
        \foreach \y in {0, ..., 4} {
                \node at (-2, \y) {$\y$};
        }
  		\fill (0, 0) circle (3pt);
  		\fill (0, 2) circle (3pt);
		\sqtwoR(0, 0, );
		\node[right] at (0, 0) {$U$};
	\end{tikzpicture}
\end{subfigure}
\begin{subfigure}[c]{0.4\textwidth}
\begin{sseqdata}[name=twistG, classes = fill, scale=0.6, xrange={0}{4}, yrange={0}{3}, Adams grading,
>=stealth,
x label = {$\displaystyle{s\uparrow \atop t-s\rightarrow}$},
x label style = {font = \small, xshift = -13ex, yshift=5.5ex},
class labels = { left = 0.07em, font=\small }
]
\class(0, 0)\AdamsTower{}
\class(2, 1)\AdamsTower{}
\class(4, 2)\AdamsTower{}
\end{sseqdata}
\printpage[name=twistG, page=2]
\end{subfigure}
\caption{
Left: the $\cA(1)$-module structure on $H^\bullet((B|\G|)^{V-2};\Z/2)$ in degrees $4$ and below, where $V$ is the complex
line bundle associated to the map $q\colon B|\G|\to B\U$ given by quotienting $\G$ by its 1-form symmetry. Right:
the $E_2 = E_\infty$-page of the Adams spectral sequence computing the $2$-completion of
$\Omega_\bullet^\Spin((B|\G|)^{V-2})$. There is nothing in topological degree $3$, so the $2$-completion of
$\Omega_3^\Spin((B|\G|)^{V-2})$ vanishes.}
\label{GThom_Adams}
\end{figure}
$\Ext_{\cA(1)}(C\eta)$ is computed in~\cite[Example 4.5.6 and Figure 22]{beaudry2018guide}, so we can draw the $E_2$-page of
the Adams spectral sequence in Fig.~\ref{GThom_Adams}, right. The $E_2$-page is empty in topological degree $3$, so the
$E_\infty$-page is also empty, and we conclude $\Omega_3^\Spin((B|\G|)^{V-2})\otimes \Z_{(2)} = 0$.
\end{proof}


\bibliography{references} \bibliographystyle{JHEP}

\providecommand{\href}[2]{#2}\begingroup\raggedright\begin{thebibliography}{10}

\bibitem{Freed:2012bs}
D.~S. Freed and C.~Teleman, \emph{{Relative quantum field theory}},
  \href{https://doi.org/10.1007/s00220-013-1880-1}{\emph{Commun. Math. Phys.}
  {\bfseries 326} (2014) 459--476},
  [\href{https://arxiv.org/abs/1212.1692}{{\ttfamily 1212.1692}}].

\bibitem{Freed:2014iua}
D.~S. Freed, \emph{{Anomalies and Invertible Field Theories}},
  \href{https://doi.org/10.1090/pspum/088/01462}{\emph{Proc. Symp. Pure Math.}
  {\bfseries 88} (2014) 25--46},
  [\href{https://arxiv.org/abs/1404.7224}{{\ttfamily 1404.7224}}].

\bibitem{Freed:2016rqq}
D.~S. Freed and M.~J. Hopkins, \emph{{Reflection positivity and invertible
  topological phases}},  \href{https://arxiv.org/abs/1604.06527}{{\ttfamily
  1604.06527}}.

\bibitem{Witten:1985xe}
E.~Witten, \emph{{Global Gravitational Anomalies}},
  \href{https://doi.org/10.1007/BF01212448}{\emph{Commun. Math. Phys.}
  {\bfseries 100} (1985) 197}.

\bibitem{Dai:1994kq}
X.-z. Dai and D.~S. Freed, \emph{{eta invariants and determinant lines}},
  \href{https://doi.org/10.1063/1.530747}{\emph{J. Math. Phys.} {\bfseries 35}
  (1994) 5155--5194}, [\href{https://arxiv.org/abs/hep-th/9405012}{{\ttfamily
  hep-th/9405012}}].

\bibitem{Kapustin:2014dxa}
A.~Kapustin, R.~Thorngren, A.~Turzillo and Z.~Wang, \emph{{Fermionic Symmetry
  Protected Topological Phases and Cobordisms}},
  \href{https://doi.org/10.1007/JHEP12(2015)052}{\emph{JHEP} {\bfseries 12}
  (2015) 052}, [\href{https://arxiv.org/abs/1406.7329}{{\ttfamily 1406.7329}}].

\bibitem{Witten:2015aba}
E.~Witten, \emph{{Fermion Path Integrals and Topological Phases}},
  \href{https://doi.org/10.1103/RevModPhys.88.035001,
  10.1103/RevModPhys.88.35001}{\emph{Rev. Mod. Phys.} {\bfseries 88} (2016)
  035001}, [\href{https://arxiv.org/abs/1508.04715}{{\ttfamily 1508.04715}}].

\bibitem{Witten:2019bou}
E.~Witten and K.~Yonekura, \emph{{Anomaly Inflow and the $\eta$-Invariant}},
  \href{https://arxiv.org/abs/1909.08775}{{\ttfamily 1909.08775}}.

\bibitem{Tachikawa:2018njr}
Y.~Tachikawa and K.~Yonekura, \emph{{Why are fractional charges of orientifolds
  compatible with Dirac quantization?}},
  \href{https://doi.org/10.21468/SciPostPhys.7.5.058}{\emph{SciPost Phys.}
  {\bfseries 7} (2019) 058},
  [\href{https://arxiv.org/abs/1805.02772}{{\ttfamily 1805.02772}}].

\bibitem{Witten:2016cio}
E.~Witten, \emph{{The ''Parity'' Anomaly On An Unorientable Manifold}},
  \href{https://doi.org/10.1103/PhysRevB.94.195150}{\emph{Phys. Rev. B}
  {\bfseries 94} (2016) 195150},
  [\href{https://arxiv.org/abs/1605.02391}{{\ttfamily 1605.02391}}].

\bibitem{Wang:2018qoy}
J.~Wang, X.-G. Wen and E.~Witten, \emph{{A New SU(2) Anomaly}},
  \href{https://doi.org/10.1063/1.5082852}{\emph{J. Math. Phys.} {\bfseries 60}
  (2019) 052301}, [\href{https://arxiv.org/abs/1810.00844}{{\ttfamily
  1810.00844}}].

\bibitem{Atiyah:1975jf}
M.~Atiyah, V.~Patodi and I.~Singer, \emph{{Spectral asymmetry and Riemannian
  Geometry 1}}, \href{https://doi.org/10.1017/S0305004100049410}{\emph{Math.
  Proc. Cambridge Phil. Soc.} {\bfseries 77} (1975) 43}.

\bibitem{Atiyah:1976jg}
M.~Atiyah, V.~Patodi and I.~Singer, \emph{{Spectral asymmetry and Riemannian
  geometry 2}}, \href{https://doi.org/10.1017/S0305004100051872}{\emph{Math.
  Proc. Cambridge Phil. Soc.} {\bfseries 78} (1976) 405}.

\bibitem{Atiyah:1976qjr}
M.~F. Atiyah, V.~K. Patodi and I.~M. Singer, \emph{{Spectral asymmetry and
  Riemannian geometry. III}},
  \href{https://doi.org/10.1017/S0305004100052105}{\emph{Math. Proc. Cambridge
  Phil. Soc.} {\bfseries 79} (1976) 71--99}.

\bibitem{Freed:2006mx}
D.~S. Freed, \emph{{Pions and Generalized Cohomology}}, {\emph{J. Diff. Geom.}
  {\bfseries 80} (2008) 45--77},
  [\href{https://arxiv.org/abs/hep-th/0607134}{{\ttfamily hep-th/0607134}}].

\bibitem{Garcia-Etxebarria:2017crf}
I.~n. Garc\'\i{}a-Etxebarria, H.~Hayashi, K.~Ohmori, Y.~Tachikawa and
  K.~Yonekura, \emph{{8d gauge anomalies and the topological Green-Schwarz
  mechanism}}, \href{https://doi.org/10.1007/JHEP11(2017)177}{\emph{JHEP}
  {\bfseries 11} (2017) 177},
  [\href{https://arxiv.org/abs/1710.04218}{{\ttfamily 1710.04218}}].

\bibitem{Garcia-Etxebarria:2018ajm}
I.~García-Etxebarria and M.~Montero, \emph{{Dai-Freed anomalies in particle
  physics}}, \href{https://doi.org/10.1007/JHEP08(2019)003}{\emph{JHEP}
  {\bfseries 08} (2019) 003},
  [\href{https://arxiv.org/abs/1808.00009}{{\ttfamily 1808.00009}}].

\bibitem{Wan:2018bns}
Z.~Wan and J.~Wang, \emph{{Higher anomalies, higher symmetries, and cobordisms
  I: classification of higher-symmetry-protected topological states and their
  boundary fermionic/bosonic anomalies via a generalized cobordism theory}},
  \href{https://doi.org/10.4310/AMSA.2019.v4.n2.a2}{\emph{Ann. Math. Sci.
  Appl.} {\bfseries 4} (2019) 107--311},
  [\href{https://arxiv.org/abs/1812.11967}{{\ttfamily 1812.11967}}].

\bibitem{Seiberg:2018ntt}
N.~Seiberg, Y.~Tachikawa and K.~Yonekura, \emph{{Anomalies of Duality Groups
  and Extended Conformal Manifolds}},
  \href{https://doi.org/10.1093/ptep/pty069}{\emph{PTEP} {\bfseries 2018}
  (2018) 073B04}, [\href{https://arxiv.org/abs/1803.07366}{{\ttfamily
  1803.07366}}].

\bibitem{Hsieh:2019iba}
C.-T. Hsieh, Y.~Tachikawa and K.~Yonekura, \emph{{Anomaly of the
  Electromagnetic Duality of Maxwell Theory}},
  \href{https://doi.org/10.1103/PhysRevLett.123.161601}{\emph{Phys. Rev. Lett.}
  {\bfseries 123} (2019) 161601},
  [\href{https://arxiv.org/abs/1905.08943}{{\ttfamily 1905.08943}}].

\bibitem{Davighi:2019rcd}
J.~Davighi, B.~Gripaios and N.~Lohitsiri, \emph{{Global anomalies in the
  Standard Model(s) and Beyond}},
  \href{https://doi.org/10.1007/JHEP07(2020)232}{\emph{JHEP} {\bfseries 07}
  (2020) 232}, [\href{https://arxiv.org/abs/1910.11277}{{\ttfamily
  1910.11277}}].

\bibitem{Wan:2019gqr}
Z.~Wan and J.~Wang, \emph{{Beyond Standard Models and Grand Unifications:
  Anomalies, Topological Terms, and Dynamical Constraints via Cobordisms}},
  \href{https://doi.org/10.1007/JHEP07(2020)062}{\emph{JHEP} {\bfseries 07}
  (2020) 062}, [\href{https://arxiv.org/abs/1910.14668}{{\ttfamily
  1910.14668}}].

\bibitem{Davighi:2020uab}
J.~Davighi and N.~Lohitsiri, \emph{{The algebra of anomaly interplay}},
  \href{https://doi.org/10.21468/SciPostPhys.10.3.074}{\emph{SciPost Phys.}
  {\bfseries 10} (2021) 074},
  [\href{https://arxiv.org/abs/2011.10102}{{\ttfamily 2011.10102}}].

\bibitem{Lee:2020ewl}
Y.~Lee and Y.~Tachikawa, \emph{{Some comments on $6d$ global gauge anomalies}},
   \href{https://arxiv.org/abs/2012.11622}{{\ttfamily 2012.11622}}.

\bibitem{Davighi:2020kok}
J.~Davighi and N.~Lohitsiri, \emph{{Omega vs. pi, and 6d anomaly
  cancellation}},  \href{https://arxiv.org/abs/2012.11693}{{\ttfamily
  2012.11693}}.

\bibitem{Debray:2021vob}
A.~Debray, M.~Dierigl, J.~J. Heckman and M.~Montero, \emph{{The anomaly that
  was not meant IIB}},  \href{https://arxiv.org/abs/2107.14227}{{\ttfamily
  2107.14227}}.

\bibitem{Davighi:2022fer}
J.~Davighi and J.~Tooby-Smith, \emph{{Electroweak flavour unification}},
  \href{https://doi.org/10.1007/JHEP09(2022)193}{\emph{JHEP} {\bfseries 09}
  (2022) 193}, [\href{https://arxiv.org/abs/2201.07245}{{\ttfamily
  2201.07245}}].

\bibitem{Davighi:2022bqf}
J.~Davighi, G.~Isidori and M.~Pesut, \emph{{Electroweak-flavour and
  quark-lepton unification: a family non-universal path}},
  \href{https://arxiv.org/abs/2212.06163}{{\ttfamily 2212.06163}}.

\bibitem{Wang:2022eag}
J.~Wang, Z.~Wan and Y.-Z. You, \emph{{Proton stability: From the standard model
  to beyond grand unification}},
  \href{https://doi.org/10.1103/PhysRevD.106.025016}{\emph{Phys. Rev. D}
  {\bfseries 106} (2022) 025016},
  [\href{https://arxiv.org/abs/2204.08393}{{\ttfamily 2204.08393}}].

\bibitem{Lee:2022spd}
Y.~Lee and K.~Yonekura, \emph{{Global anomalies in 8d supergravity}},
  \href{https://doi.org/10.1007/JHEP07(2022)125}{\emph{JHEP} {\bfseries 07}
  (2022) 125}, [\href{https://arxiv.org/abs/2203.12631}{{\ttfamily
  2203.12631}}].

\bibitem{Davighi:2022icj}
J.~Davighi, B.~Gripaios and N.~Lohitsiri, \emph{{Anomalies of non-Abelian
  finite groups via cobordism}},
  \href{https://doi.org/10.1007/JHEP09(2022)147}{\emph{JHEP} {\bfseries 09}
  (2022) 147}, [\href{https://arxiv.org/abs/2207.10700}{{\ttfamily
  2207.10700}}].

\bibitem{Debray:2023yrs}
A.~Debray, M.~Dierigl, J.~J. Heckman and M.~Montero, \emph{{The Chronicles of
  IIBordia: Dualities, Bordisms, and the Swampland}},
  \href{https://arxiv.org/abs/2302.00007}{{\ttfamily 2302.00007}}.

\bibitem{Cordova:2022ruw}
C.~Cordova, T.~T. Dumitrescu, K.~Intriligator and S.-H. Shao, \emph{{Snowmass
  White Paper: Generalized Symmetries in Quantum Field Theory and Beyond}},  in
  \emph{{2022 Snowmass Summer Study}}, 5, 2022,
  \href{https://arxiv.org/abs/2205.09545}{{\ttfamily 2205.09545}}.

\bibitem{Bah:2022wot}
I.~Bah, D.~S. Freed, G.~W. Moore, N.~Nekrasov, S.~S. Razamat and
  S.~Schafer-Nameki, \emph{{A Panorama Of Physical Mathematics c. 2022}},
  \href{https://arxiv.org/abs/2211.04467}{{\ttfamily 2211.04467}}.

\bibitem{Gaiotto:2014kfa}
D.~Gaiotto, A.~Kapustin, N.~Seiberg and B.~Willett, \emph{{Generalized Global
  Symmetries}}, \href{https://doi.org/10.1007/JHEP02(2015)172}{\emph{JHEP}
  {\bfseries 02} (2015) 172},
  [\href{https://arxiv.org/abs/1412.5148}{{\ttfamily 1412.5148}}].

\bibitem{Gaiotto:2017yup}
D.~Gaiotto, A.~Kapustin, Z.~Komargodski and N.~Seiberg, \emph{{Theta, Time
  Reversal, and Temperature}},
  \href{https://doi.org/10.1007/JHEP05(2017)091}{\emph{JHEP} {\bfseries 05}
  (2017) 091}, [\href{https://arxiv.org/abs/1703.00501}{{\ttfamily
  1703.00501}}].

\bibitem{Gripaios:2022yjy}
B.~Gripaios, O.~Randal-Williams and J.~Tooby-Smith, \emph{{Generalized
  symmetries of topological field theories}},
  \href{https://arxiv.org/abs/2209.13524}{{\ttfamily 2209.13524}}.

\bibitem{Hsin:2020nts}
P.-S. Hsin and H.~T. Lam, \emph{{Discrete theta angles, symmetries and
  anomalies}},
  \href{https://doi.org/10.21468/SciPostPhys.10.2.032}{\emph{SciPost Phys.}
  {\bfseries 10} (2021) 032},
  [\href{https://arxiv.org/abs/2007.05915}{{\ttfamily 2007.05915}}].

\bibitem{Lee:2021crt}
Y.~Lee, K.~Ohmori and Y.~Tachikawa, \emph{{Matching higher symmetries across
  Intriligator-Seiberg duality}},
  \href{https://doi.org/10.1007/JHEP10(2021)114}{\emph{JHEP} {\bfseries 10}
  (2021) 114}, [\href{https://arxiv.org/abs/2108.05369}{{\ttfamily
  2108.05369}}].

\bibitem{Baez:2005sn}
J.~C. Baez, D.~Stevenson, A.~S. Crans and U.~Schreiber, \emph{{From loop groups
  to 2-groups}},  \href{https://arxiv.org/abs/math/0504123}{{\ttfamily
  math/0504123}}.

\bibitem{Sati:2009ic}
H.~Sati, U.~Schreiber and J.~Stasheff, \emph{{Differential twisted String and
  Fivebrane structures}},
  \href{https://doi.org/10.1007/s00220-012-1510-3}{\emph{Commun. Math. Phys.}
  {\bfseries 315} (2012) 169--213},
  [\href{https://arxiv.org/abs/0910.4001}{{\ttfamily 0910.4001}}].

\bibitem{Fiorenza:2010mh}
D.~Fiorenza, U.~Schreiber and J.~Stasheff, \emph{{\v{C}ech cocycles for
  differential characteristic classes: an $\infty$-Lie theoretic
  construction}}, \href{https://doi.org/10.4310/ATMP.2012.v16.n1.a5}{\emph{Adv.
  Theor. Math. Phys.} {\bfseries 16} (2012) 149--250},
  [\href{https://arxiv.org/abs/1011.4735}{{\ttfamily 1011.4735}}].

\bibitem{Fiorenza:2012tb}
D.~Fiorenza, H.~Sati and U.~Schreiber, \emph{{Multiple M5-branes, String
  2-connections, and 7d nonabelian Chern-Simons theory}},
  \href{https://doi.org/10.4310/ATMP.2014.v18.n2.a1}{\emph{Adv. Theor. Math.
  Phys.} {\bfseries 18} (2014) 229--321},
  [\href{https://arxiv.org/abs/1201.5277}{{\ttfamily 1201.5277}}].

\bibitem{Sharpe:2015mja}
E.~Sharpe, \emph{{Notes on generalized global symmetries in QFT}},
  \href{https://doi.org/10.1002/prop.201500048}{\emph{Fortsch. Phys.}
  {\bfseries 63} (2015) 659--682},
  [\href{https://arxiv.org/abs/1508.04770}{{\ttfamily 1508.04770}}].

\bibitem{Cordova:2018cvg}
C.~C\'ordova, T.~T. Dumitrescu and K.~Intriligator, \emph{{Exploring 2-Group
  Global Symmetries}},
  \href{https://doi.org/10.1007/JHEP02(2019)184}{\emph{JHEP} {\bfseries 02}
  (2019) 184}, [\href{https://arxiv.org/abs/1802.04790}{{\ttfamily
  1802.04790}}].

\bibitem{Benini:2018reh}
F.~Benini, C.~C\'ordova and P.-S. Hsin, \emph{{On 2-Group Global Symmetries and
  their Anomalies}}, \href{https://doi.org/10.1007/JHEP03(2019)118}{\emph{JHEP}
  {\bfseries 03} (2019) 118},
  [\href{https://arxiv.org/abs/1803.09336}{{\ttfamily 1803.09336}}].

\bibitem{Bhardwaj:2021wif}
L.~Bhardwaj, \emph{{2-Group Symmetries in Class S}},
  \href{https://arxiv.org/abs/2107.06816}{{\ttfamily 2107.06816}}.

\bibitem{Cvetic:2022imb}
M.~Cveti\v{c}, J.~J. Heckman, M.~H\"ubner and E.~Torres, \emph{{0-form, 1-form,
  and 2-group symmetries via cutting and gluing of orbifolds}},
  \href{https://doi.org/10.1103/PhysRevD.106.106003}{\emph{Phys. Rev. D}
  {\bfseries 106} (2022) 106003},
  [\href{https://arxiv.org/abs/2203.10102}{{\ttfamily 2203.10102}}].

\bibitem{Carta:2022fxc}
F.~Carta, S.~Giacomelli, N.~Mekareeya and A.~Mininno, \emph{{A tale of
  2-groups: D$_p$(USp(2N)) theories}},
  \href{https://arxiv.org/abs/2208.11130}{{\ttfamily 2208.11130}}.

\bibitem{Apruzzi:2021vcu}
F.~Apruzzi, S.~Schafer-Nameki, L.~Bhardwaj and J.~Oh, \emph{{The Global Form of
  Flavor Symmetries and 2-Group Symmetries in 5d SCFTs}},
  \href{https://doi.org/10.21468/SciPostPhys.13.2.024}{\emph{SciPost Phys.}
  {\bfseries 13} (2022) 024},
  [\href{https://arxiv.org/abs/2105.08724}{{\ttfamily 2105.08724}}].

\bibitem{DelZotto:2022fnw}
M.~Del~Zotto, J.~J. Heckman, S.~N. Meynet, R.~Moscrop and H.~Y. Zhang,
  \emph{{Higher symmetries of 5D orbifold SCFTs}},
  \href{https://doi.org/10.1103/PhysRevD.106.046010}{\emph{Phys. Rev. D}
  {\bfseries 106} (2022) 046010},
  [\href{https://arxiv.org/abs/2201.08372}{{\ttfamily 2201.08372}}].

\bibitem{DelZotto:2022joo}
M.~Del~Zotto, I.~n. Garc\'\i{}a~Etxebarria and S.~Schafer-Nameki,
  \emph{{2-Group Symmetries and M-Theory}},
  \href{https://doi.org/10.21468/SciPostPhys.13.5.105}{\emph{SciPost Phys.}
  {\bfseries 13} (2022) 105},
  [\href{https://arxiv.org/abs/2203.10097}{{\ttfamily 2203.10097}}].

\bibitem{Cordova:2020tij}
C.~Cordova, T.~T. Dumitrescu and K.~Intriligator, \emph{{2-Group Global
  Symmetries and Anomalies in Six-Dimensional Quantum Field Theories}},
  \href{https://arxiv.org/abs/2009.00138}{{\ttfamily 2009.00138}}.

\bibitem{DelZotto:2020sop}
M.~Del~Zotto and K.~Ohmori, \emph{{2-Group Symmetries of 6D Little String
  Theories and T-Duality}},
  \href{https://doi.org/10.1007/s00023-021-01018-3}{\emph{Annales Henri
  Poincare} {\bfseries 22} (2021) 2451--2474},
  [\href{https://arxiv.org/abs/2009.03489}{{\ttfamily 2009.03489}}].

\bibitem{Apruzzi:2021mlh}
F.~Apruzzi, L.~Bhardwaj, D.~S.~W. Gould and S.~Schafer-Nameki, \emph{{2-Group
  symmetries and their classification in 6d}},
  \href{https://doi.org/10.21468/SciPostPhys.12.3.098}{\emph{SciPost Phys.}
  {\bfseries 12} (2022) 098},
  [\href{https://arxiv.org/abs/2110.14647}{{\ttfamily 2110.14647}}].

\bibitem{Cordova:2022qtz}
C.~Cordova and S.~Koren, \emph{{Higher Flavor Symmetries in the Standard
  Model}},  \href{https://arxiv.org/abs/2212.13193}{{\ttfamily 2212.13193}}.

\bibitem{Damia:2022rxw}
J.~A. Damia, R.~Argurio and L.~Tizzano, \emph{{Continuous Generalized
  Symmetries in Three Dimensions}},
  \href{https://arxiv.org/abs/2206.14093}{{\ttfamily 2206.14093}}.

\bibitem{Iqbal:2020lrt}
N.~Iqbal and N.~Poovuttikul, \emph{{2-group global symmetries, hydrodynamics
  and holography}},  \href{https://arxiv.org/abs/2010.00320}{{\ttfamily
  2010.00320}}.

\bibitem{bartels2006higher}
T.~K. Bartels, \emph{Higher gauge theory: 2-bundles}.
\newblock University of California, Riverside, 2006.

\bibitem{Baez2008}
J.~C. Baez and D.~Stevenson, \emph{{The Classifying Space of a Topological
  2-Group}},  \href{https://arxiv.org/abs/0801.3843}{{\ttfamily 0801.3843}}.

\bibitem{brylinski2007loop}
J.-L. Brylinski, \emph{Loop spaces, characteristic classes and geometric
  quantization}.
\newblock Springer Science \& Business Media, 2007.

\bibitem{Wang-Senthil:2016a}
C.~Wang and T.~Senthil, \emph{Time-reversal symmetric $u(1)$ quantum spin
  liquids}, \href{https://doi.org/10.1103/PhysRevX.6.011034}{\emph{Phys. Rev.
  X} {\bfseries 6} (Mar, 2016) 011034}.

\bibitem{Hsin:2019fhf}
P.-S. Hsin and A.~Turzillo, \emph{{Symmetry-enriched quantum spin liquids in (3
  + 1)$d$}}, \href{https://doi.org/10.1007/JHEP09(2020)022}{\emph{JHEP}
  {\bfseries 09} (2020) 022},
  [\href{https://arxiv.org/abs/1904.11550}{{\ttfamily 1904.11550}}].

\bibitem{Brennan:2022tyl}
T.~D. Brennan, C.~Cordova and T.~T. Dumitrescu, \emph{{Line Defect Quantum
  Numbers \& Anomalies}},  \href{https://arxiv.org/abs/2206.15401}{{\ttfamily
  2206.15401}}.

\bibitem{Jian:2020qab}
C.-M. Jian, X.-C. Wu, Y.~Xu and C.~Xu, \emph{{Physics of symmetry protected
  topological phases involving higher symmetries and its applications}},
  \href{https://doi.org/10.1103/PhysRevB.103.064426}{\emph{Phys. Rev. B}
  {\bfseries 103} (2021) 064426},
  [\href{https://arxiv.org/abs/2009.00023}{{\ttfamily 2009.00023}}].

\bibitem{Kapustin:2013uxa}
A.~Kapustin and R.~Thorngren, \emph{{Higher symmetry and gapped phases of gauge
  theories}},  \href{https://arxiv.org/abs/1309.4721}{{\ttfamily 1309.4721}}.

\bibitem{Brown-K-S:1982}
K.~S. Brown, \emph{Cohomology of Groups}.
\newblock Springer, 1982.

\bibitem{Segal1968}
G.~Segal, \emph{{Classifying spaces and spectral sequences}},
  \href{https://doi.org/10.1007/BF02684591}{\emph{Publications
  Math{\'{e}}matiques de l'Institut des Hautes Scientifiques} {\bfseries 34}
  (1968) 105--112}.

\bibitem{schommer2011central}
C.~J. Schommer-Pries, \emph{Central extensions of smooth 2--groups and a
  finite-dimensional string 2--group}, {\emph{Geometry \& Topology} {\bfseries
  15} (2011) 609--676}.

\bibitem{Serre1951HomologieSD}
J.-P. Serre, \emph{Homologie singuliere des espaces fibres}, {\emph{Annals of
  Mathematics} {\bfseries 54} (1951) 425}.

\bibitem{Adams:1958}
J.~F. Adams, \emph{On the structure and applications of the steenrod algebra},
  \href{https://doi.org/10.1007/BF02564578}{\emph{Commentarii Mathematici
  Helvetici} {\bfseries 32} (1958) 180--214}.

\bibitem{AHSS1961:ProcSymp61}
M.~F. Atiyah and F.~Hirzebruch, \emph{Vector bundles and homogeneous spaces},
  in \emph{Proceedings of Symposia in Pure Mathematics} (C.~B. Allendoerfer,
  ed.), vol.~3, pp.~7--38, American Mathematical Society, 1961.

\bibitem{Thorngren:2014pza}
R.~Thorngren, \emph{{Framed Wilson Operators, Fermionic Strings, and
  Gravitational Anomaly in 4d}},
  \href{https://doi.org/10.1007/JHEP02(2015)152}{\emph{JHEP} {\bfseries 02}
  (2015) 152}, [\href{https://arxiv.org/abs/1404.4385}{{\ttfamily 1404.4385}}].

\bibitem{Ang:2019txy}
J.~Ang, K.~Roumpedakis and S.~Seifnashri, \emph{{Line Operators of Gauge
  Theories on Non-Spin Manifolds}},
  \href{https://doi.org/10.1007/JHEP04(2020)087}{\emph{JHEP} {\bfseries 04}
  (2020) 087}, [\href{https://arxiv.org/abs/1911.00589}{{\ttfamily
  1911.00589}}].

\bibitem{Freed:2004yc}
D.~S. Freed and G.~W. Moore, \emph{{Setting the quantum integrand of
  M-theory}}, \href{https://doi.org/10.1007/s00220-005-1482-7}{\emph{Commun.
  Math. Phys.} {\bfseries 263} (2006) 89--132},
  [\href{https://arxiv.org/abs/hep-th/0409135}{{\ttfamily hep-th/0409135}}].

\bibitem{Kravec:2014aza}
S.~M. Kravec, J.~McGreevy and B.~Swingle, \emph{{All-fermion electrodynamics
  and fermion number anomaly inflow}},
  \href{https://doi.org/10.1103/PhysRevD.92.085024}{\emph{Phys. Rev. D}
  {\bfseries 92} (2015) 085024},
  [\href{https://arxiv.org/abs/1409.8339}{{\ttfamily 1409.8339}}].

\bibitem{Wu1950}
W.-t. Wu, \emph{Classes caract{\'e}ristiques et {{\(i\)}}-carr{\'e}s d'une
  vari{\'e}t{\'e}}, {\emph{C. R. Acad. Sci., Paris} {\bfseries 230} (1950)
  508--511}.

\bibitem{Davighi:2020bvi}
J.~Davighi and N.~Lohitsiri, \emph{{Anomaly interplay in $U(2)$ gauge
  theories}}, \href{https://doi.org/10.1007/JHEP05(2020)098}{\emph{JHEP}
  {\bfseries 05} (2020) 098},
  [\href{https://arxiv.org/abs/2001.07731}{{\ttfamily 2001.07731}}].

\bibitem{Grigoletto:2021zyv}
A.~Grigoletto and P.~Putrov, \emph{{Spin-cobordisms, surgeries and fermionic
  modular bootstrap}},  \href{https://arxiv.org/abs/2106.16247}{{\ttfamily
  2106.16247}}.

\bibitem{Grigoletto:2021oho}
A.~Grigoletto, \emph{{Anomalies of fermionic CFTs via cobordism and
  bootstrap}},  \href{https://arxiv.org/abs/2112.01485}{{\ttfamily
  2112.01485}}.

\bibitem{Elitzur:1984kr}
S.~Elitzur and V.~Nair, \emph{{Nonperturbative Anomalies in Higher
  Dimensions}}, \href{https://doi.org/10.1016/0550-3213(84)90024-5}{\emph{Nucl.
  Phys. B} {\bfseries 243} (1984) 205}.

\bibitem{WITTEN1983422}
E.~Witten, \emph{Global aspects of current algebra},
  \href{https://doi.org/https://doi.org/10.1016/0550-3213(83)90063-9}{\emph{Nuclear
  Physics B} {\bfseries 223} (1983) 422--432}.

\bibitem{teichner1993signature}
P.~Teichner, \emph{On the signature of four-manifolds with universal covering
  spin}, {\emph{Mathematische Annalen} {\bfseries 295} (1993) 745--759}.

\bibitem{DebrayNEW}
A.~Debray, S.~Devalapurkar, C.~Krulewski, Y.~L. Liu, N.~Pacheco-Tallaj and
  R.~Thorngren, \emph{{Long exact sequence in symmetry breaking: order
  parameter constraints, defect anomaly matching, and higher Berry phase}},
  {\emph{{To appear}} }.

\bibitem{Delmastro:2021xox}
D.~Delmastro, D.~Gaiotto and J.~Gomis, \emph{{Global anomalies on the Hilbert
  space}}, \href{https://doi.org/10.1007/JHEP11(2021)142}{\emph{JHEP}
  {\bfseries 11} (2021) 142},
  [\href{https://arxiv.org/abs/2101.02218}{{\ttfamily 2101.02218}}].

\bibitem{bahri1987eta}
A.~Bahri and P.~Gilkey, \emph{The eta invariant, pinc bordism, and equivariant
  spinc bordism for cyclic 2-groups}, {\emph{Pacific journal of mathematics}
  {\bfseries 128} (1987) 1--24}.

\bibitem{Seiberg:2016rsg}
N.~Seiberg and E.~Witten, \emph{{Gapped Boundary Phases of Topological
  Insulators via Weak Coupling}},
  \href{https://doi.org/10.1093/ptep/ptw083}{\emph{PTEP} {\bfseries 2016}
  (2016) 12C101}, [\href{https://arxiv.org/abs/1602.04251}{{\ttfamily
  1602.04251}}].

\bibitem{Bhardwaj:2022scy}
L.~Bhardwaj and D.~S.~W. Gould, \emph{{Disconnected 0-Form and 2-Group
  Symmetries}},  \href{https://arxiv.org/abs/2206.01287}{{\ttfamily
  2206.01287}}.

\bibitem{Bhardwaj:2022dyt}
L.~Bhardwaj, M.~Bullimore, A.~E.~V. Ferrari and S.~Schafer-Nameki,
  \emph{{Anomalies of Generalized Symmetries from Solitonic Defects}},
  \href{https://arxiv.org/abs/2205.15330}{{\ttfamily 2205.15330}}.

\bibitem{Serre1953aP}
J.-P. Serre, \emph{Cohomologie modulo 2 des complexes d'eilenberg-maclane},
  \href{https://doi.org/10.1007/BF02564562}{\emph{Commentarii Mathematici
  Helvetici} {\bfseries 27} (1953) 198--232}.

\bibitem{breen2014derived}
L.~Breen, R.~Mikhailov and A.~Touzé, \emph{Derived functors of the divided
  power functors},  \href{https://arxiv.org/abs/1312.5676}{{\ttfamily
  1312.5676}}.

\bibitem{McCleary2000aB}
J.~McCleary, \emph{A User's Guide to Spectral Sequences}.
\newblock Cambridge Studies in Advanced Mathematics. Cambridge University
  Press, 2~ed., 2000,
  \href{https://doi.org/10.1017/CBO9780511626289}{10.1017/CBO9780511626289}.

\bibitem{10.2307/1970136}
C.~T.~C. Wall, \emph{Determination of the cobordism ring}, {\emph{Annals of
  Mathematics} {\bfseries 72} (1960) 292--311}.

\bibitem{Cordova:2019wpi}
C.~C\'ordova, K.~Ohmori, S.-H. Shao and F.~Yan, \emph{{Decorated
  $\mathbb{Z}_{2}$ symmetry defects and their time-reversal anomalies}},
  \href{https://doi.org/10.1103/PhysRevD.102.045019}{\emph{Phys. Rev. D}
  {\bfseries 102} (2020) 045019},
  [\href{https://arxiv.org/abs/1910.14046}{{\ttfamily 1910.14046}}].

\bibitem{Hason:2020yqf}
I.~Hason, Z.~Komargodski and R.~Thorngren, \emph{{Anomaly Matching in the
  Symmetry Broken Phase: Domain Walls, CPT, and the Smith Isomorphism}},
  \href{https://doi.org/10.21468/SciPostPhys.8.4.062}{\emph{SciPost Phys.}
  {\bfseries 8} (2020) 062},
  [\href{https://arxiv.org/abs/1910.14039}{{\ttfamily 1910.14039}}].

\bibitem{beaudry2018guide}
A.~Beaudry and J.~A. Campbell, \emph{A guide for computing stable homotopy
  groups}, {\emph{Top. Quant. Theor. Interact} {\bfseries 718} (2018) 89}.

\bibitem{Liu62}
A.~L. Liulevicius, \emph{The cohomology of a subalgebra of the {S}teenrod
  algebra}, \href{https://doi.org/10.2307/1993789}{\emph{Trans. Amer. Math.
  Soc.} {\bfseries 104} (1962) 443--449}.

\end{thebibliography}\endgroup

\end{document}